\newtheorem{definition}{Definition}[section]
\newtheorem{theorem}{Theorem}[section]
\newtheorem{remark}{Remark}[section]
\newtheorem{example}{Example}[section]
\newtheorem{proposition}{Proposition}[section]
\newtheorem{corollary}{Corollary}
\def\@email#1#2{%
 \endgroup
 \patchcmd{\titleblock@produce}
  {\frontmatter@RRAPformat}
  {\frontmatter@RRAPformat{\produce@RRAP{*#1\href{mailto:#2}{#2}}}\frontmatter@RRAPformat}
  {}{}
}%
\begin{document}

\preprint{AIP/123-QED}

\title{The generalized method of separation of variables for diffusion-influenced reactions: Irreducible Cartesian tensors technique} 


\author{Sergey D. Traytak}
\email{sergtray@mail.ru}
\affiliation{Semenov Federal Research Center for Chemical Physics, Russian Academy of Sciences, 4 Kosygina St., 119991 Moscow, Russian Federation}


\date{\today}

\begin{abstract}
Motivated through various applications of the trapping diffusion-influenced reactions theory in physics, chemistry and biology, this paper deals with irreducible Cartesian tensors (ICT) technique within the scope of the generalized method of separation of variables (GMSV). Presenting a survey from the basic concepts of the theory, we spotlight the distinctive features of the above approach against known in literature similar techniques. The classical solution to the stationary diffusion equation under appropriate boundary conditions is represented as a series in the ICT. By means of proved translation addition theorem we straightforwardly reduce the general boundary value diffusion problem for $N$ spherical sinks to the corresponding resolving infinite system of linear algebraic equations with respect to unknown tensor coefficients. These coefficients comprise explicit dependence on the arbitrary three-dimensional configurations of $N$ sinks with different radii and surface reactivities. Specific application of the ICT technique is illustrated by some numerical calculations for the reactions, occurring in the arrays of two and three spherical sinks. For the first time we treat the statement and solution to the above reaction-diffusion problem, applying the smooth manifold concept.
\end{abstract}

\maketitle


%
%
%


%
%
\section{Motivation}\label{sec:motivation}

It is presently accepted that for the most part, chemical reactions in micro-heterogeneous liquid media are {\it contact diffusion-influenced}~\cite{Eyring80,Rice85,Ovchinnikov89,Barzykin01,Dagdug24}. The latter means that the rate of these reactions is determined significantly by the rate of encounter of reactants due to diffusion; that is, the reaction rate is not controlled by only the chemical requirement of overcoming an activation energy barrier. 

Due to their high abundance, diffusion-influenced reactions play a decisive role for a wide and diversified range of applications often occurring in physics, chemistry, biology and nanotechnology. Examples include excitation quenching of donors by acceptors, heterogeneous catalysis, crystal defect annealing, crystal growth, Ostwald ripening, evaporating or burning of liquid droplets, nutrient consumption by living cells and cell metabolism to name just a few.

To describe microphysics of diffusion-influenced reactions theoretically often so-called {\it Smoluchowski's trapping model} is used, i.e., assuming that particles diffuse in a continuum medium containing immobile uniformly distributed absorbing sinks. Smoluchowski's theory is essentially a one-sink theory, which does not account for the influence of neighboring sinks. Quite apparently the presence of other neighboring absorbing sinks decreases the local concentration of diffusing particles, and so each sink captures less particles than it would a single isolated. This, obviously, implies that many-sink effects should be incorporated into theoretical analysis. According to accepted terminology, many-sink effects are called the diffusive interaction (for an extended discussion see Sec.~\ref{sec:DiffInter}). Corresponding mathematical problems are well-defined, however, the theoretical description of the diffusive interaction is mostly so complicated that solving of such problems analytically at least approximately is very difficult if it is even possible. 

We shall investigate here micro-heterogeneous medium in which spherical obstacles including absorbing sinks, of different radii are distributed in a second distinct continuous phase. The dispersed obstacles assumed to be non-overlapping and impenetrable.

It has been shown explicitly that the diffusive interaction effects become the most profound in the steady state~\cite{Traytak95,Traytak08} and it is this regime that will be studied in this work. A number of different analytical approaches proposed earlier to solve the posed diffusion problems for the steady state were reviewed in Ref.~\cite{Traytak03}. The most complete list of the commonly used analytical and numerical methods to solve the posed problems along with some comments was given later in Refs.~\cite{Traytak18,Grebenkov19}.

Here, we consider a modification of the GMSV that goes back to Rayleigh's seminal paper on the conductivity of heat and electricity in a medium comprising cylindrical or spherical sink systems~\cite{Rayleigh1892} (see Sec.~\ref{sec:method}). Although the GMSV by means of ICT seems to be fairly convenient at least for numerical calculations, however it still did not draw due proper attention to researchers. The work extends our previous studies via the GMSV~\cite{Traytak91,Traytak92,Traytak03,Traytak18,Grebenkov19} to the general case of the steady-state diffusion-influenced processes. We focus on the development a detailed theory of the diffusive interaction effects (see Sec.~\ref{sec:DiffInter}) by means of the ICT technique within the framework of microscopic Smoluchowski's trapping model. It turns out that  solution in terms of the ICT elucidates rather subtle mathematical facets of the multipole method and, what is more important, it significantly simplifies the calculation of the diffusion field and corresponding reaction rates. 

This research contains all the pertinent mathematical details including terminology, definitions, geometrical structure, convergence and step by step description of the GMSV algorithm with the ICT technique to solve the general diffusion boundary value problem within the scope of Smoluchowski's trapping model.

Therefore, the main purpose of this research is twofold. First, we intend to pose rigorously and in full details the boundary value problem for the most general case of boundary conditions, describing the small particles diffusion and absorption in a medium comprising both static inert obstacles and reactive sinks. Secondly, providing the overall view onto the GMSV, we solve the posed problem by means of the ICT technique, keeping in mind its practical aspects. Besides, since reviews about the problem are lacking in the existing literature, so it is our intention to provide a rather comprehensive overview of the current research activities in the field. Immediately, note here that, it seems expedient to give brief literature surveys on the subjects under consideration in the appropriate places of the text.

Thus, we hope that, first of all, this paper will repay the attention of mathematically inclined theoreticians working in a wide range of the applied scientific fields dealing with many-body effects that occur in diffusion-influenced reactions and other related topics.

\section{Introduction}\label{sec:intro}

To provide a self-contained presentation, we start with a brief survey of the microscopic Smoluchowski trapping theory of the diffusion-influenced reactions. Without going into details, we particular emphasis on the main mathematical ideas in order to generalize easily its results to the most general case studied in this paper.

The standard notation: $\mathbb{N}$, $ \mathbb{Z} $, $\mathbb{R}$ and $\mathbb{R}_+$ the sets of natural, integer, real and strictly positive real numbers will be used in the sequel. For the three-dimensional (3D) Euclidean space we use common $ {\mathbb R}^3 $. 
As is customary, let $ \partial \Omega $ denotes the closed boundary of a 3D domain (connected open subset) $ \Omega \subset {\mathbb{R}}^3 $ such that its closure is $ \overline{\Omega}= \Omega \cup \partial \Omega $. 

We shall study real-valued functions $f: \Omega \to \mathbb{R}_{+} $. Let $m \in \mathbb{N}\cup \lbrace 0 \rbrace $ (particularly $ C(\Omega):= C^0(\Omega)$) then $C^m (\Omega)$ stands for the class of $m$ times continuously differentiable functions on  $\Omega $ and $ C^m(\overline{\Omega}) $ denotes a space of functions having $ m $ continuous derivatives in $ \Omega \subset \mathbb{R}^3 $ and continuously extendable to the closure $ \overline{\Omega} $ of $ \Omega $. A function of class $ C^{\infty} (\Omega) $ is an infinitely differentiable functions. In the following, the term smooth function  on  $\Omega $ refers to a
$ C^{\infty} (\Omega) $-function.

Let $ \mathbf a =
({a}_{1},{a}_{2},{a}_{3}) \in \mathbb{R}^3 $ be a fixed translation vector (bold symbols are used for vectors or tensors in $ {\mathbb R}^3 $). For any point $ \mathbf x\in \mathbb{R}^3 $ define the translation operator (and its inverse one) $T_{\mathbf a} $ ($ T^{-1}_{\mathbf a}$)$: \mathbb{R}^3 \to \mathbb{R}^3$ by the relations
\begin{equation}
	T_{\mathbf a}\mathbf x = \mathbf x + \mathbf a \,, \quad T^{-1}_{\mathbf a}\mathbf x = \mathbf x - \mathbf a \,. \label{adg1c}
\end{equation}
It is well-known that translations form a subgroup of the 3D Euclidean motion group  $E(3)$ such that: corresponding composition operator  $T_{\mathbf c} = T_{\mathbf a} T_{\mathbf b}$ with translation vector $\mathbf c = \mathbf a + \mathbf b$; and identity operator $I:=T_{\mathbf 0} = T^{-1}_{\mathbf a} T_{\mathbf a}$~\cite{Miller77}.

\subsection{A model description}

To study diffusion-influenced reactions theoretically so-called {\it trapping model} is often used~\cite{Weiss94}. It has focused on studying diffusion of very small (point-like) reactants $B$ in heterogeneous media, in which the disperse phase behaves as a collection of immobile obstacles immersed in an inert continuous medium commonly known as {\it host medium}. This is the case, e.g., when obstacles diffuse much more slowly than reactants (they are of much larger size). Provided arbitrary but finite number of obstacles (sinks) are regularly or randomly distributed within a given region of surrounding medium, for this system of obstacles terms "ensemble" or "array" are usually used.

Standard treatments of the theory of diffusion-influenced reactions generally assume that the host medium is unbounded (open system~\cite{Beenakker85}), quiescent, isotropic and homogeneous, therefore mathematically it is well modelled by the 3D Euclidean space $ {\mathbb R}^3 $ (called {\it physical space} below). In turn reactants assumed to be identical non-interacting the point-like {\it Brownian particles} (from now on called $ B $-{\it particles} or $B$-{\it reactants}) diffuse among 3D non-overlapping immobile spherical obstacles embedded in some domains from the physical space.
In considering microscopic kinetics of diffusion-influenced reactions, it is necessary to distinguish between obstacles. Each obstacle is characterized by its geometrical and physicochemical parameters, which drastically influence on the reaction kinetics. 
Any particle of kind $A$ refers to the absorbing obstacle are usually called {\it trap} or {\it sink }~\cite{Weiss86}.
Further, for definiteness sake, we use the term "obstacles" to refer to the fully reflecting ones only, otherwise we shall use the most common term "sinks" solely. Despite its relative simplicity, the trapping problem is fundamental for the understanding of kinetics of diffusion-influenced reactions~\cite{Rice85}.

In this paper we deal with irreversible bulk contact diffusion-influenced reactions described by the simplest reaction
scheme~\cite{Rice85,Ovchinnikov89}
\begin{align}
	A+B\stackrel{k}{\longrightarrow }  A \cdot B \stackrel{k_{in}}{\longrightarrow} A + Products\, . \label{Zv00}
\end{align}
Here $ A\cdot B $ is so-called {\it encounter pair};  $k $ and $k_{in}$ stand for the {\it reaction rate constant} and  the {\it intrinsic reaction rate constant} for the reaction of the encounter pair $ A\cdot B $ to form inert products, respectively~\cite{Rice85}. Clearly, the rate $k_{in}$ describes surface reactivity of sinks $A$~\cite{Torquato02}, which we assume to be spherically symmetric.

Within the scope of scheme (\ref{Zv00}) the macroscopic action mass law equations for the bulk concentrations of reactants $ c_A\left(t \right)$ and $c_B \left(t \right)$ are
\begin{align}
	t_r {\dot c}_B\left( t \right) = - c_B \left( t \right) \,, \quad {\dot c}_A\left( t \right) =0 \,; \label{Int2}
\end{align}
where $ t_r := \left(k c_A \right)^{-1} $ is the characteristic time scale of the macroscopic kinetics. 

And, finally, it should be emphasized that one can assume infinite capacity approximation for sinks $A$ provided the initial concentration of $B$-particles appreciably exceeds that of sinks, i.e., when condition $ c_A\left(0 \right) \ll c_B \left(0 \right) $ holds~\cite{Ovchinnikov89}. To put it otherwise, reactions (\ref{Zv00}) are the catalytic type~\cite{Oshanin05}, when activity of the catalyst $ A $ remains unchanged. In this connection, reactions (\ref{Zv00}) are called sometimes catalytic degradation reactions.

\subsection{A microscopic Smoluchowski's theory}

The absorbing rate coefficient of $B$-particles on a given sink surface is of primary importance for the microscopic theory of  diffusion-influenced reactions (\ref{Zv00}). In his pioneering work of 1917 Smoluchowski, basing on Fick's laws of diffusion, proposed a rather simple calculation of the rate coefficient to describe absorption of the $B$-particles by the surface of a given fully absorbing spherical sink often referred to as a {\it test sink}~\cite{Brailsford76,Rice85}. 

Let us first treat the simplest case of the steady-state diffusion among randomly distributed static spherical sinks $A$ of the same radii $R$. So, for the {\it sink volume fraction} $\phi_A $ in dilute enough sink ensembles the following condition holds $ \phi_A:=  4 \pi c_A R^3/3 \ll 1$~\cite{Doktorov98}.
In other words diffusive interaction accepted to be weak and, therefore, it is commonly assumed that sinks mostly interact with the unbounded host medium~\cite{Voorhees01}.

Smoluchowski's theory corresponds to the limiting case of infinite dilution ($ \phi_A \to 0  $)~\cite{Venema89} and, therefore, only one-sink system should be treated. 

Let $ \Omega_1 \subset {\mathbb{R}}^3 $ be the test sink domain and consider its compliment
$\Omega_1^- := \mathbb{R}^3 \backslash \overline{\Omega}_1 $.  It is expedient to use the standard polar spherical coordinates  $ \lbrace O; r, \theta, \phi \rbrace $ ($ \theta $, $ \phi $ are the polar and azimuthal angles, respectively) coordinate systems attached to the origin $ O ({\mathbf 0}) $ coinciding with the test sink center. Consider  a current point $P \in \Omega_1^- $  with position vector $ \mathbf{r} $ such that  $r = \Vert {\mathbf{r}} \Vert  $ its radial distance, where $ \Vert \cdot \Vert $ stands for the common Euclidean norm. Plainly, for the spherical sink domain and relevant reaction surface  we have: $ \Omega_1:= \left\{ r < R \right\}$ and  $ \partial \Omega_1 = \left\{ r = R \right\}$. 
\begin{definition}\label{defin1}
	For structureless point-like $ B $ particles the {\it configuration space} is a 3D domain composed of all $ B $ particle positions $ \mathbf{r} $.
\end{definition}
Since diffusion of $B$ particles occur in the exterior of the sink, domain $\Omega_1^-$ is the configuration space 
and clearly $B$ particles local concentration is spherically-symmetric $ n_1 \left(r\right)$. Often function $ n_1 \left(r\right)$ is scaled with its bulk concentration $ c_B $ so that $ n_1 \left(r\right)/c_B $ treated~\cite{Rice85,McCammon13}.  
However, we are going to use another function namely {\it complementary normalized local concentration}, which seems to be more appropriate to describe subsequently diffusive interaction effects
\begin{align}
	u_1: \Omega_1^- \to \left(0, 1\right)\,, \quad \mbox{where} \quad	u_1\left( r\right) = 1- \frac{n_1\left( r \right)}{c_B}	\,.  \label{Zv02a}	
\end{align}
For brevity we will still call this function local concentration, if there is no confusion to be appeared.

in three dimensions there are 13 coordinate systems in which Laplace's equation is separable

According to Smoluchowski's theory, desired concentration $ u_1\left( r\right) $ is governed by the classical diffusion equation under appropriate boundary conditions. Thus, relevant exterior boundary value diffusion problem for a partially reflecting test sink reads~\cite{Rice85}
\begin{align}
	\nabla^2_r u_1 = 0\,  \quad \mbox{in} \quad \Omega_1^- \,,
	\label{Zv03a}\\
	\left.\left [ 4 \pi R^2 \, D\partial_{r}u_1  - {k}_{in}\left (1 - u_1 \right ) \right ] \right| _{r=R} = 0 \,, \label{Zv03b}\\
	\left. u_1 \right|_{r \rightarrow \infty } \rightarrow 0\, ,  \label{Zv03c}	
\end{align}
where $D$ is the translation diffusion coefficient and $\nabla_r^2 : = 2{r^{-1}}\partial_r  + {\partial_r}^2  $ is the radial part of Laplacian. For brevity, henceforth $ \partial_{\varsigma} $ stands for the partial derivative $ \partial/\partial {\varsigma} $ with respect to the independent variable $ \varsigma $. Note in passing that in physical literature condition (\ref{Zv03c}) is commonly referred to as the {\it outer boundary condition}~\cite{Karplus87,McCammon13}.

It has been well-known that solution to this problem is~\cite{Rice85}
\begin{align}
	u_1\left( r\right) = 1- \frac{R}{r}J_0^1\, .	  \label{Zv03d}	
\end{align}
In this formula we introdused dimensionless rate $J_0^1 := k_{CK}\slash k_S =\kappa_1\slash \left(1 + \kappa_1\right) < 1$,   
where $ \kappa_1 := k_{in}/k_S $. Hereinafter $ J_0^1 $ is the Collins-Kimball rate constant $ k_{CK}\left( k_{in} \right) $ normalized by the Smoluchowski rate constant  $ k_S $ given by
\begin{align}
	k_S = \lim_{k_{in} \to \infty}k_{CK} = 4 \pi R D = \Phi_1 \left( R \right)\slash c_B \,, \label{Zv01}
\end{align}
where  $ \Phi_1\left( R \right) $ is the total flux of $B$-particles on the test sink of radius $R$~\cite{Rice85}.

We face up to another important aspect. Lee and Karplus highlighted the fact that in terms of non-normalized concentration $ n_1 \left(r\right)$ the outer boundary condition is time dependent~\cite{Karplus87}, i.e.
\begin{align}
	\left. n_1 \right|_{r \rightarrow \infty } \rightarrow c_B \left( t \right)\, . \label{Zv01a}
\end{align} 
To clarify the matter we note that the steady state reaction regime occurs under the following condition $t_D \ll t \lesssim t_r $~\cite{Traytak13}, where $ t_D = R^2/3D $ being the {\it characteristic encounter time}~\cite{Doktorov19} and the frequency of reaction (\ref{Int2}) takes the form  $ t_r^{-1}= 4\pi DR J_1 c_A  $. 

Strictly speaking, one should consider the above boundary value problem (\ref{Zv03a})-(\ref{Zv03c}) as a quasi-steady state one~\cite{Beenakker85}. That is consistent with the fact that the characteristic time scale of the change in $ c_B \left( t \right) $ (see Eq. (\ref{Int2})) is $ t_r \gg t_D = t_r J_1 \phi_A $. 	
Really, one can see that "fast" (microscopic) time evolution relaxation towards a quasi-steady-state occurs for times $ t = {\cal{O}}(t_D)$ while changing of bulk concentration $ c_B \left( t \right) $ takes place at "slow" (macroscopic) times  $ t = {\cal{O}}(t_r)$. 

Concluding the section note that we can treat recalled here one-sink case as a reference problem for further generalization to the case of $N$ sinks when diffusive interaction cannot be neglected.

\section{Mathematical methods to study diffusive interaction}\label{sec:DiffInter}

Subsequently, one-sink diffusion theory was generalized in many aspects. This section focuses on previous theoretical works developing above theory to account for the diffusive interaction between sinks. 

Smoluchowski's theory, in view of Eq. (\ref{Zv01}), leads to the total flux of $B$'s into the whole ensemble of $N$ identical uncharged sinks as the additive sum $\Phi_N =  N \Phi_1  \left( R \right)$~\cite{Richards86}. Nevertheless, over the years it has been realized that this theory faces challenges under attempts to understand various reaction-diffusion phenomena in systems comprising many sinks with small sink-sink separations. Indeed, each sink, due to chemical reaction (\ref{Zv00}) occurring upon its surface, exerts influence on the $ B $'s concentration field around other  sinks. Simply speaking an influence arises between the sinks as a result of the fact that any sink "feels" the self-consistent diffusion field of $B$-particles, determined by the entire array of sinks. Obviously this fact is caused by the law of conservation of  $B$-particles in a host medium.

It must be recognized that, up to now, there has been no established terminology concerning above effects. The reason is that since these effects have wide applications in various fields, various technical languages, including definitions and notations were used. Really, note that a variety of terminology is used in the literature even to define such kind interaction. Besides the following terms are used: (1) competition~\cite{Frisch52,Ham58,Deutch76,Strieder04,Green06,McCammon13,Biello15}; (2)  diffusive (diffusional) interaction~\cite{Carstens70,Beenakker85,Dubinko89,Traytak91,Traytak92,Marsh95,Voorhees01,Piazza19,Grossier22}; (3) effect of multiparticle interaction~\cite{Labowsky78,Marberry84,Zinchenko94}; (4) Laplace's interaction~\cite{Mo96}; (5) concentration effects~\cite{Lee14}; (6) collective effects~\cite{Carrier16,Stark18, Masoud21};	(7) reactive interference~\cite{Lee20}; (8) shielding effects~\cite{Schofield20}; (9) cooperative effects~\cite{Wang22} and even (10) chemical interaction~\cite{Stark18}.
That is why throughout the paper we shall clarify some of relevant terms which are used in the literature.

Regarding above-mentioned note that the use of the term "collective effects" seems an unfortunate one because of its frequent utilization in particle accelerators physics. At the same time we believe that the among pointed terms, term "Laplace interaction" (or, generally speaking,  "Laplacian transport"~\cite{Zagrebnov11}) best describes the essence of the interaction under consideration for the steady-state diffusion, heat transfer and Stokes hydrodynamics~\cite{Kim91}. Nonetheless, for definiteness sake we, as before, use only term {\it diffusive interaction}, taking in mind the fact, that obtained results may be applied to the another counterparts~\cite{Piazza19}.

As far as we know  Frisch and Collins pioneered in studying "competition among sinks for the diffusing molecules" to the diffusion-influence processes, particularly growth of aerosol particles by condensation~\cite{Frisch52}. Among early approaches it is also worth mentioning Ham's theory based on the Wigner-Seitz cell model~\cite{Ham58}. Applying a variational procedure Reck and Prager had first established rigorous upper and lower bounds on the rate of a diffusion-controlled reactions~\cite{Reck65}. Detailed exposition of this approach with applications can be found in the book Ref.~\cite{Torquato02}.

Later, during many years, a serious effort has been mounted toward calculation of the microscopic reaction rates, taking into account multisink effects. The monopole and dipole approximations for the steady state diffusion and reactions of $B$-particles in dense ensembles of fully absorbing spherical sinks has been extensively investigated by many authors~\cite{Borzilov71,Weins73,Brailsford76,Deutch76,Felderhof76,Marqusee84,Rice85,Felderhof86,Traytak89,Marsh95,Strobe96,Mandyam98,Tsao01,Biello15,Stark18}. However, it was noted long ago: "... the interactions between sinks for the competitive consumption of the solutes are not exactly accounted for in these approximations"~\cite{Zheng89}. Using the moment scattering, Bonnecaze and Brady derived analytically the reaction rate for cubic arrays of sinks up to the quadrupole level~\cite{Brady91}. Incidentally, they have emphasized there: "To accurately compute the effective reaction rate at high volume fractions, higher order many-body multipole interactions are required."   

Most attention has been concentrated on an important particular case of two sinks. While the above illustration relied on an example with two spherical sinks that could alternatively be solved analytically by using bispherical coordinates~\cite{Rice85,Green06} (see also Sec.~\ref{sec:examples}).  

For $ N>2 $ the problem was solved by the images and reflections methods analytically~\cite{Deutch76,Labowsky78}. It is noteworthy that the connection between the method of images and the method of reflections was discovered in Ref.~\cite{Traytak03}. An important point is that within the framework of the classical  method of images, the corresponding compensating solutions are based on point charge potentials inside sinks and, thereby, allow to obviate the need for the use of addition theorems for solid harmonics~\cite{Traytak03}.

Concerning theoretical methods to attack the multi-particle diffusion problem Ratke and Voorhees (Ref.~\cite{Voorhees02}) noted: "We shall develop a solution to the diffusion problem that is consistent with the particles having a spherical morphology. This can be done in a self-consistent manner using multiple scattering theory~\cite{Marqusee84}, irreducible Cartesian tensor~\cite{Beenakker86} or using boundary integrals and multipole expansions~\cite{Voorhees94}.". On this head we note that multiple scattering theory~\cite{Marqusee84} based on the microscopic monopole approximation and  multipole expansions method was performed in spherical coordinates~\cite{Voorhees94}. In its turn the irreducible Cartesian tensor approach~\cite{Beenakker86} is nothing more but a scalar version of well-known induced force method (see Sec.~\ref{sec:discussion}).

Later on influence of many neighboring sinks on diffusion-controlled reactions was theoretically investigated in Ref.~\cite{McCammon13}. Particular focus has been given there to the fact that: "... in most analytical works, only two-sphere cases were mainly considered in the calculations of physical quantities such as rate constant. This is because of difficulty in solving differential equations with more than three spheres (or with many local boundary conditions)." On this basis authors, "to avoid the mathematical difficulty encountered in analytical approach", have been solved the corresponding boundary value problem for the 3D diffusion equation by means of the finite element method~\cite{McCammon13}. However, it turns out that the diffusive interaction is harder to describe by this method especially in the case of fully reflecting obstacles. 

We emphasize that interest to the study of the diffusive interaction was rekindled last decades due to problems on crowded systems~\cite{Vazquez10,Lee20} and active transport. More recently, there has been a growing interest to theoretical and experimental investigations on the diffusive interaction effects of neighboring droplets during their evaporation on a surface (substrate). Particularly in recent  experiments, there was evidence that  "...neglecting the diffusive interactions can lead to severe inaccuracies in the measurement of droplet concentration..."~\cite{Grossier22}. This circumstance becomes really important bearing in mind the fact that "droplet evaporation on surfaces is ubiquitous in nature and plays a key role in a wide range of industrial and scientific applications..."~\cite{Grossier22}. Note in passing that quasi-steady-state diffusion to an assembly of slowly growing truncated sphere on a substrate may be also treated with the help of spherical multipoles~\cite{Bobbert87}.

At the same time we especially stress that the diffusive interaction effects  under consideration one should not be confused with so-called  {\it diffusion interaction parameter} widely studied experimentally in recent years (see Ref.~\cite{Betzel22} and the references therein). The latter effects are the connected with influence of concentration effects to the diffusion coefficient, which significantly is manifested in highly concentrated protein solutions.

Early works on the diffusive interaction were comprehensively reviewed in Ref.~\cite{Calef83}, whereas the current status of research on this subject one can find in the recent survey~\cite{Piazza19}.

Thus, accurate theoretical description of the problems involving diffusive interaction in multi-particle ensembles of  spherical obstacles and sinks is a long-standing challenge due to their many-body nature.

\section{Statement of the general problem}\label{sec:state}

As we pointed out above the diffusive interaction is the most manifested in the steady state regime. Accordingly, when studying the diffusive interaction this circumstance allows us to simplify significantly the mathematical problem, ignoring the time-dependent effects. 

A rigorous formulation of problems on the microscopic theory of diffusion-influenced reactions comprises: (a) specifying the geometry of the configuration space; (b) account of the reactants properties and the host medium involved; (c) using adequate diffusion system of the continuity equations and constitutive relation; (d) prescribing appropriate boundary conditions. It is important to note from the outset that by solution to the diffusion-reaction problem, we mean solely its classical solution.

Basicly our mathematical statement of the problem is similar to that given in Ref.~\cite{McCammon13} (attention is drawn to the fact that sinks and diffusing particles denoted there as $ B $ and $A$, respectively). 

The diffusive interaction substantially depends on a given configuration of reactive and inert boundaries~\cite{Piazza19}, so first of all we specify the geometrical part of the problem.

\subsection{Domains definition}

Consider a collection of arbitrary but finitely many $ N \geq 1 $ microscopic spherical sinks of different radii $R_i$, immersed within an unbounded host fluid medium ${\mathbb{R}}^3 $. Hereafter the sinks are labelled sequentially by the index $ i = \overline{1, N} $, where symbol $\overline{l, m}$ means that all integers (including $\lbrace \infty \rbrace $) from $ l $ to $ m $ are taken their values successively, i.e.
$\overline{l, m}:=l, l+1, \ldots, m$, where $ l (< m \leq \infty ), m \in \mathbb{Z} $.

Geometrically we treat the $ i $-th sink as a 3D body occupying a spherical domain (open ball) $ {\Omega}_i \subset {\mathbb{R}}^3 $.  Denote by $ \overline{\Omega}_i $ the corresponding closed bounded spherical domain, supposing that all $\overline{\Omega }_i$ are non-intersecting or non-touching, i.e. $	
\overline{\Omega }_i\cap \overline{\Omega }_j= \varnothing $  for $ i(\neq j)=\overline{1,N} $. So the total domain occupying by array of $N$ sinks is $\Omega^+ := \bigcup_{i=1}^N {\Omega}_i$.
Clearly this domain has multiply connected boundary $\partial \Omega^+ = \bigcup\limits_{i=1}^N {\partial\Omega }_i$, where corresponding boundary $\partial \Omega _i$ ($i$-th sink reaction surface~\cite{Rice85,Ovchinnikov89}) is the $ i $-th {\it connected component} of the total boundary $\partial \Omega^+ $.

On the other hand, it is apparent that the diffusion problem should be posed on the whole 3D physical space $ \mathbb{R}^3 $ excepting the total domain occupying by sinks, i.e., in sink-free region~\cite{Torquato02}. Hence the required domain outside all sinks is the {\it exterior domain} with respect to sinks: $ \Omega^- := \mathbb{R}^3 \backslash \overline{\Omega}^+ \subset \mathbb{R}^3 $ and the whole physical space $ \mathbb{R}^3 $ can be naturally partitioned into two complementary subdomains  $ \Omega ^+ $ and $ \Omega ^- $.

Plainly, $ \Omega^- $ is the set of accessible configurations for $B$'s so, according to above definition~\ref{defin1}, we term this domain as a {\it configuration space}. It is evident that diffusive interaction is highly affected by geometry of the configuration space $ \Omega^- $.
\begin{example}
	For instance consider two sinks labeled  $i$ and $ j $ with radii $ R_i $ and $ R_j $, as depicted in Fig.~\ref{fig1}. Here we placed a particle $B$ is into some current point $ P $ of the configuration space  $\Omega^- := \mathbb{R}^3 \backslash \left( \overline{\Omega}_i \cup\overline{\Omega}_j \right) $.
\end{example}
\begin{figure}[h]
	\centering
	\includegraphics[width=0.67\textwidth]{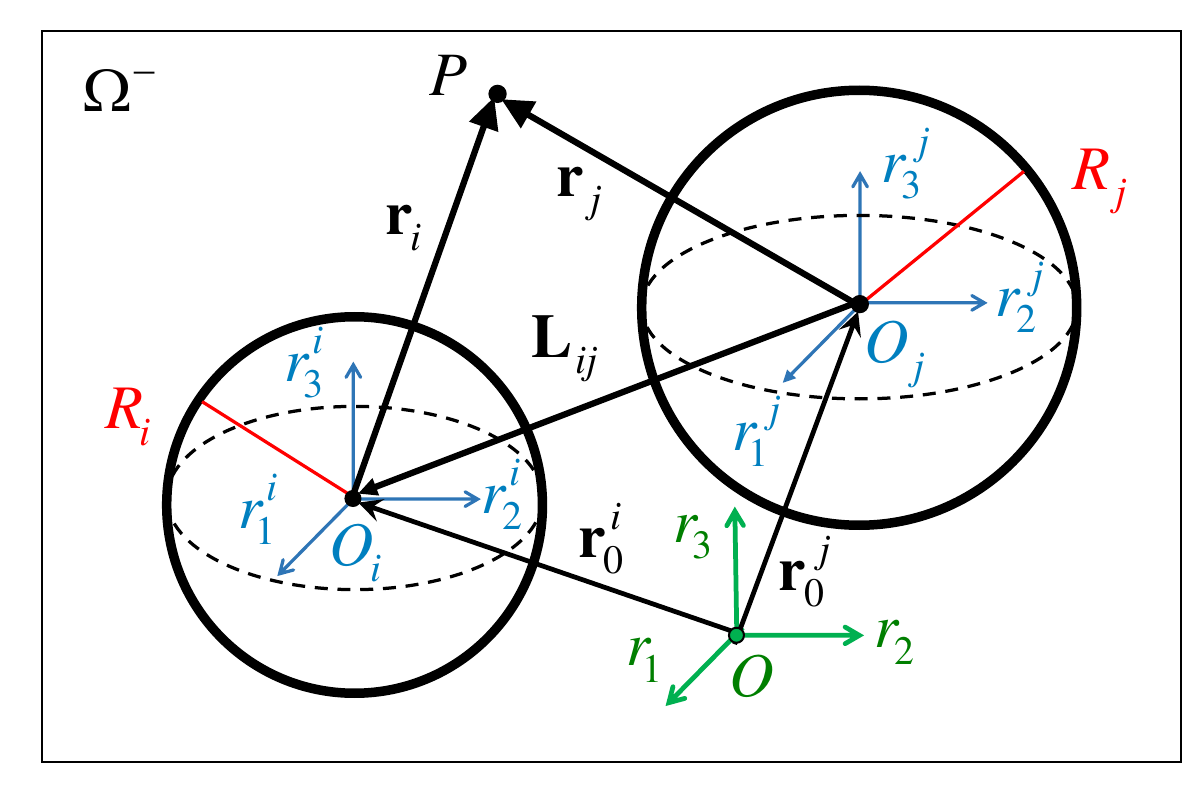}
	\caption{Definition sketch for two-sink array. The configuration space $ \Omega^- \subset \mathbb{R}^{3} $ corresponds to the given microstructure $X^{(2)}$ with a $B$ particle located at point $P \in \Omega^-$. Cartesian coordinate systems: global $ \lbrace O; {\mathbf{r}}\rbrace $ (green); local in $i$-th sink $ \lbrace O_{i}; {{\mathbf r}_{i}} \rbrace $ and in $j$-th sink $ \lbrace O_{j}; {\mathbf r}_{j} \rbrace $ (blue); sink radii $ R_{i} $ and $ R_{j} $ (red)}\label{fig1}
\end{figure}
Let $ \mathbf{r}:= \overrightarrow{OP} $ be a vector that represents the position of a current point $ P $ in $ \Omega^- $ with respect to some fixed point $ O $. Introduce a global Cartesian coordinate system, taking point $ O $ as the origin: $ \lbrace O; \mathbf{r} \rbrace $ and attach the standard orthonormal Cartesian basis $  \lbrace {\mathbf e}_{\alpha} \rbrace_{\alpha = 1}^{3} $ to this origin. The relaited components of the position vector $ {\mathbf r} $ we denote by $ r_\alpha $ ($\alpha=1, 2, 3 $) or in ordered triple form by $ \left(r_1, r_2, r_3 \right) $. 

Using the global coordinates, we designate by $\mathbf{r}^i_0 $ the position of the center of $ i $-th sink and by $ \mathbf{L}_{ij} := \overrightarrow{O_j O_i} = \mathbf{r}^i_0  - \mathbf{r}^j_0  $ the separation vector between the two sinks centers, pointing from $ j $-th sink toward $ i $-th sink (see Fig.~\ref{fig1}).

Clearly, any $ i $-th sink spherical domain is:
$\Omega_i= \lbrace  \mathbf{r} \in {\mathbb{R}}^3: \Vert \mathbf{r} - \mathbf{r}^i_0 \Vert < R_i \rbrace$ with the boundary $\partial \Omega _i = \lbrace  \mathbf{r} \in {\mathbb{R}}^3: \Vert \mathbf{r} - \mathbf{r}^i_0 \Vert = R_i \rbrace$. 

We shall consider diffusion of $B$'s in the configuration manifold $ \Omega ^{-} $ formed by an arbitrary but finite ensemble of $ N $ spherical sinks with fixed radii $ \left\{R_i\right\}_{i=1}^N $ centered at positions $ \left\{{\mathbf r}^i_0 \right\}_{i=1}^N $ immersed in the 3D host medium. Hence the geometry of this manifold is completely determined by $ N $-{\it sink configuration}~\cite{Torquato02} or  briefly {\it microstructure}~\cite{Brown16}:
\begin{align}
	X^{(N)}:=\left\{ \left( {\mathbf r}^i_0,  R_i \right) \right\}_{i=1}^N \,. \label{Mist1}
\end{align} 

In Ref.~\cite{Traytak03} we have shown that the 3D smooth manifold is a relevant mathematical concept to describe diffusion-influenced reactions in domains with multiply connected boundaries. 
Bearing in mind further development of the GMSV to sinks of different types of canonical domains (see below definition~\ref{definCan}) we consider here a general formalism based on the manifold conception~\cite{Traytak03}. We shall show here that the manifold conception arises simply for the above microstructure (\ref{Mist1}).
Nevertheless, in our view, the much more surprising thing is that researchers in various applied sciences still have no use in boundary value problems similar to treated here this natural and convenient language.  Especially since, paraphrasing Sawyer~\cite{Sawyer71}, one may say, "We have all worked with manifolds but, like Moli\`ere's character who had been speaking prose all his life, we have not been aware of it."

\subsection{Manifold structure on $\Omega^{-}$}\label{subsec:addtheorDef}

Let us introduce the manifold structure on the configuration space $\Omega ^{-} $. First let us introduce open complements of the sink domains $ \Omega_i $ in the physical space  $ {\mathbb R}^3 $, which play an important role in our study: $\Omega ^{-}_i={\mathbb R}^3\backslash \overline{\Omega}_i$. It is clear that for all arrays ($ N \geq 1 $) we have: $ \Omega^- = \bigcap\limits_{i=1}^N \Omega_i^- \subseteq \Omega_i^- $. 

Configuration space $\Omega ^{-} $ is a 3D {\it smooth manifold} (strictly speaking sub-manifold in $\mathbb{R}^3$)~\cite{Tu11} since:
(a) there is its coverage $\bigcup\limits_{i=1}^N \Omega_i^- \supset \Omega ^{-} $; (b) $\Omega ^{-}$ is a metric space as a subset of $ ({\mathbb R}^3, \Vert \cdot \Vert) $; (c) there exists smooth homomorphisms  $  \varphi_i: \Omega_i^- \to {\mathbb R}^3 $; (d) For intersection $ \Omega ^{-}_{ij}:= \Omega ^{-}_i \cap \Omega ^{-}_j \neq \varnothing $ the overlap map:  $ \varphi_{ij}:=\varphi_j \circ {\varphi_i}^{-1}: \varphi_{i}\left(\Omega ^{-}_{ij}\right) \to \varphi_{j}\left(\Omega ^{-}_{ij}\right)$ is smooth for each pair of indices $ i(\neq j)=\overline{1,N}$.

A {\it coordinate chart} (or just a {\it chart}) on $\Omega ^{-} $ is a pair $\left( \Omega ^{-}_i,\varphi _i\right) $.  A collection of all charts $\left( \Omega ^{-}_i,\varphi _i\right) $ ($ i=\overline{1,N}$) is called a smooth atlas on $\Omega ^{-} $, which defines the manifold structure on $\Omega^{-}$. For short in the following  we shall simply term $ \Omega^{-}_i $ the {\it $i$-th chart}. It is clear that boundary $\partial \Omega ^{-} $ is a 2D manifold with empty boundary.

Consider the position vector $ {\mathbf r} $ of point $P \in \Omega^{-}_i $, then $ \varphi_i \left( \mathbf{r} \right) = ( r_1^i, r_2^i, r_3^i) \in \mathbb{R}^3 $. These $3$ real numbers are called {\it local coordinates of point $P$ in the chart $ \Omega^{-}_i $}. The corresponding local Cartesian coordinate system for any $i$-th sink it is convenient to designate by $ \lbrace O_i; r^i_1, r^i_2, r^i_3 \rbrace \equiv \{ O_i;{\mathbf r}_i\} $ with the same orientation as the global one and the origin $ O_i ({\mathbf r}^i_0) $ at the center of the $i$-th sink (see Fig.~\ref{fig1}). So, for any fixed $ i $ we can choose standard local orthonormal Cartesian basis $  \lbrace {\mathbf e}_{\alpha}^i \rbrace_{\alpha = 1}^{3} $ attached to the origin $ O_i ({\mathbf r}^i_0) $, such that $ r^i_{\alpha} $ are coordinates in this basis. Introdused local coordinates are useful for computation of the diffusion field in arrays wth microstructure (\ref{Mist1}). 

Let us take above position vector $ {\mathbf r} $ for the point $ P $ and consider it with respect to two local Cartesian coordinate systems $ \lbrace O_{i}; {{\mathbf r}_{i}} \rbrace $  and $ \lbrace O_{j}; {\mathbf r}_{j} \rbrace $. Corresponding point $ P $ position vectors are 
\begin{align}
	\varphi_i \left( \mathbf{r} \right) = {\mathbf{r}}_i = \mathbf{r} - \mathbf{r}^i_0 \quad \mbox{and} \quad  \varphi_j \left( \mathbf{r} \right) = \mathbf{r}_j=\mathbf{r}-\mathbf{r}^j_0 \,. \label{it2a}
\end{align}
It is also evident that $ \varphi _i^{-1}\left( \mathbf{r}_i\right) = \mathbf{r}_i + \mathbf{r}_0^i = \mathbf{r} $.
Perform the transition between these local coordinates given by the overlap map $ \varphi_{ij}=\varphi_j \circ {\varphi_i}^{-1}:{\mathbf{r}}_i  \rightarrow {\mathbf{r}}_j $ with the rule (see Fig.~\ref{fig1}):
\begin{align}
	{\mathbf{r}}_j= \varphi_{ij}\left({\mathbf{r}}_i\right) = \varphi_j\left(\left({{\varphi_i}^{-1} \left({\mathbf{r}}_i\right)}\right)\right) = \mathbf{r}%
	_i+\mathbf{L}_{ij}
	\,, \quad \mbox{where} \quad \mathbf{L}_{ij}=\mathbf{r}^i_0-\mathbf{r}^j_0\,. \label{it2}
\end{align}

Hereafter $ \mathbf{L}_{ij} $ being the vector, connecting centers of sinks $ i $ and $ j $. In this work we also use the corresponding unit vector $ {\widehat{\mathbf{L}}}_{ij} := {\mathbf{L}}_{ij}/  L_{ij} $, where $ L_{ij} $ is the distance between above centers $ L_{ij}= \Vert {\mathbf{L}}_{ij} \Vert = \Vert \mathbf{r}^i_0-\mathbf{r}^j_0 \Vert = \left\| {\mathbf r}_i-{\mathbf r}_j\right\|$. 
Transformations (\ref{it2a}) and (\ref{it2}) are clearly illustrated by the following triangle commutative diagram: 
\begin{center}
	\begin{tikzpicture}
		\matrix (m) [matrix of math nodes,row sep=3em,column sep=4em,minimum width=2em]
		{
			\lbrace O_i;{\mathbf{r}}_i \rbrace & \lbrace O_j;{\mathbf{r}}_j \rbrace \\
			\lbrace O;{\mathbf{r}} \rbrace& \\};
		\path[-stealth]
		(m-1-1) edge node [left] {${\varphi_i}^{-1}$} (m-2-1)
		edge [double] node [below] {$\varphi_{ij}$} (m-1-2)
		(m-2-1) edge node [right] {$\quad\varphi_j$} (m-1-2);
		
	\end{tikzpicture}
\end{center} 
\begin{remark}
	Note that simple linear connection between local coordinates (\ref{it2}) arose due to the use of Cartesian coordinates. Really, the application of polar spherical coordinate systems leads to the corresponding nonlinear dependence.  	 
\end{remark}
In view of the foregoing, the configuration space $\Omega ^{-} $ can be called the {\it configuration manifold}.

An additional point to emphasize is that theoretical calculations are often used the distance between surfaces of domains $ \Omega_i $ and  $ \Omega_j $, i.e. $h_{ij} := L_{ij} - \left( R_i + R_j\right)$ (see Fig.~\ref{fig1}). Hence the condition describing configurations of $ N $ sinks, which are not intersecting or touching may be written explicitly as  
\begin{align}
	h_{ij} > 0 \quad \mbox{for all} \quad i(\neq j)=\overline{1,N} \,. \label{Zv00b}
\end{align}

\subsection{Governing equation}

To describe the steady state microscopic diffusion-reaction model introduce a local concentration (number density) of $B$-reactants, $ n:\Omega^- \rightarrow \mathbb{R}_+ $ depends on a given microstructure $X^{(N)}$ (\ref{Mist1}) and also different reactivities of sinks. To describe the latter it is convenient to introduce the {\it set of sinks reactivities} $k_{in}^{(N)}:=\left\{ {k}^i_{in} \right\}_{i=1}^N $, where rate ${k}^i_{in}$ corresponds to the $i$-th sink reactivity. Thus, in a fixed global Cartesian coordinate system  $ \lbrace O; \mathbf{r} \rbrace $, extending notations of Refs.~\cite{Brailsford76,Torquato02}, one can write this function as $ n \left( P; X^{(N)},\,  k_{in}^{(N)} \right) $. However, for short we denote it just by  $n_N \left(P\right)$. Moreover, to investigate the effects of diffusive interaction as in the case of one sink Eq. (\ref{Zv02a}) we introduce  normalized concentration of $ B $ particles related to their rescaled local concentration 
\begin{align}
	u_N: \Omega^- \to \left(0, 1\right)\,, \quad \mbox{where} \quad u_N\left( P\right) = 1- \frac{n_N\left( P\right)}{c_B}\,.	  \label{int0}	
\end{align}
Assuming that rescaled Fick's local diffusion flux constitutive relation holds~\cite{Rice85} 
\begin{align}
	\mathbf j = D {\bm\nabla} u_N\left( P\right) \quad \mbox{in} \quad \Omega ^{-}  \label{tds1a0a}
\end{align}
the rescaled local concentration $ u_N\left( P \right) $ is governed by the $B$-particles
conservation law which leads to the Laplace equation
\begin{equation}
	{\bm \nabla} ^2 u_N\left( P\right) = 0 \quad \mbox{in} \quad \Omega ^{-} \, . \label{tds1a1}
\end{equation}
Hereinafter as usual $ {\bm \nabla} := \sum_{\alpha=1}^{3} {\mathbf e}_{\alpha} {\partial_{\alpha}} $ and  $ {\bm \nabla}^2 $ stand for the  gradient operator and the Laplacian on $ \mathbb{R}^3 $, respectively. 

To describe diffusion-reaction phenomena, Eq. (\ref{tds1a1}) should be subjected to the appropriate boundary conditions on the sink surfaces $ \partial \Omega_i $ and condition at infinity.

\subsection{General boundary conditions}

The present work is concerned with the whole possible range of the intrinsic rate constants $ 0 \leq {k}^i_{in} < + \infty $ and in general case we can impose the local form of inhomogeneous Robin boundary conditions 
\begin{equation}
	\left.\left [ 4 \pi R_i^2 D \, \hat{\mathbf n}_i \cdot {\bm \nabla}u_N - {k}^i_{in} \left (f_i\left( P_i\right)-  u_N  \right)\right]  \right| _{\partial \Omega_i} = g_i\left( P_i\right) \,,  \quad i=\overline{1,N}\,; \label{tds1a}
\end{equation}
where $ \hat{\mathbf n}_i $  being the outer-pointing unit normal vector with respect to $ \partial \Omega_i $ (see Fig.~\ref{fig2}) and point $P_i \in \Omega_i$. We assume for simplicity that $f_i\,, g_i \in C(\partial \Omega_i)$. 
\begin{figure}[h]
	\centering
	\includegraphics[width=0.63\textwidth]{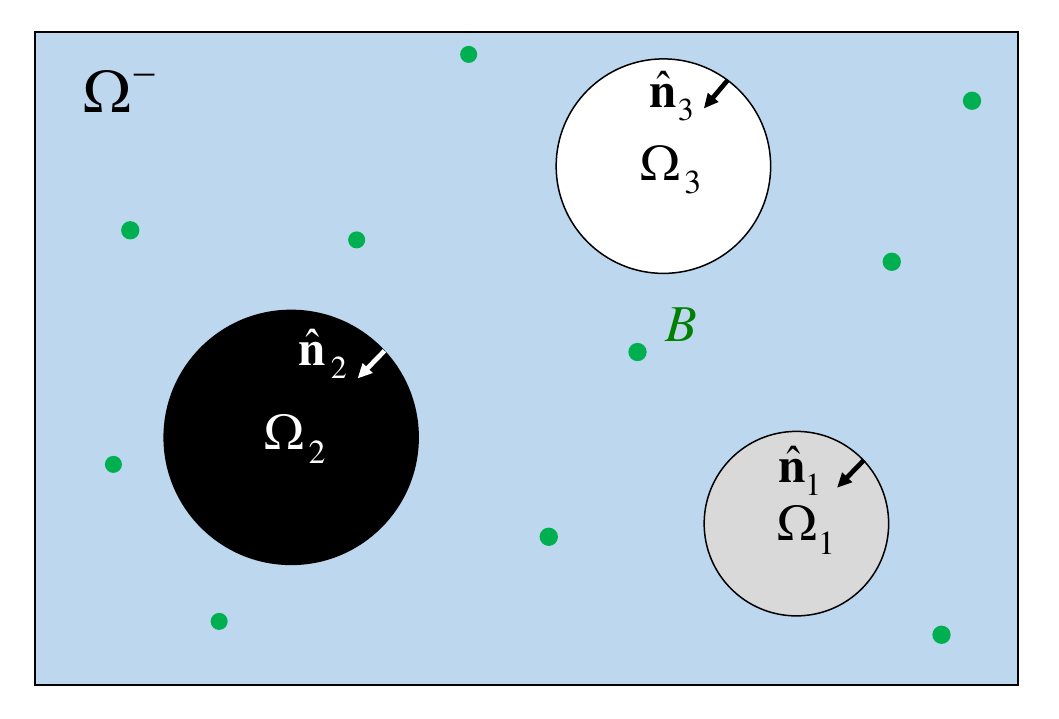}
	\caption{A geometrical sketch for three kinds of sinks in the configuration manifold $ \Omega^- $: partialy reflecting $ \Omega_1 $ (gray), fully absorbing $ \Omega_2 $ (black) and fully reflecting $ \Omega_3 $ (white); $ \hat{\mathbf n}_i$ are the corresponding outward to $ \Omega^- $ unit normal vectors; and small green circles depict $B$ particles}\label{fig2}
\end{figure}

For applications it is expedient to distinguish two particular cases of the general boundary conditions (\ref{tds1a}).

(i) In the case when $ k_{in}^i \to 0 $ conditions (\ref{tds1a}) give inhomogeneous Neumann boundary conditions 
\begin{align}
	4 \pi R_i^2 D\left. \hat{\mathbf n}_i \cdot {{\bm \nabla}u_N}  \right| _{\partial \Omega_i} = g_i\left( P_i\right) \,.  \label{tds5a}
\end{align}

(ii) Provided $k_{in}^i \rightarrow \infty$, conditions (\ref{tds1a}) are reduced to inhomogeneous Dirichlet boundary conditions
\begin{align}
	\left.  u_N  \right| _{\partial \Omega_i}=f_i\left( P_i\right). \label{tds2a}
\end{align}

For uniqueness of a solution to Laplace's equation (\ref{tds1a1}) under conditions (\ref{tds1a}) one should require satisfying the regularity conditions at infinity: $u_N\left( P\right) = {\cal{O}}(\Vert {\mathbf r}\Vert^{-1})$ and ${{\bm \nabla}u_N}\left( P\right) = {\cal{O}}(\Vert {\mathbf r}\Vert^{-2})$ as $\Vert {\mathbf r}\Vert \rightarrow \infty $~\cite{Tikhonov63,Krutitskii99}. These conditions are equivalent to the following {\it regularity condition at infinity}
\begin{align}
	\left.  u_N  \left( P\right) \right|_{\Vert {\mathbf r}\Vert \rightarrow \infty } \rightrightarrows 0\,   \label{rev5i}
\end{align}
and it uniquely determines the function $ u_N  \left( P\right) $, which is harmonic in the infinite domain $ \Omega ^{-} $~\cite{Tikhonov63}. Hereafter the sign $\rightrightarrows $ denotes the uniform limit. 

Thus, the exterior inhomogeneous Robin boundary value problem (\ref{tds1a1}), (\ref{tds1a}) and (\ref{rev5i}) is formulated. The existence and uniqueness of the classical solution to the posed problem $ u_N  \left( P\right) $ are a matter of common knowledge~\cite{Tikhonov63}. Furthermore, it is well-known that the smoothness class of solutions $ u_N  \left( P\right) $ depends on a boundary operator: (a) for the Robin and Neumann problems $ u_N \left({P}\right)\in C^2\left(\Omega^-\right) \cap C^1\left(\overline{\Omega}^-\right)$; (b) for the Dirichlet  problem  $ u_N \left({P}\right)\in C^2\left(\Omega^-\right) \cap C\left(\overline{\Omega}^-\right)$~\cite{Ladyzhenskaya68,Krutitskii99}.

\subsection{Boundary conditions for diffusion-influenced reactions}

It must be emphasized that the general boundary value problem (\ref{tds1a1}), (\ref{tds1a}) and (\ref{rev5i}) describes a vast amount of diffusion-reaction phenomena, occurring in nature and industry (see Sec.~\ref{sec:motivation}).
Although the approach under consideration may be applied to that problem, for the present paper we confine ourselves to study an important special case: $f_i\left( P_i\right) \equiv 1$ and $g_i\left( P_i\right) \equiv 0 $, describing the microscopic Collins-Kimball diffusion-influenced reactions~\cite{Rice85}.

Specific values of ${k}^i_{in}$ should be known experimentally or by means of more detailed kinetic theory, e.g., based on the Fokker-Planck-Klein-Kramers equation~\cite{Rice85}. Provided magnitudes of ${k}^i_{in} $ are positive and finite, we deal with {\it diffusion-influenced reactions} on so-called {\it partially reflecting (absorbing) sinks}~\cite{Kapral77,Kapral79}.  In the case of very large values of ${k}^i_{in}$ (i.e., as ${k}^i_{in} \to + \infty $), the sinks are called {\it fully (perfectly) absorbing}, and the corresponding reactions, occurring on them, are known as {\it fully diffusion-controlled}~\cite{Rice85}. Physicall that means that the activation barrier appeared to be very small~\cite{Vauthey17}.  Particular attention should be given to the case of {\it fully reflecting obstacles} (when ${k}^i_{in}=0$), which becomes also important for dense arrays of sinks and obstacles. Formally speaking, we can treat a fully reflecting obstacle as a sink with the zero intrinsic rate. 

Of special interest in many applications is the case when sinks possess different surface reactivity for a  microstructure $X^{(N)}$. To put it another way when intrinsic rate constants $ k_{in}^i $ are different for at least some numbers from the set $\left\{ 1 < i \le N  \right\} $~\cite{Doktorov19}. At first glance it would seem that this case is described by boundary conditions (\ref{tds1a}) automatically. However, solution of the boundary value problem (\ref{tds1a1}), (\ref{tds1a}) and (\ref{rev5i}) may be faced subtle mathematical difficulties~\cite{Krutitskii99,Traytak07}.

In this connection we believe that the mixed boundary value problems should be distinguished by two types of boundary conditions. We adopt the following definitions.
\begin{definition}\label{defin2}
	If the boundary conditions differ on different parts of a given connected
	component of a boundary $ \partial \Omega_i$ these conditions are referred to as {\it proper mixed boundary conditions}.
\end{definition}
Physically the proper mixed boundary condition means that a given connected component $ \partial \Omega _i $ is characterized by a prescribed heterogeneous reactivity. So from analytical viewpoint one should define a piecewise constant function ${k}^i_{in}: \partial \Omega_i \rightarrow  [0  + \infty) $. 
\begin{definition}\label{defin3}
	Provided different boundary conditions are posed on different connected components they are termed {\it improper mixed boundary conditions}.
\end{definition}
Therefore, for the improper mixed boundary conditions we have: $ \left.  {k}^i_{in} \right|_{\partial \Omega _i} = const $.
Figure~\ref{fig2} shows a geometric sketch of a simple array consisting of three chemically distinct reaction surfaces described by the improper mixed boundary conditions.

It is significant that, generally speaking, proper mixed boundary value problems are seldom amenable to analytical treatment because of their complexity, therefore, henceforth we shall treat here only improper mixed problems, omitting word "improper". For more details regarding derivations of improper mixed boundary conditions, the reader is referred to Refs.~\cite{Traytak07,Schuss15}.

In concluding this section, we note, that crowding and geometrical constraints effects on diffusion-influenced reactions~\cite{Lee20} may be treated if one considers microstructures $ X^{(N)} $ containing $ 1 \leq N_0 \leq N $ inert obstacles.
The presence of fully reflecting obstacles might strongly influence the reaction rate when the number of these obstacles is large enough~\cite{McCammon13,Lee20} (see also numerical calculations of the reaction rates for the mixed Dirichlet-Robin problem at $N=3$ in Sec.~\ref{sec:examples}).

\subsection{The dimensionless formulation of the problem}

Before subsequent mathematical study it is expedient to reduce the original boundary value problem to a non-dimensional form. 
In its turn, for spatial dimensionless independent variables we shall utilize corresponding local Cartesian coordinates $\left( O_i; \mathbf{r}_i \right) $ at the $ i$-th sink of radius $ R_i $ as a characteristic unit:
\begin{align}
	{\bm \xi}_i=\mathbf{r}_i/R_i\,, \quad  {\bm \xi}^i_0=\mathbf{r}^i_0/R_i \quad \xi^i_{\nu}=r^i_{\nu}/R_i\,, \quad \xi_i=\Vert {\bm \xi}_i \Vert \,.
	\label{int1}	
\end{align} 
Besides these dimensionless local coordinates for any pair of sinks $ i $ and $ j (\neq i) $ the following important parameters~\cite{Traytak06} 
\begin{align}
	\varepsilon_{ji}=R_{i}/L_{ij} < 1\,, \quad \varepsilon_{ij}=R_j/L_{ij} < 1  
	\label{int2e}	
\end{align}
are naturally arisen. Clearly, parameters (\ref{int2e}) are totally determined by a given microstructure $ X^{(N)} $ (\ref{Mist1}) and in general case they obey the condition $ \varepsilon_{ij} \neq  \varepsilon_{ji} $. Moreover, the unit sphere centered at a point $ {\bm \xi}^i_0 $ we henceforth designate by
\begin{align}
	\partial \Omega_i^1 \equiv \partial \Omega_i \left({\bm \xi}^i_0; 1 \right) := \lbrace {\bm \xi}_i \in  \Omega_i^-: \xi_i=1 \rbrace\,.
	\label{int2k}	
\end{align} 
%

Here it should be especially noted that performed normalization (\ref{int1}) holds only locally in charts $ \Omega_i^- $ and does not hold in the whole manifold $ \Omega^- $.  Note also that in the interest of readability for all charts $ \Omega_i^- $, transformed according to Eq. (\ref{int1}), we retain the same designations. 

It is expedient now to write the diffusion problem immediately in the local coordinates $ \{ O_i;{\bm \xi}_i\} $. However, sometimes to emphasize the fact that we deal with the field on a manifold coordinateless form $ u_N \left(P \right) $ will be used. Clearly, the basic equation governing exterior steady-state Eq. (\ref{tds1a1}) under partially reflecting  boundary conditions (\ref{tds1a}) and regularity condition at infinity (\ref{rev5i})  with respect to the local concentration $ u_N \left({\bm \xi}_i \right) $ ($ i= \overline{1,N} $) and dimensionless variables (\ref{int1}) reads
\begin{align}
	{\mathbf \nabla}_{{\bm \xi}_i}^2\, u_N \left({\bm \xi}_i \right) =0 \quad \mbox{in} \quad \Omega_i^{-}\,,
	\label{di1}\\
	-\left. \partial _{\xi _i} u_N \left({\bm \xi}_i \right) \right| _{\partial \Omega_i^1}=\kappa_{i}\left[1 -\left.
	u_N \left({\bm \xi}_i \right) \right| _{\partial \Omega_i^1}\right] > 0 \, ,  \label{di2}\\
	\left. u_N \left({\bm \xi}_i \right) \right| _{\xi_i \rightarrow \infty }\rightrightarrows 0 \,.
	\label{di3}
\end{align}	
Hereafter  $ {\bm \nabla}_{{\bm \xi}_i}^2 $ stands for the corresponding dimensionless Laplacian written in local coordinates $ \{ O_i;{\bm \xi}_i\} $. 

Besides we introduce the corresponding dimensionless set of sinks reactivities:  $
\kappa^{(N)}:=\left\{ \kappa_i \right\}_{i=1}^N $ with	$\kappa_i=k_{in}^i/k_{S}^i \in \left[0, \infty \right)$,
where $k_{S}^i=4\pi R_iD $ is the Smoluchowski rate constant (\ref{Zv01}) for the fully absorbing $i$-th sink.

Thus, from a mathematical viewpoint, we deal with the exterior Robin boundary value problem (\ref{di1})-(\ref{di3}) in an unbounded 3D manifold $\Omega ^{-}$ with $N$-connected boundary $\partial\Omega ^{-}$.  An additional point to emphasize is that the boundary conditions (\ref{di2}), being posed on all connected components $ \partial \Omega _i $ of the boundary $ \partial \Omega^- $, are integral, which reflects the important fact that diffusive interaction between sinks are not pairwise additive.

\subsection{The microscopic trapping rate}\label{subsec:rate}

For the theory of diffusion-influenced reactions the most important value is the {\it microscopic trapping rate} for the $ i $-th sink defined as $ k_i := {\Phi_i}/{c_B} $, where $ \Phi_i $ stands for  the total flux of $B$'s on the $i$-th sink surface. 
In turn with the help of known solution $ n_N \left( P \right) $, the total microscopic trapping rate on the reaction surface $\partial \Omega_ i$ can be calculated straightforwardly
\begin{equation}
	k_i =\oint_{\partial \Omega_i}\left. \hat{\mathbf n}_i \cdot \mathbf{j} \right| _{\partial \Omega_i}dS_i \, ,
	\label{Zv06}
\end{equation}
where $ dS_i $ is the $i$-th surface element. 
Then, recasting (\ref{Zv06}) with respect to $ u_N \left({\bm \xi}_i \right) $ in an exterior neighborhood of the $i$-th sink surface, one can easily obtain the desired microscopic trapping rate (\ref{Zv06}) by the surface integral over the unit sphere 
\begin{align}
	k_i (X^{(N)};\kappa^{(N)}) = - \oint_{\partial \Omega_i^1 }\left. \partial _{\xi _i} u_N \right| _{\partial \Omega_i^1}d{\hat{\mathbf \xi}}_i\,.	\label{dm1n}
\end{align}
Hence, for dense enough sink arrays when diffusive interaction effects become important it is necessary to determine the appropriate correction factors to the absorbsion rates on all $ N $ sinks.
Therefore the above rate (\ref{dm1n}) is sought in the form
\begin{align}
	k_i (X^{(N)};\kappa^{(N)}) = k_{S}^i  J_0^i \left(\kappa_{i}\right) J_i(X^{(N)};\kappa^{(N)})\,,
	\label{dm1na}
\end{align}
where the $i$-th dimensionless Collins-Kimball rate given by 
\begin{align}
	J_0^i \left(\kappa_{i}\right)= \kappa_i\slash \left(1 + \kappa_i\right) \quad \mbox{for} \quad i= \overline{1,N} \,.
	\label{dm1na2}
\end{align} 
generalizes one-sink expression. Below, to simplify notations, we omitted corresponding parameters.
Evidently,  $ J_0^i $ is the unperturbed by the diffusive interaction  Collins-Kimball trapping rate for the $ i $-th sink normalized by $k_{S}^i $. 

In representation (\ref{dm1na}) the magnitudes $ J_i $ are commonly called the {\it screening coefficients}~\cite{Cossali20} (rate correction factors~\cite{Labowsky78}). It is absolutely clear that due to dependence on configuration of sinks $ X^{(N)} $ and their reactivities $ \kappa^{(N)} $ screening coefficient $ J_i $ (usually, for simplicity, we shall omit arguments in the function notation) expresses the effect of the diffusive interaction between a given $ i $-th sink and other $ N-1 $ sinks. It emerges from the physical standpoint that the diffusive interaction reduces the reaction rate of either sink compared with that of a single one. Hence, the stronger the diffusive interaction the smaller $ J_i $ is and the following relations hold true 
\begin{align}
	0 < J_i  <1\,, \quad
	\lim_{\varepsilon_{ij} \to 0} J_i  = 1 \quad \mbox{for all} \quad i(\neq j)=\overline{1,N} \,. 	\label{dm1na2b}
\end{align}
Limit (\ref{dm1na2b}) simply means that the diffusive interaction disappears at large separations between sinks.

It is significant that, generally speaking, the microscopic trapping rate (\ref{dm1n}) for $ N \geq 2 $ does not coincide with the reaction rate constant $ k $ defined by Eq. (\ref{Zv00}), which is a macroscopic value~\cite{Bressloff20}. 
So, investigating diffusive interaction effects in systems with $N \geq 2$ sinks we cannot use  the term "rate constant" for $ k_i $ given by Eq. (\ref{dm1n}) any more~\cite{Traytak92}.

\section{Method of solution}\label{sec:method}

The present section is devoted to an overall description of the generalized method of separation of variables, which is a powerful tool for solution of the Robin boundary value problem (\ref{di1})-(\ref{di3}). 

However, before starting to discuss this topic, we briefly highlight the application of the {\it boundary integral equations method}.~\cite{Mikhlin64,Grebenkov19} One can reduce the above diffusion problem (\ref{di1})-(\ref{di3}) to solution of a system of Fredholm boundary integral equations of the II kind. It turned out that this system of integral equations may be reduced to the corresponding infinite system of linear algebraic equations.~\cite{Traytak05} Despite the fact that this method is also applicable to the mixed boundary conditions~\cite{Krutitskii99} its implementation in the diffusion problem at issue  involves a number of disadvantages compared to the approach that would be considered here.

\subsection{On the method of separation of variables}

First recall some well-known mathematical facts on the standard {\it method of separation of variables} (MSV)~\cite{Tikhonov63}. An additional point to emphasize is that present study deals with multiplicatively separable solutions of the diffusion equation only.  

Let us fix a global Cartesian coordinate system  $ \lbrace O; \mathbf{r} \rbrace $ in a bounded or unbounded domain $ \Omega \subset \mathbb{R}^3 $.
\begin{definition}\label{definCan}
	A domain $ \Omega $ is said to be canonical if any solution $ u \left(\mathbf{r} \right) $ to a boundary value problem in $ \Omega $ may be derived by the standard MSV.
\end{definition}
Thus, in other words, there exists an orthogonal functional basis of solutions $ \lbrace \varphi_n \left(\mathbf{r} \right) \rbrace_{n=0}^{\infty} $ (called {\it basis solutions}) to a boundary value problem in a canonical domain  $ \Omega \subset \mathbb{R}^3 $ and a set of real coefficients  $ \lbrace C_n \rbrace_{n=0}^{\infty} $ such that 
\begin{align}
	u \left(\mathbf{r} \right) = \sum_{n=0}^{\infty} C_n \varphi_n \left(\mathbf{r} \right) \quad \mbox{in} \quad \Omega\,, \label{def1}
\end{align}
where series is absolutely and uniformly convergent. For 3D Laplace' equation, e.g., canonical domains are those bounded by surfaces: spheroids, cylinders, cones, paraboloids, and their combinations for interior and corresponding complements in $ \mathbb{R}^3 $ for exterior boundary value problems, respectively.

The specific values of the coefficients $ C_n $ to be determined from the appropriate boundary conditions.
Plainly, both introduced above domains $ \Omega_i $ and their complements $ \Omega_i^- $ ($ i = \overline{1, N} $) are internal and exterior canonical ones for the steady-state diffusion problems, respectively. The relevant functional basises consist of regular and irregular solid harmonics.

The application of the MSV to the well-posed diffusion problems involves:
\begin{itemize}
	\item  application of the local linear superposition principle;
	\item  determination of the basis solutions to the equation in a given canonical 
	domain;	
	\item determination of unknown coefficients from the posed boundary conditions.
\end{itemize}

\subsection{Generalized method of separation of variables}

The {\it generalized method of separation of variables} (GMSV)~\cite{Traytak03,Traytak18,Grebenkov19,Piazza19} naturally stems from the standard method of separation of variables and its incipience go back to the classic works by Maxwell~\cite{Maxwell} and  Basset.~\cite{Basset1887} However, the most clearly this idea was expressed in Rayleigh's seminal paper on the conductivity of heat and electricity in a medium with cylindrical or spherical sinks arranged in a rectangular array~\cite{Rayleigh1892}. Nowadays, this approach is referred to as the Rayleigh multipole method.

Since then, the GMSV had been intensively developed and found numerous applications in electrostatics, hydrodynamics, mechanics, heat transfer, diffraction theory and many other fields. Surprisingly, these advances typically remained "hidden" within each discipline and unknown to the researchers, worked in other fields. Such a parallel development led to multiple "rediscoveries" of the same results in different fields. Wherein, various names were given to the GMSV by different authors.  It is also widely known as the "generalized Fourier method"~\cite{Nikolaev16,Miroshnikov22}. Guz' and Golovchan treated the method as essentially a particular case of the "method of series"~\cite{Guz}. Finally, in micromechanics of heterogeneous materials mainly term "multipole expansion approach" is used~\cite{Kushch13}. 

Note that all above names capture only some features of the method under consideration. As this method relies on the separation of variables in local curvilinear coordinates, following Ivanov, it can be call "generalized method of separation of variables"~\cite{Ivanov}. We have to emphasize that here a particular case of the GMSV for the sinks of the same form (spherical) will be utilized. So we shall not apply the
{\it re-expansion theorems} connecting different basis functions corresponding to the different shapes of sinks.

Among the numerous works devoted to the development of GMSV, the Ivanov's book holds a special place~\cite{Ivanov}.
For the first time Ivanov presented the scheme of the GMSV in full details and, moreover, he derived a number of new and reviewed known then addition theorems for the sets of basis solutions to the Helmholtz equation, written in different curvilinear coordinate systems~\cite{Ivanov}.  Although Ivanov used the method to solve various problems concerning diffraction theory for two particles, but it is clear that the same approach is also valid for finite number of particles. 

\begin{figure}[h]
	\centering
	\includegraphics[scale=0.6]{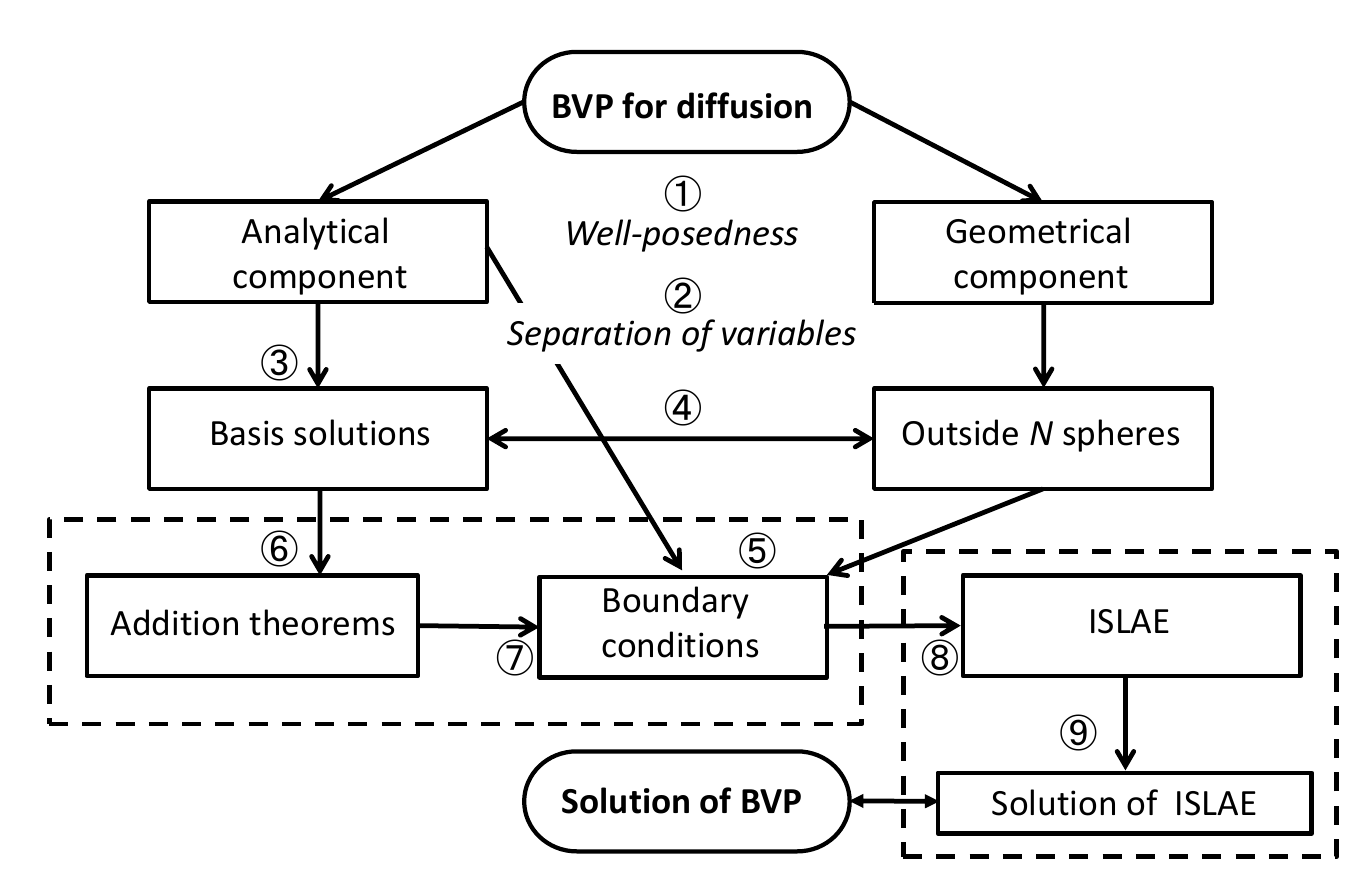}
	\caption{A block diagram of the generalized method of separation of variables for diffusion problems in the manifold $ \Omega^- $}\label{fig5}
\end{figure}

The basic idea of the GMSV consists in reducing the original boundary value problem posed in the configuration manifold $\Omega^-$ to the corresponding $N$ boundary value problems in charts $\Omega^-_i$. These $ N $ problems are coupled through the boundary conditions, prescribed on the whole disconnected boundary $\partial \Omega^-$. Thus, the GMSV is actually the method of separation of variables on the 3D manifold~\cite{Traytak03}.

\subsection{Outline of approach}\label{subsec:methodOut}

The application of the GMSV to the boundary value problem (\ref{di1})-(\ref{di3}) may be formulated as the {\it GMSV algorithm} which in turn can be divided up into following main steps.
\begin{enumerate}
	\item Proof of the well-posedness of the solution to the posed boundary value problem;
	\item  Deconposition of the desired solution in the form of the sum of partial solutions by means of the general linear superposition principle;
	\item  Determination of the appropriate basis solutions to the diffusion boundary value problem posed in the given canonical domains with disconnected boundary;	
	\item  Determination of relevant addition theorems to obtain local regular expansions of partial solutions in a neighborhood of corresponding connected components of bounday;	
	\item  Application of the addition theorems in order to satisfy the boundary conditions; 
	\item  Reduction of the problem to a resolving self-consistent infinite system of linear algebraic equations;
	\item  Solution of the resolving infinite system of linear algebraic equations;
	\item Calculation of the microscopic reaction rate.
\end{enumerate}
In Fig.~\ref{fig5} we provide a full scheme of the GMSV application to a typical  diffusion boundary value problem. One can see that the {\it linearity} of the boundary value problem is one of the primary requirements for the GMSV algorithm as well.

Simple inspection reveals that the fundamental differences between standard MSV and GMSV (see Fig.~\ref{fig5}) are in the blocks depicted by long-dashed lines. Hence, contrary to the MSV the GMSV algorithm requires two supplementary rather non-trivial mathematical tools: the appropriate addition theorems for the basis solutions along with methods for of the resolving infinite system of linear algebraic equations.

\section{The GMSV with irreducible Cartesian tensors}\label{sec:Irreducible}

Previously, by means of GMSV for 3D domains with $ N $ disconnected spherical boundaries, we presented the detailed derivation of the Laplacian Green functions for both the exterior and interior Dirichlet and Robin problems along with the Green function for the relevant conjugate (transmission) boundary value problem~\cite{Grebenkov19}. Some facets of the GMSV applications to the theory of stationary bulk diffusion-influenced reactions on spherical sinks were also studied and discussed in Refs.~\cite{Galanti16a,Grebenkov19,Piazza19}. However, it is important that all analytical and numerical calculations were performed there using the local spherical coordinate systems.

Plainly the general solution algorithm is feasible in two ways: by means of either solid spherical harmonics with respect to a polar spherical coordinate system or, alternatively, with the help of the ICT (for details, see below Appendix~\ref{sec:appendix}). The wide use of solid harmonics in the theory of diffusion-influenced reactions is a matter of common knowledge (see, e.g., Refs.~\cite{Venema89,Tsao01a,Tsao02,Strieder03,Strieder04} and references cited there). In the meantime, studies devoted to similar applications of ICT are only very few~\cite{Kapral79,Beenakker86,Traytak91,Traytak92,Traytak01,Traytak03}. This seems surprisingly, taking into account the facts that ICT formalism has a number important advantages against standard solid harmonics approach (see discussion below in Sec.~\ref{sec:discussion}). 
In this connection, it is particularly remarkable that recent decades investigations have inspired a renewed interest for use of the ICT technique to study a number of problems in systems with $N$-spheres microstructure (\ref{Mist1})~\cite{Hess03,Martynov10,Agre11,Guskov13,Adhikari15,Hess15,Mane16,Bechinger16,Adhikari17,Andrews20,Turk22}. 

Against these recent trends, the lack of new works concerning the applications of the ICT in diffusion-influenced reactions theory is especially noticeable. 

To our knowledge Lebenhaft and Kapral were the first who successfully applied the ICT technique to the diffusion problem (\ref{di1})-(\ref{di3})~\cite{Kapral79}. Subsequently, an explicit form of the GMSV with the aid of ICT was carried out in our paper Ref.~\cite{Traytak91}, where we utilized this form of the GMSV to solve problems on Ostwald ripening, taking into account multipole corrections. It should be stressed that our approach stems from the {\it method of Cartesian ansatz} suggested in 1978 by Schmitz and Felderhof~\cite{Felderhof78}. Although Ref.~\cite{Felderhof78} studied a hydrodynamic problem, authors also treated the Robin boundary value problem for Laplace's equation in the unbounded domain exterior to a sphere. In that regard it is pertinent to cite here above paper: "We solve this problem by making an ansatz in cartesian coordinates, rather than following the usual method of introducing spherical coordinates." 
The Schmitz-Felderhof ansatz means that one is seeking a solution in the form of mixture dependence of spherical radial $ \xi_i $ and corresponding Cartesian local coordinates  $ \xi_{\gamma _m}^{i} $ in every $i$-th chart ($ i = \overline{1, N} $)~\cite{Felderhof78}
\begin{align}
	u_N \left({\bm \xi}_i \right) = \sum\limits_{n=0}^\infty w_n \left({\xi}_i  \right) C_{\gamma _1 \ldots \gamma _n}^i \odot^n \mathop {{\xi
		}_{\gamma _1}^i{\ldots \xi }_{\gamma _n}^i} \limits^{\rule[-.04in]{.01in}{.05in}%
		\rule{.47in}{.01in}\rule[-0.04in]{.01in }{.05in}} \quad \mbox{in}\quad \Omega_i^- \,.   
	\label{ansatz1}
\end{align}
Hereinafter $ \mathop {{\xi
	}_{\gamma _1}^i{\ldots \xi }_{\gamma _n}^i} \limits^{\rule[-.04in]{.01in}{.05in}%
	\rule{.47in}{.01in}\rule[-0.04in]{.01in }{.05in}} $ is the ICT defined by Eq. (\ref{it1v}). The explicit form of functions $ w_n \left({\xi}_i  \right) $ may be found by substituting in Laplace's equation (\ref{di1}) while the unknown tensor coefficients $ C_{\gamma _1 \ldots \gamma _n}^i  $ to be determined from the boundary conditions (\ref{di2}).

Note in passing that so-called {\it self-consistent field method} that is another version of the above method of Cartesian ansatz~\cite{Guskov13}. A method of reflections with the help of ICT used in Ref.~\cite{Chen95} to solve the problem of the spherical particles motion due to a temperature gradient is also worth mentioning here. It is important to mention that some approaches vide used just different modifications of the GMSV in terms of the  above treated version with the help of the ICT~\cite{Kim87,Tsao01}.

In Ref.~\cite{Beenakker86} a scalar version of well-known induced force method has been applied to the theory of diffusion-controlled reactions for the first time. It turns out, however, that both induced forces method and the GMSV by means of the ICT lead to the same second kind ISLAE with respect to unknown tensor constants involved~\cite{Traytak03,Traytak05}.

\section{The addition theorem in terms of the irreducible tensors}\label{sec:addtheor}


Due to the fact that the Laplace equation (\ref{di1}) remains invariant under the action of the group of Euclidean motions~\cite{Miller77} {\it irregular solid harmonics} (see definitions in Appendix~\ref{sec:appendix}) written in local coordinates of any chart $ \Omega_i^- $ ($ \Omega_j^- $) can be recast in terms of {\it regular solid harmonics} written in local coordinates of another chart $ \Omega_j^- $ ($ \Omega_i^- $). 

One of the key point of the GMSV algorithm is the transformation of the solid harmonics under action of subgroup of translations (\ref{adg1c})~\cite{Steinborn73,Piazza19}.
Provided charts $ \Omega_i^- $ and  $ \Omega_j^- $ possess the same form (complements of spheres in the present paper), the appropriate transformation formulas of that kind we shall call {\it irregular to regular translation addition theorems} (I$\to$R TAT). 
\begin{remark}
	On the contrary, in the case when forms of domains $ \Omega_i^- $ and  $ \Omega_j^- $ are different (like the exteriors of a sphere $ \Omega_1^- $ and a prolate spheroid $ \Omega_2^- $ see Ref.~\cite{Traytak18}) corresponding relations it is more expedient to call {\it translation re-expansion theorems}.
\end{remark}

The goal of this section is to present a proof of the I$\to$R TAT within the scope of the GMSV in terms of the ICT.
We note in passing that, strictly speaking, we deal here with scalar translation addition theorems only.

\subsection{The degenerate translation addition theorem}

First let us focus our treatment on a particular case of the I$\to$R TAT , which plays a key role in the subsequent proof of the general TAT.

Assume that $ r > \Vert \mathbf{r}^i_0 \Vert $, where $ r = \Vert {\mathbf r}\Vert $ is the distance between the global origin $ O $ and a point $ P $ (see Fig.~\ref{fig1}).

Thereby, we can write the known Taylor's expansion
\begin{align} 
	\frac{1}{\Vert {\mathbf{r}}_i\Vert} =	\frac{1}{\Vert {\mathbf r} - \mathbf{r}^i_0 \Vert} = \sum_{k=0}^{\infty} \frac{(-1)^k}{k!} \left( \mathbf{r}^i_0 \cdot {\mathbf\nabla} \right)^k \left( \frac{1}{r} \right)\,.  \label{ret1a1}
\end{align}
Clearly, series (\ref{ret1a1}) converges absolutely and uniformly for all $ r > 0 $~\cite{Tikhonov63}. 

Function $ 1/r $ is often termed the {\it generating function}.
Traditionally, Eq. (\ref{ret1a1}) is referred to as the {\it multipoles expansion of the (global) fundamental solution} of the Laplacian at the pole point $ \mathbf{r}^i_0 $ in powers of $ r^{-1} $ (see, e.g., Ref.~\cite{Tikhonov63}). 

For the further consideration, it is convenient to recast expansion (\ref{ret1a1}) in terms of the ICT defined by Eq. (\ref{it1v})~\cite{Hess15}. This fact allows us to formulate so-called {\it the degenerate translation addition theorem} for solid harmonics 
\begin{theorem}\label{theorem1}
	For all $ r > 0 $ the fundamental solution of the Laplacian by means of the ICT may be expanded in an absolutely and uniformly convergent series
	\begin{align}	
		\frac{1}{\Vert \mathbf{r}-\mathbf{r}^i_0\Vert}	=\sum\limits_{k=0}^\infty \omega_k r^{-(2k+1)}\mathop {\hspace{0.2cm}{r}_{\gamma _1}{\ldots r}_{\gamma _k}}\limits^{%
			\rule[-.04in]{.01in}{.05in}\rule{.45in}{.01in}\rule[-0.04in]{.01in
			}{.05in}}  \odot^k \mathop {r_{\gamma _1}^{0i}{ \ldots r}_{\gamma _k}^{0i}}\limits^{%
			\rule[-.04in]{.01in}{.05in}\rule{.45in}{.01in}\rule[-0.04in]{.01in
			}{.05in}} \,, \label{it6r}
	\end{align}
	where $ \omega_k:=  \left( 2k-1\right)!!/k!$,  $r_{\gamma _\nu }$ and $ r_{\gamma _m}^{0i} $ are Cartesian coordinates of vectors $ \mathbf r $ and $ \mathbf{r}^i_0 $, respectively.
\end{theorem}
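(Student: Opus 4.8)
The plan is to start from the Taylor expansion \eqref{ret1a1}, which is already established as absolutely and uniformly convergent for all $r > \Vert \mathbf{r}^i_0 \Vert$, and to rewrite each term $\left( \mathbf{r}^i_0 \cdot {\bm\nabla} \right)^k \left( 1/r \right)$ in terms of irreducible Cartesian tensors. The crucial algebraic identity I would invoke is the known formula for iterated derivatives of the fundamental solution, namely that $\partial_{\gamma_1} \cdots \partial_{\gamma_k} (1/r)$ equals $(-1)^k (2k-1)!! \, r^{-(2k+1)}$ times the detracing (symmetric traceless part) of $r_{\gamma_1} \cdots r_{\gamma_k}$; in the notation of the paper this is exactly $(-1)^k (2k-1)!! \, r^{-(2k+1)}\, \smash{\mathop {{r}_{\gamma _1}{\ldots r}_{\gamma _k}}\limits^{\rule[-.04in]{.01in}{.05in}\rule{.45in}{.01in}\rule[-0.04in]{.01in}{.05in}}}$. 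This identity can be cited from Appendix~\ref{sec:appendix} or proved by induction on $k$ using that $1/r$ is harmonic away from the origin (so the trace terms drop) together with $\partial_\gamma r = r_\gamma / r$.

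Next I would contract this with the $k$-fold product $r^{0i}_{\gamma_1}\cdots r^{0i}_{\gamma_k}$ coming from the $k$-th power of $\mathbf{r}^i_0 \cdot {\bm\nabla}$. Here I would use the standard fact that contracting a fully symmetric traceless tensor against an arbitrary tensor is the same as contracting it against the symmetric traceless part of that tensor: $\smash{\mathop {{r}_{\gamma _1}{\ldots r}_{\gamma _k}}\limits^{\rule[-.04in]{.01in}{.05in}\rule{.45in}{.01in}\rule[-0.04in]{.01in}{.05in}}} \odot^k r^{0i}_{\gamma_1}\cdots r^{0i}_{\gamma_k} = \smash{\mathop {{r}_{\gamma _1}{\ldots r}_{\gamma _k}}\limits^{\rule[-.04in]{.01in}{.05in}\rule{.45in}{.01in}\rule[-0.04in]{.01in}{.05in}}} \odot^k \smash{\mathop {r^{0i}_{\gamma _1}{\ldots r}^{0i}_{\gamma _k}}\limits^{\rule[-.04in]{.01in}{.05in}\rule{.45in}{.01in}\rule[-0.04in]{.01in}{.05in}}}$. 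Combining the $(-1)^k$ from the derivative identity with the $(-1)^k$ already present in \eqref{ret1a1} gives $(-1)^{2k} = 1$, and collecting the numerical factor yields $\omega_k = (2k-1)!!/k!$, which matches the statement. This produces the series \eqref{it6r} term by term.

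Finally I would address convergence. Since \eqref{it6r} is obtained from \eqref{ret1a1} by regrouping each individual term into an algebraically equal expression (no rearrangement of the series order, just a rewriting of the $k$-th summand), the absolute and uniform convergence for $r > \Vert \mathbf{r}^i_0 \Vert$ is inherited directly from that of \eqref{ret1a1}. To get the stated range "for all $r > 0$" I would note that by homogeneity the roles of $\mathbf r$ and $\mathbf{r}^i_0$ can be exchanged: if instead $r < \Vert \mathbf{r}^i_0 \Vert$, one expands about $\mathbf r$ rather than $\mathbf{r}^i_0$, obtaining the same bilinear expression with the roles of the two vectors swapped, and the symmetric bilinear form in \eqref{it6r} is manifestly invariant under that swap (each term is $\omega_k$ times a contraction of two detraced tensors, and $r^{-(2k+1)}$ paired with the $\mathbf r$-tensor versus $\Vert\mathbf{r}^i_0\Vert^{-(2k+1)}$ paired with the $\mathbf{r}^i_0$-tensor are interchanged consistently); the only excluded locus is $r = \Vert \mathbf{r}^i_0\Vert$ with $\mathbf r \neq \mathbf{r}^i_0$, but one checks the detraced contraction still controls convergence there, or simply remarks the singularity is only at $\mathbf r = \mathbf{r}^i_0$.

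I expect the main obstacle to be the bookkeeping in the derivative identity for $\partial_{\gamma_1}\cdots\partial_{\gamma_k}(1/r)$: verifying carefully that all trace contributions collapse exactly into the detracing operation and that the combinatorial constant is precisely $(2k-1)!!$ requires either a clean inductive argument exploiting harmonicity of $1/r$ or a direct appeal to the ICT machinery set up in the appendix. Everything else — the sign cancellation, the identification of $\omega_k$, and the transfer of convergence — is routine once that identity is in hand.
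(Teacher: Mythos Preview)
Your proposal is correct and follows exactly the route the paper takes: the paper merely remarks that one should ``recast expansion~\eqref{ret1a1} in terms of the ICT defined by Eq.~\eqref{it1v}'', and you have spelled out precisely that computation (the derivative identity \eqref{it1v}, the contraction-with-traceless-part observation, and the sign/constant bookkeeping giving $\omega_k$).

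One small caveat: your final paragraph attempting to extend convergence to all $r>0$ by swapping $\mathbf r\leftrightarrow\mathbf r^i_0$ does not quite work as written, since the factor $r^{-(2k+1)}$ in \eqref{it6r} is attached asymmetrically to $\mathbf r$ and the expression is \emph{not} invariant under that swap. The paper itself does not address this point beyond a citation, and in fact the series as written converges only for $r>\Vert\mathbf r^i_0\Vert$; the phrase ``for all $r>0$'' in the statement should be read as $r>\Vert\mathbf r^i_0\Vert>0$. You can simply drop that paragraph and inherit convergence directly from \eqref{ret1a1}.
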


This form of multipole expansion is very broadly utilized across various scalar fields of the physical and chemical sciences. Moreover, (\ref{it6r}) plays a key role in derivation of the translation addition theorem and that is why it is also called the degenerate addition theorem.

\subsection{The translation addition theorem}

It has been known that I$\to$R TAT holds true for solid harmonics $ \psi^{\pm}_{nm} \left( {\mathbf r} \right) $ defined in polar spherical coordinate systems (see, e.g., Ref.~\cite{Grebenkov19} and references therein). Here we shall present a derivation of the similar I$\to$R TAT in terms of ICT $ \bm{X}^{\pm}_{n} $ (see Appendix~\ref{sec:appendix}), which was given in Refs.~\cite{Traytak91,Traytak92} without a proof.

To calculate the local concentration $ u_N\left( {P}\right) $  at
any given point $ P \in \Omega ^{-} \subset \Omega ^{-}_i \cap \Omega ^{-}_j$ with position vector $ {\mathbf r} $ we consider functions connected with every coordinate chart $\left( \Omega ^{-}_i,\varphi _i\right) $: $u^{\prime }\left( \mathbf{r}_i\right) =u_N\left( {P}\right)$ for $ \mathbf{r}_i=\varphi _i\left( {\mathbf{r}}\right)$ ($ i=\overline{1, N}$) (see~\ref{subsec:addtheorDef}). 
In any other chart $\left( \Omega_j ^{-},\varphi _j\right) $ at the same
point $P \in \Omega ^{-}$ we have alternatively
\begin{eqnarray}
	u^{\prime \prime }\left( \mathbf{r}_j\right) =u^{\prime }\left( \mathbf{r}%
	_i\right) =u_N\left( {\mathbf{r}}\right)\,, \label{man1} \\	
	u^{\prime \prime }\left( \varphi_{ij}\left({\mathbf{r}}_i\right)\right) = u^{\prime }\left( \varphi_{ji}\left({\mathbf{r}}_j\right)  \right)=u_N\left( {\mathbf{r}}\right)\, .
	\label{man2}
\end{eqnarray}

For our problem the overlap map $ \varphi_{ij} $ is a simple translation (\ref{it2}) so let us start with an important general
\begin{definition}
	Any two given local curvilinear coordinate systems $\{ O_i;{\bm\zeta}_i\}$ and $\{ O_j;{\bm\zeta}_j\}$ are said to be consistent if there exists a translation (\ref{adg1c}) with vector $ {\mathbf a} $ such that $ \mathbf {\bm\zeta}_j =T_{\mathbf a}{\bm\zeta}_i = {\bm\zeta}_i + {\mathbf a}$. 
\end{definition}
This definition immediately brings to
\begin{proposition}\label{CC1}
	Any two local Cartesian coordinate systems $\{ O_i;{\mathbf r}_i\}$ and $\{ O_j;{\mathbf r}_j\}$ are consistent if and only if: (a) they are positively oriented; and (b) all their axes are pairwise parallel with respect to the global Cartesian coordinates.
\end{proposition}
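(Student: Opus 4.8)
The plan is to prove Proposition~\ref{CC1} as a straightforward consequence of the definition of consistency together with elementary facts about rigid motions of $\mathbb{R}^3$, treating the two implications separately.

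First I would establish the easy direction. Suppose the local Cartesian coordinate systems $\{O_i;{\mathbf r}_i\}$ and $\{O_j;{\mathbf r}_j\}$ are both positively oriented and have all axes pairwise parallel to the global Cartesian axes $\lbrace {\mathbf e}_{\alpha} \rbrace_{\alpha=1}^{3}$. Then the orthonormal bases ${\lbrace {\mathbf e}^i_{\alpha} \rbrace}$ and ${\lbrace {\mathbf e}^j_{\alpha} \rbrace}$ coincide componentwise (${\mathbf e}^i_{\alpha}={\mathbf e}^j_{\alpha}={\mathbf e}_{\alpha}$), so the coordinate change between the two charts is the identity on the linear part. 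Writing any point $P$ with position vector ${\mathbf r}$, one has by~(\ref{it2a}) that ${\mathbf r}_i={\mathbf r}-{\mathbf r}^i_0$ and ${\mathbf r}_j={\mathbf r}-{\mathbf r}^j_0$, hence ${\mathbf r}_j={\mathbf r}_i+({\mathbf r}^i_0-{\mathbf r}^j_0)={\mathbf r}_i+{\mathbf L}_{ij}$; this is exactly $T_{{\mathbf a}}{\mathbf r}_i$ with ${\mathbf a}={\mathbf L}_{ij}$, so the two systems are consistent. This reproduces the transition rule~(\ref{it2}).

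For the converse I would argue by contraposition or directly: assume the systems are consistent, i.e.\ there is a translation vector ${\mathbf a}$ with ${\mathbf r}_j=T_{\mathbf a}{\mathbf r}_i={\mathbf r}_i+{\mathbf a}$ for every point. The general relation between two arbitrary local Cartesian frames sharing no a priori orientation is ${\mathbf r}_j=Q{\mathbf r}_i+{\mathbf b}$ for some orthogonal matrix $Q\in O(3)$ and some vector ${\mathbf b}$. Comparing this with the hypothesis ${\mathbf r}_j={\mathbf r}_i+{\mathbf a}$, valid for all ${\mathbf r}_i$, forces $Q=\mathrm{Id}$ (subtract the values at two distinct points to kill ${\mathbf b}$ and conclude $Q{\mathbf v}={\mathbf v}$ for all ${\mathbf v}$) and then ${\mathbf b}={\mathbf a}$. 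Since $Q=\mathrm{Id}$, the basis vectors satisfy ${\mathbf e}^i_{\alpha}={\mathbf e}^j_{\alpha}$ for each $\alpha$; because the frames are Cartesian, the axes of the two systems are then pairwise parallel, and moreover both inherit the positive orientation of the global frame ($\det Q=+1>0$), which gives conditions (a) and (b).

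I do not anticipate a genuine obstacle here — the statement is essentially a reformulation of the fact that in $E(3)$ a rigid motion whose translational action coincides with a pure translation must itself have trivial rotational part. The only point requiring a little care is making precise the claim that \emph{any} two local Cartesian systems with the same orientation and parallel axes must use literally the same orthonormal basis (not just bases related by a permutation or sign change of axes); this follows because "axes pairwise parallel" plus "positively oriented" plus "orthonormal" pins the basis down uniquely up to nothing, matching the global basis ${\lbrace {\mathbf e}_{\alpha} \rbrace}$ that was fixed once and for all in Section~\ref{sec:state}. With that understood, the proof is a couple of lines in each direction.
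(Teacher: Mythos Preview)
The paper does not actually give a proof of this proposition: it merely states that the claim ``immediately'' follows from the preceding definition of consistency, with no further argument. Your proof is correct and far more explicit than anything the paper provides, handling both implications via the standard decomposition of a Euclidean motion as $Q{\mathbf r}_i+{\mathbf b}$ and forcing $Q=\mathrm{Id}$ from the translation hypothesis.

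The only subtlety --- which you already flag in your final paragraph --- is that $Q=\mathrm{Id}$ establishes ${\mathbf e}^i_\alpha={\mathbf e}^j_\alpha$, i.e.\ the two local bases agree with \emph{each other}, not directly with the global basis $\{{\mathbf e}_\alpha\}$. Conditions (a) and (b) as worded refer to the global frame, so the ``only if'' direction tacitly relies on the paper's standing convention (Section~\ref{sec:state}) that every local Cartesian system is introduced with the same orientation as the global one. Under that convention your argument is complete; without it the proposition itself would need a slight rewording.
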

One can see that local Cartesian coordinates $ \lbrace O_{i}; {{\mathbf r}_{i}} \rbrace $  and $ \lbrace O_{j}; {\mathbf r}_{j} \rbrace $  depicted in Fig.~\ref{fig1} are consistent.  
Note in passing that some authors prefer to use term "parallel coordinate systems" rather than "consistent coordinate systems"~\cite{Steinborn73}.

\begin{definition}
	A translation addition theorem for a given smooth function $u: \Omega ^{-} \to \mathbb{R}_+ $ is a formula expressing its value $u(\mathbf{r} + \mathbf{a})$ in terms of values $u(\mathbf{r})$ and $u(\mathbf{a})$,  where $ \mathbf{r}, \mathbf{r} + \mathbf{a} \in \Omega ^{-} $, and also their derivatives.
\end{definition}
\begin{definition}
	The translation addition theorem I$\to$R yields an explicit formula for converting an irregular ICT $ \bm{X}_{n}^{-} $ given in one origin $ O_j$ into a local expansion with respect to the corresponding regular ICT $ \bm{X}_{n}^{+} $ about a shifted origin $ O_i$.
\end{definition}
The meaning of the addition theorems is to derive corresponding addition formulas. 

Now, we can state and prove our main result in this section.
\begin{theorem}\label{theorem2}
	The irregular to regular translation addition theorem for the irreducible Cartesian tensors  holds true 
	\begin{eqnarray}
		r_j^{-\left( 2n+1\right) }\mathop {{r }_{\gamma _1}^j {\ldots r }_{\gamma
				_n}^j}\limits^{\rule[-.04in]{.01in}{.05in}\rule{.5in}{.01in}%
			\rule[-0.04in]{.01in }{.05in}}  
		=\sum\limits_{k=0}^\infty U^{ij}_{\gamma _1 \ldots \gamma _n\mu
			_1  \ldots \mu _k}\left( {\widehat{\mathbf{L}}}_{ij}\right) \odot^k \mathop {{r}_{\mu _1}^i {\ldots r }_{{\mu _k}}^i}\limits^{\rule[-.04in]{.01in}{.05in}\rule{.5in}{.01in}%
			\rule[-0.04in]{.01in}{.05in}}\,, \label{it3a}
	\end{eqnarray}	
	where the transformation matrix elements read
	\begin{eqnarray*}
		U^{ij}_{\gamma _1 \ldots \gamma _n\mu
			_1  \ldots \mu _k}\left( {\widehat{\mathbf{L}}}_{ij}\right) =\sigma
		_{kn}  L_{ij}^{-\left(n+k+1 \right)} \Lambda _{\gamma _1 \ldots \gamma _n\mu
			_1  \ldots \mu _k}\left( {\widehat{\mathbf{L}}}_{ij}\right)\, , \\
		\sigma _{kn}={\left( -1\right) ^n}\frac{\left[ 2\left( k+n\right) -1\right]
			!!}{{k!\left( 2n-1\right) !!}}\,, \\
		\Lambda _{\gamma _1 \ldots \gamma _n\mu _1 \ldots \mu _k}\left( {\widehat{\mathbf{L}}}_{ij}\right) =\mathop {\hat{L}_{\gamma _1}^{ij}{\ldots}%
			\hat{L}_{\gamma _n}^{ij}\hat{L}_{\mu _1}^{ij}{\ldots}\hat{L}_{\mu _k}^{ij}}\limits^{%
			\rule[-.04in]{.01in}{.05in}\rule{1.04in}{.01in}\rule[-0.04in]{.01in
			}{.05in}}\,, \quad
		\hat{L}_{\mu _m}^{ij}= L_{\mu _m}^{ij} / L_{ij}\, , \quad m \in \mathbb{N}\,
	\end{eqnarray*}
	wherein series (\ref{it3a}) converges uniformly and absolutely if inequalities (\ref{int2e}) hold true.
\end{theorem}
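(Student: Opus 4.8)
The plan is to derive the I$\to$R translation addition theorem (\ref{it3a}) from the degenerate addition theorem, Theorem~\ref{theorem1}, by repeated differentiation. First I would recall that the irregular ICT of order $n$ centered at $O_j$ can be generated from the fundamental solution $1/r_j$ by applying the symmetric traceless part of $n$ gradients; concretely, $r_j^{-(2n+1)}\,\overbrace{r^j_{\gamma_1}\ldots r^j_{\gamma_n}}$ is, up to a known numerical factor depending on $n$, equal to $\overbrace{\partial_{\gamma_1}\ldots\partial_{\gamma_n}}(1/r_j)$ (the overbrace denoting the detracing operation \eqref{it1v}), where the derivatives are taken with respect to the components of $\mathbf r_j$. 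This is the standard fact that irregular solid harmonics are derivatives of the Coulomb kernel; I would state it precisely with the correct constant, since that constant will combine with $\omega_k$ from Theorem~\ref{theorem1} to produce $\sigma_{kn}$.

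Next I would fix a point $P\in\Omega^-\subset\Omega_i^-\cap\Omega_j^-$ and use $\mathbf r_j=\mathbf r_i+\mathbf L_{ij}$ from \eqref{it2}. In Theorem~\ref{theorem1} the roles of the two arguments are symmetric under the standard multipole expansion, so I would write $1/\Vert\mathbf r_i+\mathbf L_{ij}\Vert$ in the regime $L_{ij}>\Vert\mathbf r_i\Vert$ (which is exactly what the hypotheses \eqref{int2e} guarantee on the relevant overlap region, since on $\Omega_i^-$ one has $\xi_i\geq 1$ and the non-touching condition forces $\Vert\mathbf r_i\Vert<L_{ij}$ after suitable restriction — this point needs care and is discussed below). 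Expanding via Theorem~\ref{theorem1} gives $1/\Vert\mathbf r_i+\mathbf L_{ij}\Vert = \sum_k \omega_k L_{ij}^{-(2k+1)}\,\overbrace{L^{ij}_{\mu_1}\ldots L^{ij}_{\mu_k}}\odot^k\overbrace{r^i_{\mu_1}\ldots r^i_{\mu_k}}$, where the $\mathbf L_{ij}$-dependent factor is $\omega_k L_{ij}^{-(2k+1)}\overbrace{L^{ij}_{\ldots}}=\omega_k L_{ij}^{-(k+1)}\Lambda$-type up to normalization by $\widehat{\mathbf L}_{ij}$.

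Then I would apply the detraced derivative operator $\overbrace{\partial^i_{\gamma_1}\ldots\partial^i_{\gamma_n}}$ to both sides. On the left it produces (times the known constant) the irregular ICT $r_j^{-(2n+1)}\overbrace{r^j_{\gamma_1}\ldots r^j_{\gamma_n}}$, because $\partial^i_\gamma$ acting on a function of $\mathbf r_i+\mathbf L_{ij}$ equals $\partial^j_\gamma$ of the same function. On the right, the differentiation acts only on the factors $\overbrace{r^i_{\mu_1}\ldots r^i_{\mu_k}}$; here one uses the algebraic identity that $\overbrace{\partial_{\gamma_1}\ldots\partial_{\gamma_n}}\overbrace{r_{\mu_1}\ldots r_{\mu_k}}$ vanishes for $k<n$ and, for $k\geq n$, contracts $n$ of the $\mu$-indices against the $\gamma$-indices leaving a lower-order ICT, reindexing $k\mapsto k+n$ so that the surviving sum runs over $k\geq 0$ with $k+n$ remaining $\mathbf r_i$-factors. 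Collecting the constant from the left, $\omega_{k+n}$ from the expansion, and the combinatorial constant from the derivative identity yields exactly $\sigma_{kn}$ and the factor $\Lambda_{\gamma_1\ldots\gamma_n\mu_1\ldots\mu_k}(\widehat{\mathbf L}_{ij})L_{ij}^{-(n+k+1)}$, i.e.\ $U^{ij}_{\gamma_1\ldots\gamma_n\mu_1\ldots\mu_k}$. Finally I would address convergence: the differentiated series converges absolutely and uniformly on compact subsets of the region $\Vert\mathbf r_i\Vert<L_{ij}$ by the standard theorem on term-by-term differentiation of uniformly convergent series of harmonic functions (Weierstrass), and the hypotheses \eqref{int2e}, $\varepsilon_{ij},\varepsilon_{ji}<1$, are precisely what places the relevant evaluation points inside that region.

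The main obstacle I expect is the bookkeeping of the symmetric-traceless (detracing) operations: one must show carefully that applying the detraced gradient $\overbrace{\partial_{\gamma_1}\ldots\partial_{\gamma_n}}$ to the product $\overbrace{r_{\mu_1}\ldots r_{\mu_k}}$ reproduces the fully detraced tensor $\Lambda$ with the correct scalar coefficient, since detracing does not commute naively with multiplication and one is differentiating a traceless tensor — the cross terms involving $\nabla^2$ vanish by harmonicity of the generating function, but this must be invoked explicitly. A secondary technical point is verifying that the convergence domain dictated by \eqref{int2e} genuinely contains the overlap $\Omega^-_{ij}$ restricted to where the local expansion about $O_i$ is used, i.e.\ reconciling the pointwise condition $\Vert\mathbf r_i\Vert<L_{ij}$ with the microstructure constraint \eqref{Zv00b}; this is where one uses that the expansion \eqref{it3a} is only needed to enforce the boundary condition on $\partial\Omega_i^1$, where $\xi_i=1$ and hence $\Vert\mathbf r_i\Vert=R_i<L_{ij}$ by $h_{ij}>0$.
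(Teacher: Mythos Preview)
Your strategy is sound and will work, but it takes a harder road than the paper. The paper uses the same starting point---write the irregular ICT as $\frac{(-1)^n}{(2n-1)!!}\partial_{r^j_{\gamma_1}}\ldots\partial_{r^j_{\gamma_n}}(1/r_j)$ and note $\partial_{r^j_\gamma}=\partial_{r^i_\gamma}$---but then makes one further observation you miss: since $\mathbf r_j=\mathbf r_i+\mathbf L_{ij}$, one also has $\partial_{r^i_\gamma}=-\partial_{L^{ij}_\gamma}$ on functions of $\mathbf r_j$. The paper therefore applies the $n$ derivatives to the $\mathbf L_{ij}$-dependent factor rather than to the $\mathbf r_i$-dependent one. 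Writing the degenerate theorem in the form \eqref{it6rn}, that factor is already $\frac{(-1)^k}{k!}\partial_{L_{\mu_1}}\ldots\partial_{L_{\mu_k}}(1/L_{ij})$, so the extra $n$ derivatives simply raise the order to $n+k$ and produce, by definition \eqref{it1v}, the combined ICT $\Lambda_{\gamma_1\ldots\gamma_n\mu_1\ldots\mu_k}(\widehat{\mathbf L}_{ij})$ times $L_{ij}^{-(n+k+1)}$ with the correct constant $\sigma_{kn}$, and the $\mathbf r_i$-factors are untouched. No reindexing, no derivative-of-ICT identity, no detracing bookkeeping.

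Your route instead differentiates the regular ICT $\overbrace{r^i_{\mu_1}\ldots r^i_{\mu_k}}$. That can be pushed through, but the identity you state---that $\partial_{\gamma_1}\ldots\partial_{\gamma_n}$ acting on it ``contracts $n$ of the $\mu$-indices against the $\gamma$-indices leaving a lower-order ICT''---is not literally true at the tensor level (e.g.\ $\partial_\gamma\overbrace{r_\mu r_\nu}=\delta_{\gamma\mu}r_\nu+\delta_{\gamma\nu}r_\mu-\tfrac{2}{3}r_\gamma\delta_{\mu\nu}$ is not an ICT in all three indices). It becomes correct only after the $\mu$-contraction against the $L$-ICT and after collecting the resulting terms back into a single higher-order ICT of $\widehat{\mathbf L}_{ij}$, which is exactly the ``bookkeeping'' obstacle you flag. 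The paper's swap of differentiation variable eliminates that obstacle entirely. Your convergence discussion matches the paper's.
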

\begin{proof} 
	Rewriting Eq. (\ref{it2}) in the relevant local Cartesian coordinates of $j$-th and $i$-th spheres we have 
	\begin{align}
		r_{\gamma _\nu }^j=r_{\gamma _\nu }^i+L_{\gamma _\nu }^{ij}\,,
		\label{it3b}
	\end{align}
	where $L_{\gamma _\nu }^{ij}$ are coordinates of vector $ {\mathbf L}_{ij} $ ($\gamma _\nu =\overline{1,3}$). Linear dependence (\ref{it3b}) leads to the evident relations $\partial
	_{r_{\gamma _\nu }^j}=\partial _{r_{\gamma _\nu }^i}$. 	
	Using them in Eq. (\ref{it1v}), one readily obtains
	\begin{eqnarray}
		r_j^{-\left( 2n+1\right) }\mathop {r_{\gamma _1}^j {\ldots r}_{\gamma _n}^j}%
		\limits^{\rule[-.04in]{.01in}{.05in}\rule{.45in}{.01in}%
			\rule[-0.04in]{.01in
			}{.05in}}=\frac{\left( -1\right) ^n}{\left( 2n-1\right) !!}\partial
		_{r^i_{\gamma _1}}\ldots\partial _{r^i_{\gamma _n}}\left( \frac 1{\left\| \mathbf{r}%
			_i-\mathbf{L}_{ij}\right\| }\right)  \nonumber	
	\end{eqnarray}	
	or, taking into account that $\partial _{r_{\gamma _\nu }^i}=-\partial_{L_{\gamma _\nu }^{ij}} $,
	one has the relation
	\begin{eqnarray}
		r_j^{-\left( 2n+1\right) }\mathop {r_{\gamma _1}^j{\ldots r}_{\gamma _n}^j}%
		\limits^{\rule[-.04in]{.01in}{.05in}\rule{.45in}{.01in}%
			\rule[-0.04in]{.01in}{.05in}}  
		=\frac 1{\left( 2n-1\right) !!}\partial _{L_{\gamma
				_1}^{ij}}\ldots \partial _{L_{\gamma _n}^{ij}}\left( \frac 1{\left\| \mathbf{r}_i-%
			\mathbf{L}_{ij}\right\| }\right) \,.  \label{it5}
	\end{eqnarray}	
	Theorem \ref{theorem1} yields 
	\begin{eqnarray}
		\frac{1}{\Vert \mathbf{r}_i-\mathbf{L}_{ij}\Vert}	=\sum\limits_{k=0}^\infty \frac{\left( -1\right) ^k}{k!}  
		\partial _{L_{\gamma _1}^{ij}} \ldots \partial _{L_{\gamma
				_k}^{ij}}\left(\frac{1}{L_{ij}}
		\right) \odot^k \mathop {r_{\gamma _1}^i{ \ldots r}_{\gamma _k}^i}\limits^{%
			\rule[-.04in]{.01in}{.05in}\rule{.5in}{.01in}\rule[-0.04in]{.01in
			}{.05in}} \,. \label{it6rn}
	\end{eqnarray}	
	Substituting this expansion into Eq. (\ref{it5}) and taking into consideration the uniform convergence of the series (\ref{it6rn}) we arrive at the required translation addition theorem (\ref{it3a}). 
\end{proof}
Note that, following Miller, the transformation matrix $ U^{ij}_{\gamma _1 \ldots \gamma _n\mu
	_1  \ldots \mu _k} $ is commonly termed the {\it mixed-basis matrix}~\cite{Miller77}.

For further investigation of the problem (\ref{di1})-(\ref{di3}) it is expedient to recast expression (\ref{it3a}) in a dimensionless form
\begin{eqnarray}
	\xi _j^{-\left( 2n+1\right) }\mathop {{\xi }_{\gamma _1}^j {\ldots\xi }_{\gamma
			_n}^j}\limits^{\rule[-.04in]{.01in}{.05in}\rule{.5in}{.01in}%
		\rule[-0.04in]{.01in }{.05in}}  
	=\sum\limits_{k=0}^\infty \widetilde{U}^{ij}_{\gamma _1 \ldots \gamma _n\mu
		_1  \ldots \mu _k}\odot^k \mathop {{\xi }_{\mu _1}^i {\ldots \xi }_{{\mu _k}}^i}\limits^{\rule[-.04in]{.01in}{.05in}\rule{.5in}{.01in}%
		\rule[-0.04in]{.01in}{.05in}}\,, \label{di4b}
\end{eqnarray}	
where dimensionless mixed-basis matrix elements are
$$
\widetilde{U}^{ij}_{\gamma _1 \ldots \gamma _n\mu
	_1  \ldots \mu _k}:=\sigma
_{kn} \varepsilon _{ij}^{n+1}\varepsilon _{ji}^k\Lambda _{\gamma _1 \ldots \gamma _n\mu
	_1  \ldots \mu _k}\left({\widehat{\mathbf{L}}}_{ij}\right)\, .$$
Particularly dimensionless form of the degenerate translation addition theorem (\ref{it6r}) reads
\begin{align}
	\xi _j^{-1 }=\sum\limits_{k=0}^\infty \widetilde{U}^{ij}_{\mu
		_1  \ldots \mu _k} \odot^k \mathop {{\xi }_{\mu _1}^i {\ldots \xi }_{{\mu _k}}^i}\limits^{\rule[-.04in]{.01in}{.05in}\rule{.5in}{.01in}%
		\rule[-0.04in]{.01in
		}{.05in}}\,. \label{di4c}	
\end{align}

The above presented proof of the I$\to$R TAT in terms of ICT seems to be the simplest among previously known theorems for solid harmonics in terms of polar spherical coordinates~\cite{Caola,Weniger85}. It is important to stress that using connection between spherical solid harmonics and ICT (see Appendix~\ref{sec:appendixCon}) the proved addition theorem (\ref{it3a}) can be reduced  to that for the spherical solid harmonics written in a spherical coordinate system. 

Finally, note that to solve diffusion problems under proper mixed boundary conditions~\cite{Traytak07}, when intrinsic rates are functions on the angular local coordinates $ \kappa_i \left(\theta_i, \phi_i \right) $, we need the {\it rotational addition theorem for the ICT}. The latter theorem may be proved in entirely similar way.

\section{Solution to the problem}\label{sec:solution}

In this section, we implement the GMSV algorithm described in Sec.~\ref{sec:method} using ICT technique. Moreover we shall develop here the version of the GMSV elaborated previously in Refs.~\cite{Traytak91,Traytak92,Traytak01}.

\subsection{The GMSV algorithm implementation}

For clarity sake we organize our treatment in according to the method outline~\ref{subsec:methodOut} distinguishing   8 steps inherent the GMSV.

\textbf{Step 1.} The mixed boundary value problem (\ref{di1})-(\ref{di3}) is well posed and has a unique classical solution~\cite{Courant53,Tikhonov63,Krutitskii99}.

\textbf{Step 2.} First, we can apply superposition principle. Decompose the general solution
$ u_N \left(P \right) $ of above diffusion problem (\ref{di1})-(\ref{di3}) with respect to the $i$-th sink, considering the set of {\it partial solutions} to Eq. (\ref{di1}), which are defined on the charts: $ \left\{ u_i \left({\bm \xi}_i \right) \right\}_{i=1}^N $ ($u_i: \Omega_i^- \to \left(0,1\right] $). Thus, in the local Cartesian coordinates $\{ O_i; {\bm \xi}_i\} $ we can write
\begin{align}
	u_N \left({\bm \xi}_i \right) = u_i \left({\bm \xi}_i  \right) +  \sum\limits_{j\left( \not =i\right) =1}^N u_j \left({\bm \xi}_j  \right)\quad \text{in} \quad \Omega^-_i \,. \label{di2a}
\end{align}
Here, by virtue of Eqs. (\ref{it2}) and (\ref{int2e}), one has evident relations:  $ \varepsilon_{ij}{\bm \xi}_j  = \varepsilon_{ji}{\bm \xi}_i  + {\widehat{\mathbf{L}}}_{ij} $ (see Fig.~\ref{fig1}). From the physical standpoint representation (\ref{di2a}) means that to find the desired reaction rate on the $i$-th sink we need to know only local behavior of the field $ u_N \left(P \right) $ in a vicinity of this sink.

Clearly, by definition functions $ u_i \left({\bm \xi}_i \right) $ are harmonic in charts $ \Omega ^{-}_i $, i.e.
\begin{eqnarray}
	{{\bm \nabla}}_{{\bm \xi}_i}^2 u_i\left({\bm \xi}_i \right) = 0 \,, \label{int2}\\
	\left. u_i\left({\bm \xi}_i \right) \right|_{\xi _i \rightarrow \infty } \rightrightarrows 0\, .\label{int3}
\end{eqnarray}
Therefore, the original diffusion boundary value problem (\ref{di1})-(\ref{di3}) in the manifold $\Omega^-$ (intersection of all charts $ \Omega ^{-}_i $) is reduced to $ N $ coupled problems for $u_i\left({\bm \xi}_i \right)$ in simpler charts $\Omega^-_i$, with the advantages of the local Cartesian coordinates $\{ O_i; {\bm \xi}_i\} $.  

Alternatively, for inhomogeneous partially reflecting boundary conditions (\ref{di2}) in an exterior $ \epsilon $-neighborhood of the boundary $ \partial \Omega_i $: $ \Omega ^{-}_i\left(\epsilon \right):= \{{\bm \xi}_i: 1 < \xi_i  < 1 +\epsilon\} \subset \Omega ^{-}_i $  it is convenent to recast solution $ u_N \left({\bm \xi}_i \right) $ as
\begin{align}
	u_N \left({\bm \xi}_i \right) = u_{\left( 0\right) }^i {\left( \xi_i \right) } + \delta u_i \left({\bm \xi}_i  \right)\, \quad \text{in} \quad \Omega ^{-}_i\left(\epsilon \right) \,,   \label{di2b}
\end{align}
where $ u_{\left( 0\right) }^i {\left( \xi_i \right) } $ is the unperturbed one-sink solution and $ \delta u_i \left({\bm \xi}_i  \right) $ is a perturbation due to influence of each from other $ N-1 $ sinks for $  {j\left( \not =i\right) =\overline{1, N}}$. Owing to Eq. (\ref{di2}) we assume that these solutions obey inhomogeneous and homogeneous partially reflecting boundary conditions, respectively
\begin{eqnarray}	
	- \left. \left( \partial _{\xi _i}  u_{\left( 0\right) }^i -\kappa_{i} u_{\left( 0\right) }^i\right)\right| _{\partial \Omega_i^1}=\kappa_{i}  \,, \label{di3a}\\
	\left. \left( \partial _{\xi _i} \delta u_i -\kappa_{i}\delta u_i\right)\right| _{\partial \Omega_i^1}=0\,.\label{di3b}
\end{eqnarray}
Therewith the unperturbed by diffusive interaction solution
\begin{align}
	u_{\left( 0\right) }^i {\left( \xi_i \right) }= J_0^i  {\xi_i}^{-1}\, \label{di4a0}
\end{align}
obeys inhomogeneous boundary condition (\ref{di3a}). Here $ J_0^i $  is the $i$-th Collins-Kimball rate (\ref{dm1na2}).  

One important point to emphasis is that Eqs. (\ref{di2a}) and (\ref{di2b}) are global and local representations of the same function defined on the manifold:  $u_N: \Omega^- \to \left(0,1\right] $.

\textbf{Step 3.} It can be shown that tensors fields introduced by Eqs. (\ref{it1vd1}) and (\ref{it1vd2}) are families of irregular  $ \lbrace \mathbf{X}_{n}^{-}\left({\bm \xi}_i  \right) \rbrace_{n=0}^{\infty} $ and regular $ \lbrace \mathbf{X}_{n}^{+}\left({\bm \xi}_i  \right) \rbrace_{n=0}^{\infty} $ solid Cartesian harmonics, which form desired basis functions (a complete and orthogonal sets) in domains $ \Omega^-_i $ and $ \Omega^+_i $, respectively.
Thus, we can represent a partial solution $ u_i \left({\bm \xi}_i  \right) $ with the help of the irregular ICT, satisfying regularity condition at infinity 
\begin{align} 
	u_i \left({\bm \xi}_i  \right) = \sum\limits_{n=0}^\infty \xi _i^{-\left( 2n+1\right) }
	A_{\gamma _1...\gamma _n}^i \odot^n  \mathop {{\xi}%
		_{\gamma _1}^i{\ldots \xi}_{\gamma _n}^i}\limits^{\rule[-.04in]{.01in}{.05in}%
		\rule{.5in}{.01in}\rule[-0.04in]{.01in
		}{.05in}}\quad \mbox{in} \quad \Omega^-_i \,. \label{di4}
\end{align}

On the other hand, in an exterior neighborhood of $i$-th sink boundary $\partial\Omega_i$ we can represent the perturbation function in Eq. (\ref{di2a}) as a series with respect to the regular ICT 
\begin{align} 
	\sum\limits_{j\left( \not =i\right) =1}^N u_j \left({\bm \xi}_j  \right) = \sum\limits_{n=0}^\infty B_{\gamma _1 \ldots \gamma _n}^i \odot^n \mathop {{\xi
		}_{\gamma _1}^i{\ldots \xi }_{\gamma _n}^i} \limits^{\rule[-.04in]{.01in}{.05in}%
		\rule{.47in}{.01in}\rule[-0.04in]{.01in }{.05in}}\,. \label{di4a}
\end{align}
In Eqs. (\ref{di4}) and (\ref{di4a}) $A_{\gamma _1 \ldots \gamma _n}^i$ and $B_{\gamma _1 \ldots \gamma _n}^i$ are tensor coefficients to be determined from the boundary conditions (\ref{di2}).

\textbf{Step 4.} Theorem \ref{theorem2} in the dimensionless form (\ref{di4b}) gives the required translation addition theorem to tackle the diffusion problem (\ref{di1})-(\ref{di3}). 

\textbf{Step 5.} Now, by means of dimensionless translation addition theorem (\ref{di4b}), we satisfy the Robin boundary conditions (\ref{di2}). Hence,  taking advantage of the ICT linear independence, we can express the coefficients $A_{\gamma _1 \ldots \gamma _n}^i$ in terms of $B_{\gamma_1 \ldots \gamma _n}^i$ 
\begin{eqnarray}
	A_0^i=J_0^i \left(1-B_0^i\right) \quad \mbox{for} \quad  n=0 \, , \label{di4bc}\\
	A_{\gamma _1\ldots \gamma _n}^i=\frac{n-\kappa_{i}}{1+n+\kappa_{i}}B_{\gamma _1 \ldots \gamma _n}^i%
	\quad \mbox{for} \quad n \in {\mathbb N} \, . \label{di4cc}
\end{eqnarray}

With the aid of these connections formula (\ref{di2b}) for the perturbed diffusion field in an exterior neighborhood of the surface $ \partial \Omega_i $ ($ i = \overline{1, N} $) yields
\begin{eqnarray}
	\delta u_i \left({\bm \xi}_i  \right)  = \sum\limits_{k=0}^\infty\left[ 1+\left( \frac{k-\kappa_{i}}{1+k+\kappa_{i}}\right) \xi
	_i^{-\left( 2k+1\right)
	}\right] B_{\gamma
		_1 \ldots \gamma _k}^i \odot^k \mathop {{\xi}_{\gamma _1}^i{ \ldots \xi}_{\gamma _k}^i}\limits^{%
		\rule[-.04in]{.01in}{.05in}\rule{.47in}{.01in}\rule[-0.04in]{.01in
		}{.05in}}\, . \label{di6} 
\end{eqnarray}
One can can easily verify that this solution obeys homogeneous partially reflecting boundary condition (\ref{di3b}).

And, in addition, using Eq. (\ref{di4}) under relationships (\ref{di4bc}) and (\ref{di4cc}) for domaind $ \Omega^-_j $ we immediately write down an auxiliary relation
\begin{eqnarray}
	\sum\limits_{j\left( \not =i\right) =1}^N u_j \left({\bm \xi}_j  \right) = \sum\limits_{j \left( \not =i\right) =1}^N u_{\left( 0\right) }^j {\left( \xi_j \right) }  \nonumber\\	
	+\sum\limits_{j \left( \not =i\right) =1}^N 
	\sum\limits_{k=0}^\infty \left( \frac{k-\kappa_{j}}{1+k+\kappa_{j}}\right) \xi _j^{-\left(
		2k+1\right)} B_{\gamma
		_1 \ldots \gamma _k}^j \odot^k\mathop {{\xi}_{\gamma _1}^j{\ldots\xi}_{\gamma _k}^j}\limits^{%
		\rule[-.04in]{.01in}{.05in}\rule{.45in}{.01in}\rule[-0.04in]{.01in
		}{.05in}}  \,. \qquad \label{di7}  
\end{eqnarray}
It is noteworthy that all terms here satisfy the regularity condition at infinity (\ref{di3}) as opposed to the local expression (\ref{di6}).

\textbf{Step 6.} Now in the local Cartesian coordinates $\{ O_i; {\bm \xi}_i\} $ one can carry out a {\it self-consistent procedure} to find unknown tensor coefficients $B_{\gamma _1 \ldots \gamma _k}^i$ appearing in formulas (\ref{di6}) and (\ref{di7}).
First we apply the dimensionless translation addition theorem (\ref{di4b}) including its degenerate form (\ref{di4c}) to the right-hand side of Eq. (\ref{di7}) in order to recast it in the local coordinates $\{ O_i; {\bm \xi}_i\} $. Then with allowance made for expression (\ref{di4a}) and orthogonality property of the ICT on the unit spheres $ \partial \Omega_i^1 $ (\ref{Asur2}) we derive the required self-consistent {\it infinite system of linear algebraic equations} (ISLAE) of the II kind with respect to unknown tensor coefficients $B_{\gamma _1 \ldots \gamma _k}^i$. We should note that these infinite systems are commonly referred to as the {\it resolving ISLAE}~\cite{Kushch13,Traytak19}.

According to the statement of the diffusion problem given in Sec. \ref{sec:state} we should distinguish three particular cases of the resolving ISLAE. 

\textbf{(a)} For partialy reflecting sinks (when $0 < \kappa_{i} < \infty $) the resolving ISLAE is
\begin{eqnarray}
	B_{\gamma _1 \ldots \gamma _k}^i = B_{\gamma _1 \ldots \gamma _k}^{i\left(
		0\right)} + \sum\limits_{j \left( \neq i \right) =1}^N \sum\limits_{l=0}^\infty \left( \frac{k-\kappa_{j}}{1+k+\kappa_{j}}%
	\right) \nonumber\\	
	\times \widetilde{U}^{ij}_{\gamma _1 \ldots \gamma _k\mu
		_1  \ldots \mu _l} \odot^l  B_{\mu _1 \ldots \mu _l}^j \,, \quad  k = \overline{0, \infty}\,. \label{di9w} 
\end{eqnarray}
Hereinafter for short we denoted
\begin{align}
	B_{\gamma _1 \ldots \gamma _k}^{i\left( 0\right) }= \sum\limits_{j \left( \neq i \right) =1}^N J
	_0^j \widetilde{U}^{ij}_{\gamma
		_1  \ldots \gamma _k} \,. 	\label{di9w0}  
\end{align}
Another two important ISLAE are special cases of (\ref{di9w}).

\textbf{(b)} In case of fully absorbing sinks (when $ \kappa_{i} \to \infty $) the resolving ISLAE is simplified to
\begin{eqnarray}
	B_{\gamma _1 \ldots \gamma _k}^i = B_{\gamma _1 \ldots \gamma _k}^{i\left(
		0\right)} - \sum\limits_{j \left( \neq i \right) =1}^N \sum\limits_{l=0}^\infty  	
	\widetilde{U}^{ij}_{\gamma _1 \ldots \gamma _k\mu
		_1  \ldots \mu _l} \odot^l  B_{\mu _1 \ldots \mu _l}^j \,,  \label{di9S}
\end{eqnarray}
where $ k = \overline{0, \infty} $.

\textbf{(c)} For fully reflecting (when $ \kappa_{i} \to 0 $) the resolving ISLAE (\ref{di9w}) obviously yields
\begin{align}
	B_{\gamma _1 \ldots \gamma _k}^i =  \left( \frac{k}{1+k}\right)\sum\limits_{j \left( \neq i \right) =1}^N \sum\limits_{l=0}^\infty  
	\widetilde{U}^{ij}_{\gamma _1 \ldots \gamma _k\mu
		_1  \ldots \mu _l} \odot^l  B_{\mu _1 \ldots \mu _l}^j \,,  \label{di9a}
\end{align}
where $ k = \overline{1, \infty} $.

\textbf{Step 7.} Consider here the most general case of arrays with partialy reflecting sinks. In other words to find $ B_{\gamma _1 \ldots \gamma _k}^i $ the system (\ref{di9w}) requires the inversion of an corresponding infinite-dimensional matrix~\cite{Galanti16a,Piazza19}. So we deal with a typical problem which may be solved by methods of functional analysis~\cite{Kantorovich,Akilov}. For instance, provided the ISLAE (\ref{di9w}) is of {\it normal Poincare-Koch type} it implies that the {\it Fredholm-Hilbert alternative} holds true~\cite{Kantorovich}. Hence, a unique solution of (\ref{di9w}) exists and it may be found by the {\it method of reduction} to any degree of accuracy~\cite{Traytak19}. The latter means that the resolving ISLAE (\ref{di9w}) should be truncated and then inverted obtained finite system numerically to get approximations of the coefficients $  B_{\gamma _1 \ldots \gamma _k}^i$. Under some conditions the resolving ISLAE (\ref{di9w}) possesses solution which may be obtained by means of simple iterations~\cite{Galanti16a} (see also discussion below in Sec.~\ref{sec:convergence}).  Obviously the zero and the first iterations for $ B_0^i $ are
\begin{eqnarray}
	B_{0}^{i\left( 0\right)}= \sum\limits_{j \left( \neq i \right) =1}^N J
	_0^j \widetilde{U}^{ij}_{0}\,,   \label{di10a}\\	
	B_0^{i(1)} = B_{0}^{i\left(
		0\right)}-\sum\limits_{j \left( \neq i \right) =1}^N J
	_0^j \widetilde{U}^{ij}_{0}B_{0}^{j\left( 0\right)} - \sum\limits_{j \left( \neq i \right) =1}^N\sum\limits_{l=1}^\infty J
	_0^j  
	\widetilde{U}^{ij}_{\mu
		_1  \ldots \mu _l} \odot^l  B_{\mu _1 \ldots \mu _l}^{j\left(
		0\right)} \,.   \label{di10b} 	
\end{eqnarray}
Here and below $ \widetilde{U}^{ij}_{0} $ is the monopole approximation to the  mixed-basis matrix.	
Interestingly that in the course of solving the ISLAE (\ref{di9w}) by simple iterations angular dependence arises starting from terms ${\mathcal{O}}\left( \varepsilon ^4 \right) $.

\textbf{Step 8.} Once the local concentration $ u_N \left(P\right) $ is found one can deduce the total flux of diffusing particles $ B $ onto the $ i $-th sink. At that one should utilize the local representation (\ref{di2b}) of function $ u_N \left(P\right)$. Due to orthogonality condition for the ICT (\ref{Asur2}) explicit surface integration on the unit sphere in the general formula (\ref{dm1n}) will yield the desired rates (\ref{dm1na}).
\begin{remark}
	Quite apparently, functions $ \mathop {{\hat{\xi}}_{\gamma _1}^i\ldots {\hat{\xi}}_{\gamma _n}^i}\limits^{%
		\rule[-.04in]{.01in}{.05in}\rule{.47in}{.01in}\rule[-0.04in]{.01in
		}{.05in}} $ are harmonics polynomianls defined on the $ i $-th unit sphere $ \partial \Omega_i^1 $ and, in accordance with the Weierstrass approximation theorem, 
	any function $w_i {\left( {\hat{\bm \xi}}_i\right)} \in C(\partial \Omega_i^1) $ may be expanded in the series 
	\begin{align}
		w_i \left({{\hat{\bm \xi}}}_i\right)
		= \sum\limits_{n=0}^\infty \alpha _{\gamma _1 \ldots \gamma _n}^i \odot^n \mathop {{\hat{\xi}}_{\gamma _1}^i\ldots {\hat{\xi}}_{\gamma _n}^i}\limits^{%
			\rule[-.04in]{.01in}{.05in}\rule{.45in}{.01in}\rule[-0.04in]{.01in
			}{.05in}}\,, \label{sur2}
	\end{align}
	where $ \alpha _{\gamma _1 \ldots \gamma _n}^i $ are some constant tensor coefficients to be determined by means of the orthogonality condition (\ref{Asur2}).  One can see that expression (\ref{sur2}) makes it possible to apply the ICT method being considered to the diffusion problems under general Robin boundary conditions (\ref{tds1a}).  	 
\end{remark}

\subsection{Main results obtained by using of the ICT technique}\label{subsec:Results}

Within this subsection we present our main results, which are conveniently to formulate in a form of two theorems. 

Thus, for an arbitrary sinks microstructure $ X^{(N)} $ and reactivities set $ \kappa^{(N)} $ we have proved the following assertions.

\textbf{(a)} On global representation of the solution $ u_N \left(P \right) $.
\begin{theorem}\label{theorem3ab}
	Using the local Cartesian coordinates $\{ O_j; {\bm \xi}_j\} $ ($ j=\overline{1,N} $) the solution $ u_N \left(P \right) $ ($ P \in \Omega ^{-} $, see Fig.~\ref{fig1}) to the exterior Robin boundary value problem (\ref{di1})-(\ref{di3}) may be repesented as follows:
	\begin{align}
		u_N \left(P \right) = \sum\limits_{j =1}^N & \left[  J_0^j \left(1-B_0^j\right){\xi_j}^{-1} + 
		\sum\limits_{k=0}^\infty \left( \frac{k-\kappa_{j}}{1+k+\kappa_{j}}\right)  \right. \nonumber\\
		&\left. \quad  \times  \xi _j^{-\left(
			2k+1\right)} B_{\gamma
			_1 \ldots \gamma _k}^j \odot^k\mathop {{\xi}_{\gamma _1}^j{\ldots\xi}_{\gamma _k}^j}\limits^{%
			\rule[-.04in]{.01in}{.05in}\rule{.45in}{.01in}\rule[-0.04in]{.01in
			}{.05in}}  \right] \,, \label{mr1}
	\end{align}
	where tensor coefficients $B_{\gamma _1 \ldots \gamma _k}^i$ ($k = \overline{0, \infty}$) are solution of the resolving ISLAE (\ref{di9w}).
\end{theorem}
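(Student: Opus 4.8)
The plan is to obtain (\ref{mr1}) as a direct consequence of Steps~1--6 of the GMSV algorithm set out in Sec.~\ref{sec:solution}, so that the proof reduces to assembling those pieces in the correct order. First I would invoke Step~1: the exterior Robin problem (\ref{di1})--(\ref{di3}) is well posed, hence there is a unique classical solution $u_N$, harmonic in $\Omega^-$ and tending uniformly to zero at infinity. Next I would apply the linear superposition principle on the configuration manifold (Step~2): in each chart $\Omega_i^-$ the solution decomposes as in (\ref{di2a}), $u_N = u_i({\bm \xi}_i) + \sum_{j(\neq i)=1}^N u_j({\bm \xi}_j)$, where every partial field $u_j$ is harmonic in its own chart $\Omega_j^-$ and satisfies the regularity condition at infinity, cf.\ (\ref{int2})--(\ref{int3}). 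Since $\Omega^- = \bigcap_i \Omega_i^-$, globally this is the statement $u_N = \sum_{j=1}^N u_j$ on $\Omega^-$, which already fixes the outer structure of (\ref{mr1}) as a sum over $j$.

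The second block of the argument makes each partial field explicit. By Step~3, the irregular solid Cartesian harmonics $\{\mathbf{X}_n^-({\bm \xi}_j)\}_{n\geq 0}$ form a complete orthogonal system of functions harmonic in $\Omega_j^-$ and regular at infinity, so every $u_j$ admits the irregular expansion (\ref{di4}) with tensor coefficients $A_{\gamma_1\ldots\gamma_n}^j$, convergent in $\Omega_j^- \supseteq \Omega^-$. Working in an exterior $\epsilon$-neighborhood of a fixed surface $\partial\Omega_i$, I would re-expand $\sum_{j(\neq i)=1}^N u_j$ into regular ICT centered at $O_i$ using the dimensionless translation addition theorem of Theorem~\ref{theorem2}, eq.~(\ref{di4b}), together with its degenerate form (\ref{di4c}); this produces (\ref{di4a}) with coefficients $B_{\gamma_1\ldots\gamma_n}^i$. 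The rearrangements involved (term-by-term re-expansion, interchange of the $j$-sum with the $n$-sum) are admissible because the series of Theorem~\ref{theorem2} converge uniformly and absolutely under the non-overlap condition (\ref{Zv00b}), equivalently (\ref{int2e}). Imposing the Robin condition (\ref{di2}) on the unit sphere $\partial\Omega_i^1$ and using the linear independence of the ICT on $\partial\Omega_i^1$, I would match coefficients rank by rank to get (\ref{di4bc}) for $n=0$ and (\ref{di4cc}) for $n\in\mathbb{N}$, expressing $A^i$ in terms of $B^i$.

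To finish, I would substitute (\ref{di4bc})--(\ref{di4cc}) back into the expansion (\ref{di4}) of $u_j$ written in its own chart $\{O_j;{\bm \xi}_j\}$: the monopole term becomes $J_0^j(1-B_0^j)\,\xi_j^{-1}$ while each higher rank reproduces the corresponding summand of (\ref{mr1}) with prefactor $(k-\kappa_j)/(1+k+\kappa_j)$, exactly as in the auxiliary relation (\ref{di7}); summing over $j=1,\ldots,N$ and recalling $u_N=\sum_j u_j$ yields the asserted representation. Finally, the tensor coefficients $B_{\gamma_1\ldots\gamma_k}^i$ appearing in (\ref{mr1}) are not arbitrary: substituting the re-expanded partial fields into the boundary data of all $N$ connected components and projecting onto the ICT basis by the orthogonality relation (\ref{Asur2}) (Step~6) forces them to be the unique solution of the resolving ISLAE (\ref{di9w}), whose solvability I would obtain from its normal Poincar\'e--Koch structure and the Fredholm--Hilbert alternative (Step~7).

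I expect the real obstacle to lie not in this bookkeeping but in making Step~2 rigorous: one must show that the unique global harmonic field $u_N$ genuinely splits into $N$ partial fields each harmonic and regular at infinity in its own chart, and that the ICT series (\ref{di4}) for $u_i$ represents $u_N - \sum_{j\neq i} u_j$ all the way up to $\partial\Omega_i$ --- not merely far from the neighboring sinks --- so that the re-expansions and the restriction to $\partial\Omega_i^1$ are legitimate. This is most cleanly done by reversing the logic: construct the candidate series (\ref{mr1}) from a solution of the ISLAE (\ref{di9w}), verify it is harmonic, decays at infinity, and satisfies the Robin conditions, and then identify it with $u_N$ by uniqueness. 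In that route the crux becomes the solvability of (\ref{di9w}) (Step~7) together with enough uniform convergence --- guaranteed by Theorem~\ref{theorem2} and condition (\ref{Zv00b}) --- to differentiate the series term by term on the boundary; everything else is routine.
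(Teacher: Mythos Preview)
Your proposal is correct and follows essentially the same route as the paper: Theorem~\ref{theorem3ab} is presented there not with a separate proof but as the direct outcome of Steps~1--6 of the GMSV algorithm in Sec.~\ref{sec:solution}, i.e.\ superposition (\ref{di2a}), the irregular ICT expansion (\ref{di4}), the coefficient relations (\ref{di4bc})--(\ref{di4cc}) obtained from the Robin conditions, and the auxiliary relation (\ref{di7}), summed over $j$. Your closing remark about justifying the decomposition in Step~2 --- or, equivalently, reversing the logic and invoking uniqueness --- supplies a level of rigor the paper does not spell out, but the overall strategy is the same.
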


\textbf{(b)} On local representation of the solution $ u_N \left(P \right) $.
\begin{theorem}\label{theorem3a}
	Solution $ u_N \left({\bm \xi}_i \right) $ to the exterior Robin boundary value problem (\ref{di1})-(\ref{di3}) in a small exterior $ \epsilon $-neighborhood of the unit sphere $ \partial \Omega_i^1 $ with respect to the local Cartesian coordinates $\{ O_i; {\bm \xi}_i\} $ reads
	\begin{eqnarray}
		u_N \left({\bm \xi}_i \right) = J_0^i {\xi_i}^{-1} + \sum\limits_{k=0}^\infty\left[ 1+\left( \frac{k-\kappa_{i}}{1+k+\kappa_{i}}\right) \xi
		_i^{-\left( 2k+1\right)
		}\right]  \nonumber\\	
		\times  B_{\gamma
			_1 \ldots \gamma _k}^i \odot^k \mathop {{\xi}_{\gamma _1}^i{ \ldots \xi}_{\gamma _k}^i}\limits^{%
			\rule[-.04in]{.01in}{.05in}\rule{.47in}{.01in}\rule[-0.04in]{.01in
			}{.05in}} \quad \mbox{in} \quad \Omega ^{-}_i\left(\epsilon \right)\quad i = \overline{1, N} \,; \quad \label{di6u} 
	\end{eqnarray} 
	where tensor coefficients $B_{\gamma _1 \ldots \gamma _k}^i$ ($k = \overline{0, \infty}$) are solution of the resolving ISLAE (\ref{di9w}).
\end{theorem}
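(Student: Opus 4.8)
The plan is to establish the local expansion directly from the decomposition already set up in Step~2 and the addition theorem of Theorem~\ref{theorem2}. First I would recall that the classical solution $u_N$ exists and is unique by Step~1, so it suffices to exhibit a harmonic function in $\Omega^-_i(\epsilon)$ of the stated form that matches $u_N$ there. Starting from the local splitting (\ref{di2b}), namely $u_N({\bm \xi}_i) = u_{(0)}^i(\xi_i) + \delta u_i({\bm \xi}_i)$ in $\Omega^-_i(\epsilon)$, I would substitute the explicit unperturbed term (\ref{di4a0}), $u_{(0)}^i(\xi_i) = J_0^i\,\xi_i^{-1}$, and the already-derived expression (\ref{di6}) for the perturbation $\delta u_i$. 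Collecting the two contributions gives precisely (\ref{di6u}), with the $k=0$ monopole piece of $\delta u_i$ combining with $J_0^i\xi_i^{-1}$ in the prefactor bracket.

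The substantive content is then the justification that the tensor coefficients $B^i_{\gamma_1\ldots\gamma_k}$ appearing here are exactly those solving the resolving ISLAE (\ref{di9w}). For this I would argue that (\ref{di6}) was obtained by imposing the homogeneous Robin condition (\ref{di3b}) on $\delta u_i$ via the relations (\ref{di4bc})--(\ref{di4cc}) between $A^i$ and $B^i$, while the remaining freedom in the $B^i$ is fixed by demanding consistency of the global representation across all charts: re-expanding $\sum_{j\neq i} u_j({\bm \xi}_j)$ about the $i$-th origin using the dimensionless I$\to$R addition theorem (\ref{di4b}) and its degenerate form (\ref{di4c}), and matching the result term-by-term against (\ref{di4a}) using the linear independence / orthogonality of the ICT on $\partial\Omega_i^1$ (the relation (\ref{Asur2})). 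This matching is exactly the computation carried out in Step~6 and yields (\ref{di9w}). Hence the $B^i$ in (\ref{di6u}) coincide with the ISLAE solution.

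Two points of rigor would need to be handled carefully. First, the rearrangement of the double series — re-expanding each $u_j$ about $O_i$ and then interchanging the order of summation — must be justified by the absolute and uniform convergence asserted in Theorem~\ref{theorem2} under the conditions (\ref{int2e}); since $P$ lies in the overlap $\Omega^-_i\cap\Omega^-_j$ and the neighborhood $\Omega^-_i(\epsilon)$ can be taken small, these conditions hold and Fubini/Weierstrass-type term-by-term manipulation is legitimate. Second, one must check that the series (\ref{di6u}) actually converges in $\Omega^-_i(\epsilon)$ to a harmonic function; this follows because each summand is a (regular or irregular) solid Cartesian harmonic and the coefficients $B^i_{\gamma_1\ldots\gamma_k}$, being the unique solution of a normal-type ISLAE, decay fast enough — a fact inherited from the well-posedness in Step~1 together with the growth estimates on the mixed-basis matrix $\widetilde U^{ij}$. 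I expect this convergence/decay bookkeeping to be the main obstacle; the algebraic identity itself is essentially a restatement of Steps~2, 5 and 6. Finally, uniqueness of the classical solution (Step~1) guarantees that the function so constructed is $u_N$ itself, completing the proof.
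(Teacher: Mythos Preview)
Your proposal is correct and follows essentially the same route as the paper: the theorem is presented there as a summary of the GMSV algorithm of Section~\ref{sec:solution}, and your argument reproduces exactly Steps~2, 5 and~6 (combine (\ref{di2b}), (\ref{di4a0}) and (\ref{di6}), then identify the $B^i$ via the ISLAE derivation). Your added remarks on convergence and series rearrangement go slightly beyond what the paper spells out, but they are consistent with its treatment.
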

So, if we know the solution $ u_N \left({\bm \xi}_i \right) $ in an exterior neighborhood adjacent to the $ i $-th reaction surface (\ref{di6u}), we can calculate the $B$-particles flux on the $ i $-th sink and, therefore, the corresponding absorbing rate and, in this way, screening coefficient (\ref{dm1na}). Thus, taking advantage of Theorem~\ref{theorem3a}, one can readily prove an important
\begin{corollary}\label{corollary0}
	For the exterior Robin boundary value problem (\ref{di1})-(\ref{di3}) the screening coefficient in Eq. (\ref{dm1na}) has the form   
	\begin{align}	
		J_i = 1 - B_0^i \quad \mbox{and} \quad 0 \leq B_0^i <1 \,,	\label{d11}
	\end{align}
	where coefficients $ B_0^i $  are determined by means of solution to the resolving ISLAE (\ref{di9w}).
\end{corollary}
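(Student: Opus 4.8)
The plan is to substitute the local representation of Theorem~\ref{theorem3a} into the surface–integral formula (\ref{dm1n}) for the microscopic trapping rate and then isolate the monopole ($k=0$) contribution by orthogonality of the ICT on the unit sphere. First I would rewrite each homogeneous ICT in (\ref{di6u}) as its degree–$k$ radial factor times its restriction to the unit sphere, $\mathop {{\xi}_{\gamma _1}^i{\ldots \xi}_{\gamma _k}^i}\limits^{\rule[-.04in]{.01in}{.05in}\rule{.47in}{.01in}\rule[-0.04in]{.01in}{.05in}}=\xi_i^{\,k}\,\mathop {{\hat{\xi}}_{\gamma _1}^i\ldots {\hat{\xi}}_{\gamma _k}^i}\limits^{\rule[-.04in]{.01in}{.05in}\rule{.45in}{.01in}\rule[-0.04in]{.01in}{.05in}}$, so that the radial dependence of the $k$-th term of $u_N\left({\bm\xi}_i\right)$ is $\xi_i^{\,k}+\bigl(\tfrac{k-\kappa_i}{1+k+\kappa_i}\bigr)\xi_i^{-(k+1)}$, in addition to the monopole piece $J_0^i\xi_i^{-1}$. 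Since the Robin solution lies in $C^2(\Omega^-)\cap C^1(\overline{\Omega}^-)$ and the series (\ref{di6u}) converges in the open neighbourhood $\Omega^-_i\left(\epsilon\right)$, I may differentiate termwise in $\xi_i$ and pass to the limit $\xi_i\to1^+$; an elementary computation (the numerator $(k+1)(k-\kappa_i)-k(1+k+\kappa_i)$ collapsing to $-(2k+1)\kappa_i$) gives
\[
-\left.\partial_{\xi_i}u_N\right|_{\partial\Omega_i^1}
= J_0^i-\sum_{k=0}^\infty\frac{(2k+1)\,\kappa_i}{1+k+\kappa_i}\,
B_{\gamma_1\ldots\gamma_k}^i\odot^k
\mathop{{\hat{\xi}}_{\gamma _1}^i\ldots {\hat{\xi}}_{\gamma _k}^i}\limits^{\rule[-.04in]{.01in}{.05in}\rule{.45in}{.01in}\rule[-0.04in]{.01in}{.05in}}\,.
\]

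Next I would integrate this identity over the unit sphere $\partial\Omega_i^1$ as prescribed by (\ref{dm1n}). Uniform convergence on $\partial\Omega_i^1$ allows interchanging $\oint$ with $\sum_k$, and by the orthogonality relation (\ref{Asur2}) every rank-$k\ge1$ irreducible tensor harmonic integrates to zero against the constant, so only the $k=0$ term survives. Using $\kappa_i/(1+\kappa_i)=J_0^i$ from (\ref{dm1na2}), the surviving contribution is proportional to $J_0^i-J_0^iB_0^i=J_0^i\bigl(1-B_0^i\bigr)$; equating the result with the definition (\ref{dm1na}), $k_i=k_S^i\,J_0^i\,J_i$, then forces $J_i=1-B_0^i$.

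Finally, the stated bounds follow from the already-established properties (\ref{dm1na2b}) of the screening coefficient: $0<J_i\le 1$ for every array $N\ge1$, with equality only in the non-interacting limit (in particular for $N=1$, where the sum on the right-hand side of (\ref{di9w}) is empty and $B_0^i=0$), whence $0\le B_0^i<1$. Equivalently, $B_0^i<1$ is immediate from positivity of the trapping rate $k_i>0$, while $B_0^i\ge0$ encodes the physical fact that diffusive interaction cannot enhance a sink's rate. I do not expect any genuine obstacle here: the only points requiring care are the justification of term-by-term differentiation of (\ref{di6u}) and of the $\oint$–$\sum$ interchange, both of which rest on exactly the uniform convergence already exploited in Theorems~\ref{theorem2} and~\ref{theorem3a}; the remainder is the short algebra displayed above.
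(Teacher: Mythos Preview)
Your proposal is correct and follows precisely the route the paper itself sketches: the paper does not spell out a proof of this corollary but indicates (in Step~8 of the algorithm and in the sentence introducing the corollary) that one should insert the local representation (\ref{di6u}) into the surface integral (\ref{dm1n}) and invoke the ICT orthogonality (\ref{Asur2}) so that only the $k=0$ term survives. Your explicit termwise differentiation, the algebraic collapse to $-(2k+1)\kappa_i/(1+k+\kappa_i)$, and the appeal to (\ref{dm1na2b}) for the bounds on $B_0^i$ simply fill in the details the paper leaves to the reader.
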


It is worth noting here that in Ref.~\cite{Krasovitov91} Krasovitov thoroughly reproducing the notation, derivations and even misprints of our paper~\cite{Traytak91}. However, he has been the first who kept all terms comprising $ B_{\gamma _1...\gamma _n}^i $ ($ n \in \mathbb N $) in the series with respect to orthogonal system of ICT after integration over the unit sphere (\ref{dm1n}) (see detailed discussion of these points in Ref.~\cite{Traytak01}).

\subsection{\label{sec:Steady}Monopole approximation}

As we already noted above in Sec.~\ref{sec:DiffInter} the monopole approximation is the simplest tool to study the problem under consideration and, therefore, the most commonly used in various applications. Here we dwell on this approximation briefly.

It is commonly accepted the following
\begin{definition}\label{MOA}
	Monopole approximation corresponds to truncations of series (\ref{di4}) and (\ref{di4a}) for any partial solutions $ u_i \left({\bm \xi}_i  \right) $ ($ i=\overline{1,N} $) such that: $ A_0^i,  B_0^i \neq 0$  and  $ A_{\gamma _1 \ldots \gamma _k}^i, B_{\gamma _1 \ldots \gamma _k}^i  \equiv 0 $ for all $ k \in \mathbb{N} $.
\end{definition}

It is notable that one can see directly from the resolving ISLAE (\ref{di9a}) that $ B_{0}^i=0 $ and, therefore, the fully reflecting obstacles cannot be described in the monopole approximation. Besides,  for this case due to Eq. (\ref{d11}) we have $ J_i = 1 $ as should be from the physical standpoint.

For definiteness sake  below we will consider the case of fully diffusion-controlled reactions (i.e., the case \textbf{(b)} in \textbf{step 6.}) only.

Utilizing system (\ref{di9S}) one infers that for the monopole approximation, values of the total flux 
$J_i$ may be determined by the much simpler finite $ N \times N $ system of equations:
\begin{eqnarray}
	J_i=1-\sum\limits_{j =1}^N \widetilde{V}^{ij}_0  J_j\,,	\quad i=\overline{1,N}\,;  \label{di11} \\
	\widetilde{V}^{ij}_0 := \left\{
	\begin{array}{ll}
		\widetilde{U}^{ij}_0 & \quad \mbox{if} \quad j \neq i \, , \\
		0 & \quad \mbox{if} \quad  i=j \, .
	\end{array}
	\right.   \label{di11v} 
\end{eqnarray}
Therefore, expressions (\ref{di2a}) and (\ref{d11}) are simplified to 
\begin{eqnarray}
	u_N=u_{\left( 0\right) }^i+B_0^i\left( 1-\frac 1{\xi_i}\right)\, ,  \label{di12a}\\
	J _i = 1-B_0^i \,.  \label{di12b} 
\end{eqnarray}
In the important particular case at $ N=2 $ system (\ref{di11}) gives 
\begin{align}
	J_i=1- \varepsilon_{ij} J_j\,, \quad (i \neq j)=1, 2 \label{di13a}
\end{align}
with the first iteration
\begin{align}
	J_i=1- \varepsilon_{ij} \,, \quad (i \neq j)=1, 2\,. \label{di13b}
\end{align}
It is evident that system (\ref{di13a}) has the exact solution
\begin{align}
	J_i= \frac{1-\varepsilon_{ij}}{1-\varepsilon_{ji}\varepsilon_{ij}}  \,, 	\quad (i \neq j)=1, 2\,. \label{di13c}
\end{align}
Note in passing that sometimes in literature Eq. (\ref{di13c}) is called the monopole approximation, however, the monopole approximation is given by Eq. (\ref{di13b}). Besides Eq. (\ref{di13c}) may be also obtained by the reflection method for the monopole approximation~\cite{Traytak92}.

\subsection{General remarks on the obtained solution}

In Sec.~\ref{sec:DiffInter} we already stated that exact solution to the diffusion boundary value problem (\ref{di1})-(\ref{di3}) is possible only for two fully absorbing sinks by means of bispherical coordinates~\cite{Carstens70,Deutch77,Dubinko89,Green06}. Using the bispherical coordinates for the problem under partially reflecting boundary conditions (\ref{di2}) leads to rather cumbersome recurrence relations and therefore does not provide any advantages compared to other methods. On the contrary, for our approach the kind of boundary conditions does not matter. Moreover, the calculation of the diffusion field $ u_N \left(P\right) $  from the boundary value problem (\ref{di1})-(\ref{di3}) in general case of $ N $ sinks microstructure (\ref{Mist1}), when $ N >2 $ is a far too complicated task to be solved analytically in closed form.

A comparison of the general resolving ISLAE (\ref{di9w}) with that (\ref{di9S}) for fully absorbing sinks shows the weakening of the diffusive interaction effects in arrays comprising the partialy reflecting sinks. Clearly the diffusive interaction vanishes for the fully reflecting obstacles. 

An $ N $-sink calculation reveals that the pure diffusion field of $B$-particle around any $i$-th sink $ u_{\left( 0\right) }^i {\left( \xi_i \right) } $ (\ref{di4a0}) is influenced by a contribution mediated by all other $ N-1$ sinks (\ref{di6u}), leading, in it turn, to the screening effects similar to that in an electrostatics~\cite{Deutch76}.

We emphasize again that discussed in Sec.~\ref{sec:state} mixed diffusion boundary value problems with three kinds of obstacles may be treated straightforwardly by the GMSV in terms of the ICT.
We will bring a pattern of numerical calculations of the screening coefficients $ J_i$ (\ref{d11}) for the Dirichlet-Robin proper mixed problem at $ N=3 $ below in Sec.~\ref{sec:examples}.

The GMSV in terms of the ICT may be applied to catalytically activated diffusion-influenced trapping reactions occuring in host media comprising arrays of both finitely many sinks and immobile catalytic domains~\cite{Oshanin05}. 

Finally, it is worth noting that stationary diffusion equation under so-called conjugate boundary conditions~\cite{Mityushev19,Grebenkov19,Traytak19} may be also tackled with above method quite similarly.

\section{Convergence analysis of the GMSV}\label{sec:convergence}

Thus, in principle, we can describe theoretically the trapping of $B$-reactants for any finite number $N$ of spherical sinks. However, for some sink arrays one faces the difficulties with convergence of iterative solution to the resolving ISLAE (\ref{di9w}). Alternative numerical solution of the aforementioned truncated system of equations corresponding to (\ref{di9w}) also leads to serious difficulties. 
So, estimations on validity of the GMSV with the help of the ICT for solution of the diffusion problem at issue naturally arises. This question appeared to be rather complicated and, therefore, we discuss it very shortly. Throughout this section we shall deal with the problem (\ref{di1})-(\ref{di3}) for the case of fully diffusion-controlled reactions.

\subsection{Relations with method of reflections}

The point is that according to the weak maximum principle for harmonic functions we have inequality
\begin{align}
	\max_{ P \in \overline{\Omega}^-}  \vert u_N \left( P \right) \vert \leq \max_{P \in \partial\Omega^-} \vert u_N \left( P \right) \vert \,.  \label{13r}
\end{align}
Let a harmonic function $ v_N \left(P\right) $ be an approximation to the exact solution $ u_N \left(P\right) $ of the diffusion problem (\ref{di1})-(\ref{di3}). It immediately follows from inequality (\ref{13r}) that the maximum absolute error of the residual for any approximation $ \vert v_N \left(P\right) -  u_N \left(P\right) \vert $ is achieved on the boundary of the configuration manifold $ \partial\Omega^- $. So, it is a simple matter to find a priori the upper estimate of the error for the residual of a given approximation in the configuration  manifold $ {\Omega}^- $.

As will become clear below, our consideration concerning convergence of the GMSV procedure it is expedient to start with some results on the method of reflections (in mathematical literature it is also known as the generalized Schwarz’s method)~\cite{Smirnov64,Traytak06,Salomon21}.  For two fully absorbing sinks the reflection method leads to a series which reveals slow down convergence as sinks become closer up to the contact, still being convergent (see discussion below). However, even in case of three sinks convergence of the method is not guaranteed at all intersink separations~\cite{Traytak06}. Indeed, studing similar problem on the hydrodynamic interaction in the Stokes flow past three spheres, Kim drew attention to the fact that even in three spheres case convergence of the method of reflections can be violated~\cite{Kim87}. We do not intend to discuss this subject here and so we refer the interested reader to  Ref.~\cite{Salomon21}. 

According to the method of reflections one shold construct a functional sequence of relevant approximations $\left\{ v_m\left( P\right) \right\}_{m=0}^{\infty} $ so-called {\it reflections} (see p. 638 of Ref.~\cite{Smirnov64} and  Ref.~\cite{Traytak06} for details) such that this sequence converges in $\Omega ^{-}$ uniformly to the exact solution 
\begin{align}
	v_m\left( P\right) \rightrightarrows u_N \left(P\right)  \quad \text{as} \quad  m \to \infty\,.    \label{13ra}
\end{align}
In Ref.~\cite{Traytak06} we proved the following theorem on the conditions for uniformly convergence of the method of reflections (\ref{13ra}).
\begin{theorem}\label{theorem4}
	For the relationship (\ref{13ra}) to hold the following sufficient (a) and necessary (b) conditions must be satisfied
	\begin{eqnarray}
		\mbox{(a)} \quad M_N :=  \max_{i=\overline{1,N}}\sum\limits_{k\left( \neq
			i\right) =1}^N\left( \frac{\varepsilon_{ik}}{1-\varepsilon_{ki}
		}\right) <1 \,, \label{14r}\\
		\mbox{(b)} \quad m_N :=   \min_{i=\overline{1,N}}\sum\limits_{k\left( \neq
			i\right) =1}^N\left( \frac{\varepsilon_{ik}}{1+\varepsilon_{ki}
		}\right) <1 \,. \label{15r}
	\end{eqnarray}
\end{theorem}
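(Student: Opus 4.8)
The plan is to realise the limit (\ref{13ra}) through the classical method of reflections and to reduce its convergence to a two-sided linear recursion for boundary sup-norms, so that the sharp constants in (\ref{14r}) and (\ref{15r}) emerge from one monopole comparison estimate. Working in the global coordinates (sinks of radii $R_i$ centred at $\mathbf{r}^i_0$, fully absorbing data $u_N|_{\partial\Omega_i}=1$), I would define the reflections recursively: $w_0^{(i)}:= R_i/r_i$, $v_0:=\sum_{i=1}^N w_0^{(i)}$, and, given $v_{m-1}$, let $w_m^{(i)}$ be harmonic in $\Omega_i^{-}$, vanishing at infinity, with $w_m^{(i)}\big|_{\partial\Omega_i}=-\sum_{j(\neq i)}w_{m-1}^{(j)}\big|_{\partial\Omega_i}$, and put $v_m:=v_{m-1}+\sum_{i=1}^N w_m^{(i)}$. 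Then each $v_m$ is harmonic in $\Omega^{-}$ and vanishes at infinity, and by construction its residual on $\partial\Omega_i$ equals $-\sum_{j(\neq i)}w_m^{(j)}\big|_{\partial\Omega_i}$. Since $u_N-v_m$ is harmonic in $\Omega^{-}$ and vanishes at infinity, the maximum of $|u_N-v_m|$ over $\overline\Omega^{-}$ is attained on $\partial\Omega^{-}$ (the same reduction that yields (\ref{13r})); hence (\ref{13ra}) is \emph{equivalent} to $\max_i\max_{\partial\Omega_i}\big|\sum_{j(\neq i)}w_m^{(j)}\big|\to 0$, and everything reduces to tracking these boundary quantities.

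The technical core I would establish is a sharp comparison lemma: if $h$ is harmonic in $\Omega_j^{-}$ and vanishes at infinity, then $|h(\mathbf{r})|\le (R_j/r_j)\max_{\partial\Omega_j}|h|$ for all $\mathbf{r}\in\Omega_j^{-}$ — because $(R_j/r_j)\max_{\partial\Omega_j}|h|\mp h$ is harmonic in $\Omega_j^{-}$, vanishes at infinity, and is non-negative on $\partial\Omega_j$, hence non-negative throughout by the minimum principle — and, when $h$ has a fixed sign, the matching lower bound $|h(\mathbf{r})|\ge (R_j/r_j)\min_{\partial\Omega_j}|h|$. For $\mathbf{r}\in\partial\Omega_i$ one has $r_j\in[L_{ij}-R_i,\,L_{ij}+R_i]$, so $R_j/r_j$ ranges precisely over $\big[\varepsilon_{ij}/(1+\varepsilon_{ji}),\,\varepsilon_{ij}/(1-\varepsilon_{ji})\big]$, which is exactly where the summands of $M_N$ and $m_N$ live.

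For the sufficiency (a) I would set $\mu_m^{(i)}:=\max_{\partial\Omega_i}|w_m^{(i)}|=\sup_{\Omega_i^{-}}|w_m^{(i)}|$. The recursion for $w_m^{(i)}$ together with the upper half of the lemma gives $\mu_{m+1}^{(i)}\le\sum_{j(\neq i)}\frac{\varepsilon_{ij}}{1-\varepsilon_{ji}}\mu_m^{(j)}\le M_N\max_j\mu_m^{(j)}$, whence $\max_i\mu_m^{(i)}\le M_N^{\,m}\to 0$; the boundary residual of $v_m$ is controlled by the same sums, so $M_N<1$ forces (\ref{13ra}), with a geometric rate. No sign information is needed for this direction.

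The necessity (b) is the delicate step — and the one I expect to be the main obstacle — because a lower bound on the residual has to survive the possible cancellation among the $N-1$ contributions on each $\partial\Omega_i$. I would remove this difficulty by establishing, by induction on $m$, the alternating-sign property $(-1)^m w_m^{(i)}>0$ on $\Omega_i^{-}$: the base case is $w_0^{(i)}=R_i/r_i>0$, and if $(-1)^m w_m^{(k)}>0$ for all $k$ then the data of $w_{m+1}^{(i)}$ is a finite sum of functions of the common sign $(-1)^{m+1}$, hence of that sign and nowhere zero on $\partial\Omega_i$ (using the non-touching condition (\ref{Zv00b}), so that $\partial\Omega_i\subset\Omega_k^{-}$, and the strong maximum principle), so that $(-1)^{m+1}w_{m+1}^{(i)}>0$ on $\Omega_i^{-}$. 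Then $\big|\sum_{j(\neq i)}w_m^{(j)}\big|=\sum_{j(\neq i)}|w_m^{(j)}|$ pointwise on $\partial\Omega_i$, so, with $\nu_m^{(i)}:=\min_{\partial\Omega_i}|w_m^{(i)}|$, the lower half of the lemma and the inequality $\min(f+g)\ge\min f+\min g$ give $\nu_{m+1}^{(i)}\ge\sum_{j(\neq i)}\frac{\varepsilon_{ij}}{1+\varepsilon_{ji}}\nu_m^{(j)}\ge m_N\min_j\nu_m^{(j)}$. If $m_N\ge 1$ then $\min_j\nu_m^{(j)}$ is non-decreasing, hence $\ge\min_j\nu_0^{(j)}=1$ for every $m$, so the residual of $v_m$ on $\partial\Omega^{-}$ stays $\ge m_N\ge 1$ and cannot tend to $0$; therefore (\ref{13ra}) implies $m_N<1$. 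Apart from the cancellation issue, the only other point needing care is to use the comparison lemma in its \emph{sharp} form rather than a crude Harnack constant, since that is what reproduces the exact coefficients $\varepsilon_{ik}/(1\pm\varepsilon_{ki})$.
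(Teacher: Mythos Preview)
The paper does not actually prove Theorem~\ref{theorem4} in the text; it simply cites Ref.~\cite{Traytak06} (``In Ref.~\cite{Traytak06} we proved the following theorem\ldots''). So there is no in-paper argument to compare against, and your sketch has to be judged on its own.

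Your argument is sound and is in fact the natural route: reduce uniform convergence of the reflection sequence to the decay of the boundary residuals via the exterior maximum principle, and control those residuals by the sharp one-sphere comparison $|h(\mathbf r)|\le (R_j/r_j)\max_{\partial\Omega_j}|h|$ (with the matching lower bound when $h$ has a sign). The identification $R_j/r_j\in[\varepsilon_{ij}/(1+\varepsilon_{ji}),\,\varepsilon_{ij}/(1-\varepsilon_{ji})]$ on $\partial\Omega_i$ is exactly what produces the constants in (\ref{14r}) and (\ref{15r}), and your recursion $\mu_{m+1}^{(i)}\le\sum_{j\neq i}\frac{\varepsilon_{ij}}{1-\varepsilon_{ji}}\mu_m^{(j)}$ then gives the sufficiency with geometric rate $M_N^m$. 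For the necessity, the alternating-sign induction $(-1)^m w_m^{(i)}>0$ is precisely the device that defeats cancellation; together with the lower comparison and $\min\sum\ge\sum\min$ it yields $\min_i\nu_{m+1}^{(i)}\ge m_N\min_j\nu_m^{(j)}$, which is non-decreasing from the value $1$ once $m_N\ge1$, so the residual is bounded below by $1$ and (\ref{13ra}) fails. Two small remarks: (i) the strict positivity in the induction indeed requires the strong minimum principle in the exterior (decay at infinity plus strictly positive boundary data implies strictly positive interior values), which you invoke correctly using the non-touching hypothesis (\ref{Zv00b}); (ii) it is worth recording once that the comparison lemma is sharp because $R_j/r_j$ itself is the Poisson kernel for the exterior monopole, so equality holds when $h$ is that monopole---this is why no slack appears in the constants and why (a), (b) cannot be improved within this scheme.
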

It is important to highlight that currently it is believed these inequalities are still the best ones to estimate convergence of the method of reflections~\cite{Salomon21}.

Plainly, Theorem \ref{theorem4} implies the following two assertions.
\begin{corollary}
	For sinks of identical radius $ R_i =R $ ($ i= \overline{1,N} $) conditions (\ref{14r}) and (\ref{15r})  are simplified to 
	\begin{eqnarray}
		\mbox{(a)} \quad M_N =   \max_{i=\overline{1,N}}\sum\limits_{k\left( \neq
			i\right) =1}^N\left( \frac{L_{ik}}{R} -1 \right)^{-1} < 1 \,, \label{14e}\\
		\mbox{(b)} \quad m_N =  \min_{i=\overline{1,N}}\sum\limits_{k\left( \neq
			i\right) =1}^N\left( \frac{L_{ik}}{R} + 1 \right)^{-1} < 1 \,. \label{15e}
	\end{eqnarray}
\end{corollary}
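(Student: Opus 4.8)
The plan is to derive the two simplified conditions by a direct substitution of the equal-radius hypothesis into the dimensionless ratios $\varepsilon_{ik}$ and $\varepsilon_{ki}$ appearing in Theorem~\ref{theorem4}. First I would recall from definition~(\ref{int2e}) that $\varepsilon_{ik}=R_k/L_{ik}$ and $\varepsilon_{ki}=R_i/L_{ik}$, so that under the hypothesis $R_i=R$ for all $i=\overline{1,N}$ both quantities collapse to the single ratio $\varepsilon_{ik}=\varepsilon_{ki}=R/L_{ik}$, which is symmetric in the pair $(i,k)$.

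Next I would substitute this into the summand of the sufficient condition~(\ref{14r}),
\[
\frac{\varepsilon_{ik}}{1-\varepsilon_{ki}}=\frac{R/L_{ik}}{1-R/L_{ik}}=\frac{R}{L_{ik}-R}=\left(\frac{L_{ik}}{R}-1\right)^{-1},
\]
and analogously into the summand of the necessary condition~(\ref{15r}),
\[
\frac{\varepsilon_{ik}}{1+\varepsilon_{ki}}=\frac{R/L_{ik}}{1+R/L_{ik}}=\frac{R}{L_{ik}+R}=\left(\frac{L_{ik}}{R}+1\right)^{-1}.
\]
Summing over $k\,(\neq i)=\overline{1,N}$ and taking, respectively, the maximum and the minimum over $i=\overline{1,N}$ then reproduces exactly the expressions for $M_N$ and $m_N$ in~(\ref{14e}) and~(\ref{15e}), so the statement follows immediately from Theorem~\ref{theorem4}.

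The only point needing a word of justification — and it is the closest thing to an obstacle here — is that the algebraic step above is legitimate, i.e.\ that the denominator $L_{ik}-R$ never vanishes or changes sign. This is guaranteed by the non-overlapping condition~(\ref{Zv00b}): for equal radii $h_{ik}=L_{ik}-2R>0$, hence $L_{ik}>2R$ and in particular $L_{ik}-R>R>0$, so each term $(L_{ik}/R-1)^{-1}$ is well defined, positive, and individually less than $1$ — though of course their sum over $k\,(\neq i)$ need not be, which is precisely why condition~(\ref{14e}) remains a genuine restriction on the microstructure. Beyond this remark the corollary requires no further machinery.
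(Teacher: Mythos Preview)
Your proposal is correct and follows exactly the intended route: the paper presents this corollary as an immediate consequence of Theorem~\ref{theorem4} (``Plainly, Theorem~\ref{theorem4} implies the following two assertions'') and gives no explicit proof, so your direct substitution of $R_i=R$ into the ratios $\varepsilon_{ik},\varepsilon_{ki}$ from~(\ref{int2e}) is precisely what is meant. Your added remark that~(\ref{Zv00b}) keeps $L_{ik}-R>0$ is a nice touch but not something the paper bothers to spell out.
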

\begin{corollary}\label{corollary1}
	For the monopole approximation conditions (\ref{14r}) and (\ref{15r})  are reduced to 
	\begin{align}
		\mbox{(a)} \quad M_N = \max_{i=\overline{1,N}}\sum\limits_{k\left( \neq
			i\right) =1}^N\varepsilon_{ik} < 1  \,, \quad  \mbox{(b)} \quad   m_N =  \min_{i=\overline{1,N}}\sum\limits_{k\left( \neq
			i\right) =1}^N\varepsilon_{ik} < 1\,. \label{137r}
	\end{align}
\end{corollary}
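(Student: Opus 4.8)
The plan is to specialize the general convergence criterion of Theorem~\ref{theorem4} to the monopole sector, in which by Definition~\ref{MOA} every tensor coefficient $A^i_{\gamma_1\ldots\gamma_k}$, $B^i_{\gamma_1\ldots\gamma_k}$ with $k\in\mathbb{N}$ is forced to vanish, so that each reflection $v_m\left(P\right)$ in the sequence (\ref{13ra}) is a pure monopole field $c^i_m\slash\xi_i$ centred at $O_i$. First I would record that, under this truncation, the only surviving block of the mixed-basis matrix is the scalar $\widetilde{U}^{ij}_0$; from its definition together with $\sigma_{00}=1$ and the empty irreducible tensor $\Lambda_0=1$ one gets $\widetilde{U}^{ij}_0=\varepsilon_{ij}$, so the degenerate addition theorem (\ref{di4c}) enters the reflection process only through the $N\times N$ matrix $\widetilde{V}_0=\bigl(\widetilde{V}^{ij}_0\bigr)$ of (\ref{di11v}), which has non-negative entries, off-diagonal entries $\varepsilon_{ij}$ and vanishing diagonal.

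Next I would observe that in the monopole approximation the reflection iteration reduces to the Neumann iteration for the finite linear system (\ref{di11}), so the uniform convergence (\ref{13ra}) is equivalent to the spectral radius $\rho\left(\widetilde{V}_0\right)$ being strictly less than one. For a matrix with non-negative entries the spectral radius is sandwiched between its smallest and largest row sums, $\min_{i}\sum_{k}\widetilde{V}^{ik}_0\le\rho\left(\widetilde{V}_0\right)\le\max_{i}\sum_{k}\widetilde{V}^{ik}_0$, where the upper bound is the $\ell^\infty$ operator norm and the lower bound follows from the Collatz--Wielandt characterisation applied to the all-ones vector. Since the $i$-th row sum of $\widetilde{V}_0$ is exactly $\sum_{k\left(\neq i\right)=1}^N\varepsilon_{ik}$, the upper bound shows that $M_N<1$ in (\ref{137r})(a) is sufficient for (\ref{13ra}), and the lower bound shows that $m_N<1$ in (\ref{137r})(b) is necessary, which is precisely the asserted reduction of (\ref{14r})--(\ref{15r}).

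Equivalently, and closer in spirit to the derivation of Theorem~\ref{theorem4}, one may simply trace through that proof: the factors $\left(1-\varepsilon_{ki}\right)^{-1}$ and $\left(1+\varepsilon_{ki}\right)^{-1}$ there arise from summing over the infinitely many multipole orders $n=0,1,2,\ldots$ that each reflection regenerates on the surface of the $i$-th sink; the monopole truncation retains only the $n=0$ contribution, so both geometric series collapse to their leading term $1$ and (\ref{14r})--(\ref{15r}) become $\max_i\sum_{k\left(\neq i\right)}\varepsilon_{ik}<1$ and $\min_i\sum_{k\left(\neq i\right)}\varepsilon_{ik}<1$.

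The main obstacle is the necessity direction: one must either invoke the non-negative-matrix lower bound on the spectral radius with care (the bound $\rho\ge\min_i(\text{row sum}_i)$ holds even when $\widetilde{V}_0$ is reducible, but this should be stated via the Collatz--Wielandt formula rather than via irreducibility of $\widetilde{V}_0$), or else verify inside the proof of Theorem~\ref{theorem4} that the factor producing the necessary condition genuinely degenerates to $1$ under the monopole truncation and not to some other constant. The sufficiency direction, the computation $\widetilde{U}^{ij}_0=\varepsilon_{ij}$, and the identification of the $i$-th row sum with $\sum_{k\neq i}\varepsilon_{ik}$ are all routine.
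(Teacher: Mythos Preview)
Your proposal is correct. The paper itself gives no explicit proof of this corollary: it simply prefaces the two corollaries with ``Plainly, Theorem~\ref{theorem4} implies the following two assertions'' and then, after stating (\ref{137r}), remarks that condition~(a) coincides with the standard row-sum criterion $\Vert\widetilde{V}^{ij}_0\Vert_\infty<1$ for iterative solution of the monopole system~(\ref{di11}).

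Your second route --- observing that the factors $(1-\varepsilon_{ki})^{-1}$ and $(1+\varepsilon_{ki})^{-1}$ in (\ref{14r})--(\ref{15r}) come from summing contributions over all multipole orders and therefore collapse to their $n=0$ term under the monopole truncation --- is exactly what the paper has in mind by ``plainly.'' Your first route via $\rho(\widetilde{V}_0)$ and the Collatz--Wielandt bounds is a genuine addition: it makes precise why the row-sum quantities control convergence of the Neumann iteration and handles the necessity direction carefully, which the paper does not spell out. Either route suffices; the spectral-radius argument is self-contained and does not require access to the proof of Theorem~\ref{theorem4} (which the paper only cites from Ref.~\cite{Traytak06}).
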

One can see that above condition (a) (\ref{137r})
coincides with well-known sufficient condition for the convergence of the iterative solution of the system (\ref{di11}): 
\begin{align}
	\Vert \widetilde{V}^{ij}_0 \Vert _{\infty}:= \max_{i=\overline{1,N}}\sum_{k=1}^{N}\widetilde{V}^{ik}_0<1\,, \label{137v}
\end{align}
where $ \widetilde{V}^{ij}_0 $ is the hollow matrix (\ref{di11v}).

Moreover, previously in Ref.~\cite{Traytak03} we have shown that this approach is tightly connected with the corresponding iterative solution within the scope of the GMSV and proved
\begin{theorem}\label{theorem5}
	Solution to the resolving ISLAE (\ref{di9S}) by simple-iteration method is equivalent to the method of reflections.
\end{theorem}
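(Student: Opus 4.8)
Theorem \ref{theorem5} asserts the equivalence of the simple-iteration scheme for the resolving ISLAE (\ref{di9S}) with the method of reflections. The plan is to establish this equivalence by setting up a careful dictionary between the two constructions and then matching them order-by-order in the iteration index, exploiting the linearity of both procedures and the uniqueness of the classical solution to (\ref{di1})-(\ref{di3}).

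First I would make precise the simple-iteration scheme: starting from the zeroth iterate $B^{i(0)}_{\gamma_1\ldots\gamma_k}$ given by (\ref{di9w0}) (specialized to the fully-absorbing case), one defines recursively
\begin{align}
	B^{i(m+1)}_{\gamma_1\ldots\gamma_k} = B^{i(0)}_{\gamma_1\ldots\gamma_k} - \sum\limits_{j(\neq i)=1}^N \sum\limits_{l=0}^\infty \widetilde{U}^{ij}_{\gamma_1\ldots\gamma_k\mu_1\ldots\mu_l}\odot^l B^{j(m)}_{\mu_1\ldots\mu_l}\,. \label{iterdef}
\end{align}
Each iterate $B^{i(m)}$ produces, through the local representation (\ref{di6u}) with $\kappa_i\to\infty$, a field defined in a neighborhood of $\partial\Omega_i$; more usefully, through (\ref{di4}) with coefficients $A^i_{\gamma_1\ldots\gamma_n}$ obtained from $B^{i(m)}$ via (\ref{di4bc})-(\ref{di4cc}), each iterate produces a genuine partial solution $u_i^{(m)}({\bm\xi}_i)$ harmonic in $\Omega_i^-$ and vanishing at infinity. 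On the reflections side, I would recall the construction from Ref.~\cite{Traytak06}: $v_0$ is the superposition of the $N$ isolated one-sink solutions $u^j_{(0)}$; then one corrects, sink by sink, to restore each boundary condition, the correction being a harmonic function on the relevant chart exterior whose boundary trace cancels the error left by the previous reflection. The key structural observation is that a single "round" of reflections — correcting all $N$ boundaries once — is precisely the operation encoded by applying the mixed-basis matrix $\widetilde{U}^{ij}$ once and re-expanding, i.e., exactly one application of the update map in (\ref{iterdef}).

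The main steps, in order, are: (1) show that the zeroth reflection $v_0$ corresponds to $\{B^{i(0)}\}$, using that (\ref{di9w0}) is obtained by re-expanding $\sum_{j\neq i}u^j_{(0)}({\bm\xi}_j)$ about $O_i$ via the degenerate addition theorem (\ref{di4c}); (2) show the inductive step — that if after $m$ reflections the accumulated field corresponds to the partial solutions built from $\{B^{i(m)}\}$, then the $(m{+}1)$-st reflection round, which adds to each chart a new irregular-harmonic correction cancelling the residual boundary trace, adds to each $B^i$ exactly the term $-\sum_{j\neq i}\sum_l \widetilde{U}^{ij}\odot^l B^{j(m)}$, i.e., reproduces (\ref{iterdef}); (3) conclude by uniform convergence: under the hypotheses of Theorem \ref{theorem4} (or the weaker conditions under which the iteration converges), both sequences converge uniformly in $\Omega^-$, and since they agree termwise they have the same limit, which by Step 1 of the GMSV algorithm is the unique classical solution $u_N$. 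For the partial sums this is just the identity of two expansions of the same harmonic function; the maximum principle inequality (\ref{13r}) guarantees that agreement of boundary traces forces agreement of the fields.

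The main obstacle will be Step (2): making rigorous the claim that "one reflection round $=$ one application of $\widetilde{U}^{ij}$." This requires being scrupulous about \emph{what} is being re-expanded and \emph{where}. A reflection correction on $\partial\Omega_j$ is an irregular solid harmonic centered at $O_j$; to feed it into the next correction on $\partial\Omega_i$ one must re-expand it as a regular series about $O_i$ — this is exactly Theorem \ref{theorem2} in dimensionless form (\ref{di4b}) — and then read off how it modifies the boundary condition on $\partial\Omega_i$, which via (\ref{di4cc}) (in the limit $\kappa_i\to\infty$, where $(k-\kappa_i)/(1+k+\kappa_i)\to -1$) gives the sign and coefficient appearing in (\ref{di9S}). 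One must also verify that the bookkeeping of \emph{accumulated} coefficients matches: the reflections are applied to all $N$ boundaries "in parallel" within a round, consistent with the fully-simultaneous update (\ref{iterdef}) rather than a Gauss-Seidel sweep; this distinction (Jacobi vs.\ Gauss-Seidel) is the subtle point and must be stated as part of the theorem's hypothesis, matching the precise definition of reflections adopted in Ref.~\cite{Traytak06}. Once this correspondence is pinned down, the convergence part is routine given Theorem \ref{theorem4}.
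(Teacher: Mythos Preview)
The paper does not contain a proof of Theorem~\ref{theorem5}: the statement is merely quoted as a result already established in Ref.~\cite{Traytak03}. So there is no ``paper's own proof'' to compare against, and your task is really to reconstruct the argument from scratch.

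That said, your outline is the natural one and matches in spirit how such equivalences are proved (and, to the extent one can infer from the surrounding discussion and the structure of the GMSV algorithm, how the cited reference proceeds). The dictionary you set up --- zeroth reflection $\leftrightarrow$ $\{B^{i(0)}\}$ via the degenerate addition theorem (\ref{di4c}); one reflection round $\leftrightarrow$ one application of the transfer operator in (\ref{iterdef}) via the full addition theorem (\ref{di4b}) and the specialization of (\ref{di4cc}) at $\kappa_i\to\infty$ --- is exactly right, and the induction on the reflection index is the correct backbone.

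Your identification of the Jacobi-versus-Gauss--Seidel ambiguity as the delicate point is well taken: the simple-iteration scheme (\ref{iterdef}) is a simultaneous (Jacobi) update, and the equivalence holds only for the ``parallel'' variant of reflections in which all $N$ boundary corrections in a given round are computed from the \emph{same} previous approximation. If the reflections are done sequentially (correcting sink $1$, then sink $2$ using the already-corrected field, etc.), one instead recovers a Gauss--Seidel-type iteration, which converges to the same limit under the same hypotheses but does not match (\ref{iterdef}) iterate-by-iterate. You should state this explicitly as the definition of reflections you adopt, rather than defer it to a hypothesis. Beyond that, the argument is sound; the maximum-principle step (3) is unnecessary for the equivalence itself (which is a purely algebraic identity between iterates) and only enters if you also want to conclude that both converge to $u_N$.
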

In this way Theorem \ref{theorem5} immediately leads to the following remarkable result
\begin{corollary}\label{corollary1}
	Conditions (\ref{14r}) and (\ref{15r}) may be used to estimate the convergence of the simple-iteration method when applied to the resolving ISLAE (\ref{di9S}).
\end{corollary}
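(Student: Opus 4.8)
The plan is to obtain the corollary as an immediate consequence of the two preceding results, Theorem~\ref{theorem4} and Theorem~\ref{theorem5}, with only a small bookkeeping step needed to transfer the convergence statement from one iterative scheme to the other. The whole point is that Theorem~\ref{theorem5} already identifies the two procedures, so the convergence criteria for one must apply verbatim to the other.

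First I would spell out the content of Theorem~\ref{theorem5} in the form I need: the sequence of approximations produced by applying the simple-iteration method to the resolving ISLAE (\ref{di9S}) — starting from $B_{\gamma_1\ldots\gamma_k}^{i(0)}$ given by (\ref{di9w0}) and iterating — generates, after substitution into the local representation (\ref{di6u}) (equivalently into the decomposition (\ref{di2a})), exactly the sequence of reflections $\{v_m(P)\}_{m=0}^{\infty}$. That is, writing $u_N^{(m)}$ for the harmonic function assembled from the $m$-th coefficient iterate, one has $u_N^{(m)}=v_m$ for every $m$. This identification is precisely what Theorem~\ref{theorem5} asserts, and it makes the convergence of the simple-iteration method for (\ref{di9S}) and the convergence (\ref{13ra}) of the method of reflections literally the same question.

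Next I would invoke Theorem~\ref{theorem4}: under the sufficient condition $M_N<1$ of (\ref{14r}) the reflections converge uniformly on $\Omega^-$ to the exact solution $u_N$, while $m_N<1$ in (\ref{15r}) is necessary for this uniform convergence. Composing with the identity $u_N^{(m)}=v_m$, the simple-iteration approximations $u_N^{(m)}$ converge uniformly to $u_N$ whenever (\ref{14r}) holds and cannot converge uniformly unless (\ref{15r}) holds. To conclude that the \emph{tensor coefficients} $B_{\gamma_1\ldots\gamma_k}^i$ themselves converge under the same conditions, I would note that the passage from the coefficient iterates to $u_N^{(m)}$ on any fixed exterior neighbourhood $\Omega_i^-(\epsilon)$ is given by the uniformly convergent ICT series (\ref{di6u}), a bounded linear correspondence in both directions: one way by uniform convergence of the series, and the other way by using the orthogonality relation (\ref{Asur2}) on the unit sphere $\partial\Omega_i^1$ to recover each coefficient as a bounded linear functional of the restriction of the field to $\partial\Omega_i^1$. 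Hence convergence of the fields and convergence of the coefficients are equivalent, and conditions (\ref{14r}), (\ref{15r}) govern the convergence of the simple-iteration solution of (\ref{di9S}), as claimed.

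The only step demanding genuine care — and where I would expect to spend most of the effort — is this last equivalence between convergence of the assembled fields $u_N^{(m)}$ and convergence of the individual coefficients $B_{\gamma_1\ldots\gamma_k}^{i(m)}$; in particular one should check that the bound implicit in extracting coefficients via (\ref{Asur2}) is uniform in the iteration index $m$, so that no reshuffling of a conditionally convergent series is involved. Once this is in place the corollary follows with no further computation, since everything else is a direct appeal to Theorems~\ref{theorem4} and~\ref{theorem5}.
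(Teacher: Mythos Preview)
Your proposal is correct and follows exactly the route the paper intends: the corollary is stated in the paper as an immediate consequence of Theorem~\ref{theorem5} (combined with Theorem~\ref{theorem4}), with no further argument given. You supply more detail than the paper does --- in particular your discussion of passing between convergence of the assembled fields and convergence of the tensor coefficients via (\ref{Asur2}) is a point the paper simply glosses over --- but the underlying logic is the same.
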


Solvability of the resolving ISLAE by simple-iteration method is very important because of its simplicity and, due to this fact, considerable current use for numerical implementation of the approach at issue~\cite{Galanti16a}.

\subsection{Some estimates of the convergence}

Estimates of the convergence for symmetric configuration of sinks are the most simple. First consider  two examples of axisymmetric configurations.

\textbf{Two sinks of different radii.}

For two sink arrays we shall always denote $L = L_{12}=L_{21} $.

Consider first two sinks supposing, for definiteness, that $ R_1 >  R_2 $. 
In this particular case quite apparently the sufficient condition of convergence (\ref{14r}) is satisfied and it yields inequality $  M_2 = R_{1}/ \left( 1 - R_{2}\right) < 1 $. Hence we have 
$ h = L - \left( R_1 + R_2\right) > 0 $, where $ h $ is the distance between sink surfaces.
In its turn, the latter inequality coincides with non-overlapping sinks condition (\ref{Zv00b}).

Then, by Theorem \ref{theorem4} and Corollary~\ref{corollary1} we arrive at the analog of the known Goluzin theorem~\cite{Smirnov64}.
\begin{theorem}\label{theorem8}
	For any two disjoint fully absorbing sinks the resolving ISLAE (\ref{di9S}) may be solved by the method of simple iterations.
\end{theorem}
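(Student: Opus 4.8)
The plan is to reduce Theorem~\ref{theorem8} to an application of Theorem~\ref{theorem4} together with Corollary~\ref{corollary1} by verifying the sufficient convergence condition~(\ref{14r}) for the special case $N=2$. First I would write out $M_2$ explicitly: for two sinks the inner sum in~(\ref{14r}) degenerates to a single term, so
\begin{align}
	M_2 = \max\left\{ \frac{\varepsilon_{12}}{1-\varepsilon_{21}},\ \frac{\varepsilon_{21}}{1-\varepsilon_{12}} \right\}, \nonumber
\end{align}
where, recalling~(\ref{int2e}), $\varepsilon_{12}=R_2/L$ and $\varepsilon_{21}=R_1/L$ with $L=L_{12}$. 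The claim to establish is that the non-overlapping condition~(\ref{Zv00b}), namely $h = L-(R_1+R_2)>0$, is equivalent to $M_2<1$; this is the heart of the matter.

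The key computation is elementary: the inequality $\varepsilon_{ij}/(1-\varepsilon_{ji})<1$ is, after clearing the (positive) denominator $1-\varepsilon_{ji}>0$, equivalent to $\varepsilon_{ij}+\varepsilon_{ji}<1$, i.e.\ $(R_1+R_2)/L<1$, i.e.\ $L>R_1+R_2$, i.e.\ $h>0$. Since the same single inequality $R_1+R_2<L$ controls both terms in the max, we get $M_2<1 \iff h>0$. I would note that one must also check $1-\varepsilon_{ji}>0$ for the manipulation to be valid, but this is immediate since $\varepsilon_{ji}=R_i/L \le (R_1+R_2)/L < 1$ under~(\ref{Zv00b}); in fact $\varepsilon_{ij},\varepsilon_{ji}\in(0,1)$ is already recorded in~(\ref{int2e}). (The text's remark ``$M_2 = R_1/(1-R_2)<1$'' is the normalized $L=1$ reading of the second term after an implicit rescaling; I would present the computation in the dimensionless $\varepsilon$ variables to avoid ambiguity.)

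With $M_2<1$ established, the sufficient condition~(\ref{14r}) of Theorem~\ref{theorem4} holds, hence by that theorem the reflection sequence converges uniformly, $v_m\rightrightarrows u_2$; and by Theorem~\ref{theorem5} (equivalently Corollary~\ref{corollary1}) the simple-iteration method applied to the resolving ISLAE~(\ref{di9S}) for two fully absorbing sinks is equivalent to the method of reflections and therefore also converges. This proves the statement. The only genuinely delicate point — and the ``main obstacle'' if there is one — is making sure the logical chain is airtight: that~(\ref{Zv00b}) implies $\varepsilon_{ij}\in(0,1)$ so denominators are positive, that both arguments of the max are controlled by the \emph{same} inequality $R_1+R_2<L$ (so the hypothesis $R_1>R_2$ is actually not needed for convergence, only for the ``definiteness'' bookkeeping the author invokes), and that Corollary~\ref{corollary1} legitimately transfers the reflection-convergence conclusion to the ISLAE iteration. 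Everything else is routine algebra, so I would keep the proof to a few lines: exhibit $M_2$, show $M_2<1\Leftrightarrow h>0$, invoke Theorems~\ref{theorem4} and~\ref{theorem5}.
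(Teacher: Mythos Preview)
Your proposal is correct and follows essentially the same route as the paper: verify the sufficient condition~(\ref{14r}) of Theorem~\ref{theorem4} by computing $M_2$ for $N=2$, show it reduces to the non-overlapping condition $h>0$, and then invoke Theorem~\ref{theorem4} together with Corollary~\ref{corollary1} (via Theorem~\ref{theorem5}). Your version is in fact slightly more careful than the paper's, since you handle both terms of the maximum symmetrically and observe that the assumption $R_1>R_2$ is only for bookkeeping, not needed for the argument.
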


To show how one can directly apply the Corollary~\ref{corollary1}, as examples consider arrays of $N$ fully absorbing sinks of identical radius $ R_i =R $ ($ i = \overline{1,N} $).  

\textbf{Sinks of equal radii are located along a straight line.}

In addition let us, for simplicity sake, assume that all $N$ sinks are equally spaced at spacing $ a $ along their line of centers (see Fig.~\ref{fig6} (a)). We have performed calculations at $ N=100 $. It turned out that the simple-iteration method using to solve the resolving ISLAE (\ref{di9S}) is not valid when $  R/a < 4.9 $, and when $  R/a < 9.4 $ its applicability requires additional justification.

\begin{figure}[t!]
	\centering
	{\includegraphics[scale=0.7]{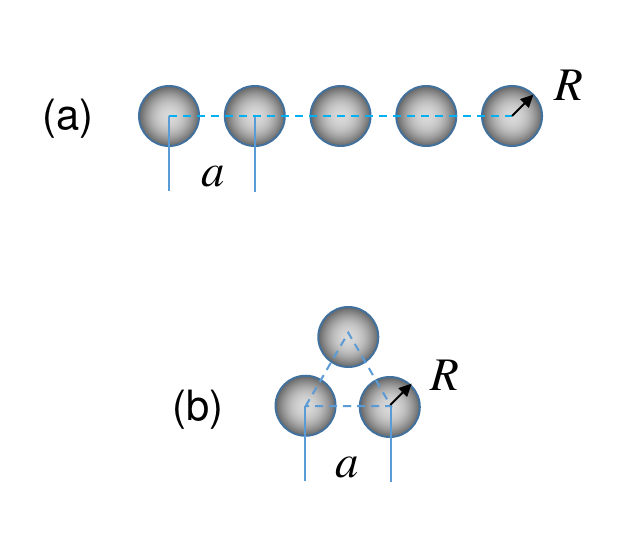}}
	\caption{The array geometry of fully absorbing sinks located: (a) along a straight line; (b) at the apices of an equilateral triangle}\label{fig6}		
\end{figure}

\textbf{A triangular arrangement of equal sinks.}

Diffusion-controlled reactions in the three-sink arrays, particularly equal sinks arranged in an equilateral triangle  have been the subject of numerous studies~\cite{Deutch76,Labowsky78,Traytak92,McCammon13}. 

So, consider now the array containing three identical sinks arranged so their centers are at the apices of an equilateral triangle with sides of length $ a $ (see Fig.~\ref{fig6} (b)). One can easy verify that sufficient condition for the convergence of the simple-iteration method (\ref{14e}) fails when $ 1/3 < R/a < 1/2 $.
This is entirely consistent with the conclusion by Kim in Ref.~\cite{Kim87}, where were found numeically that for three sphere system relevant reflections series will diverge when $ 0.46 \lesssim  R/a < 0.5 $ and it becomes convergent for $ R/a < 0.4 $. Moreover, Kim stated there: "From a rigorous viewpoint, it is not entirely clear at this point whether this is because of an inherent limitation of a method based on the expansion in $ R/a $ (method of reflections) or of because some error in the analysis."

For further results concerning the convergence of the simple-iteration method in the problem at issue, interested readers are encouraged to refer to the recent works Refs.~\cite{Mityushev19,Salomon21}.

\section{Illustrative examples}\label{sec:examples}

In this section as illustrations of the above GMSV by means of the ICT we numerically evaluated the diffusive interaction for a few simple examples of many-sink arrays.

\subsection{Iso-concentration surfaces for two sinks}

To be specific, assume that the sinks are fully absorbing and start from the simplest, and yet, important case of two sinks.
\begin{figure}[t!]
	\centering
	\includegraphics[scale=0.5]{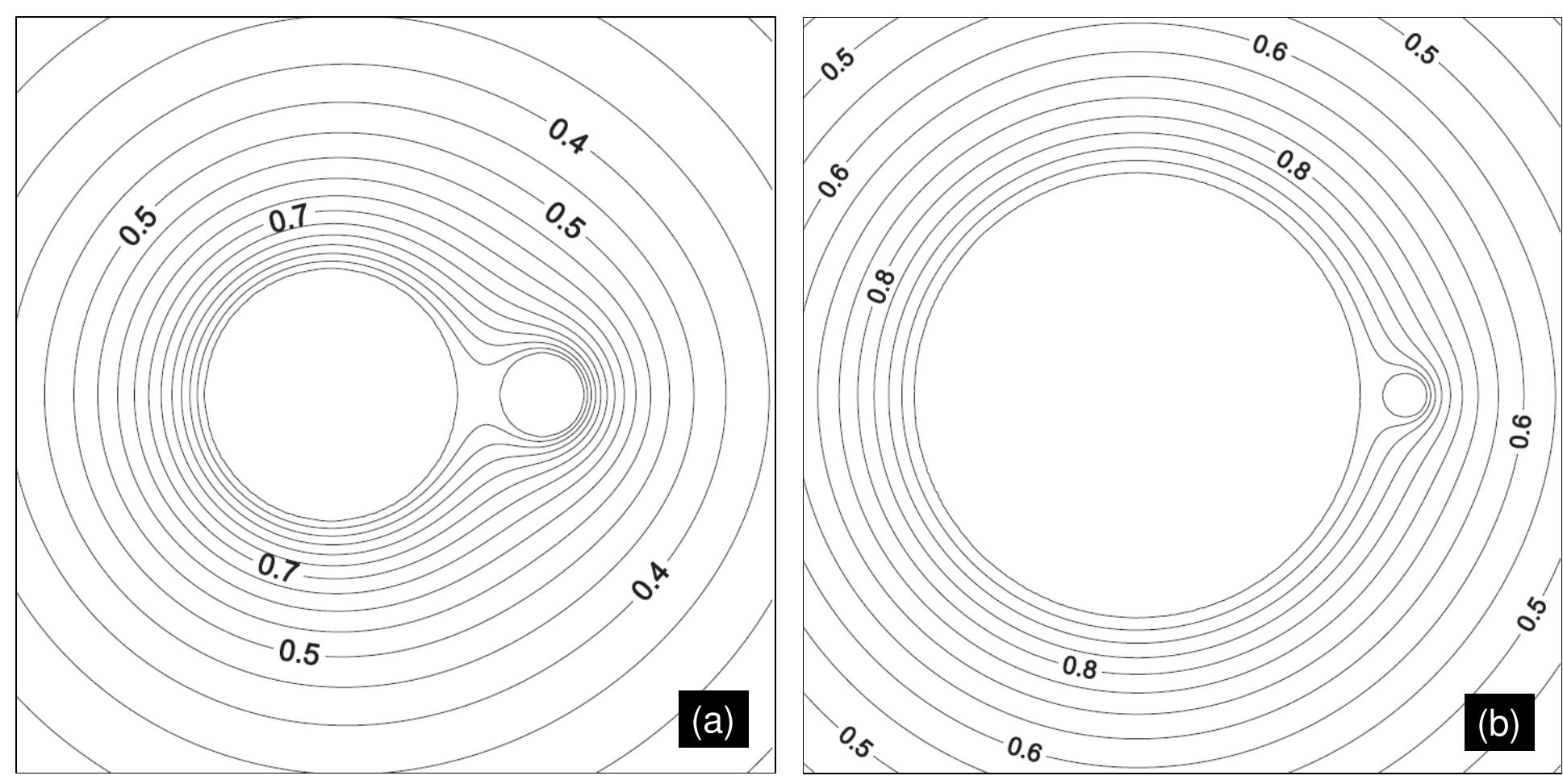}
	\caption{Cross section through the level surfaces of $ u_2 \left(P\right)$ for  decrement step $ \vartriangle h = 0.05 $ ($ l_0 = 20 $): (a) radius $R^{\ast}_{1}=3$ and the sink spacing $L^{\ast}=5$; (b) radius $R^{\ast}_{1}=10$ and the sink spacing $L^{\ast}=12$} \label{fig7}
\end{figure}

\begin{figure}[t!]
	\centering
	\includegraphics[scale=0.5]{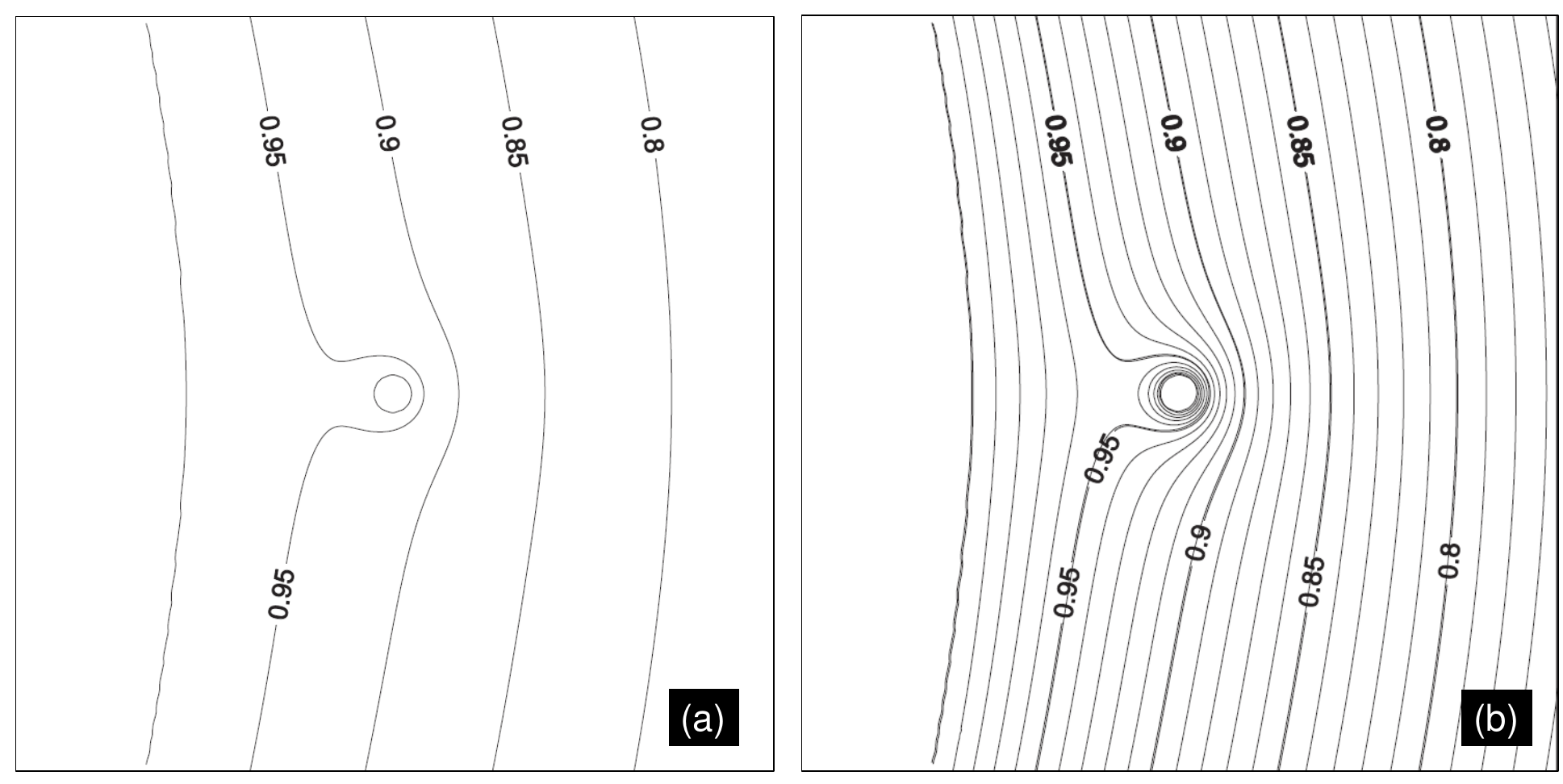}
	\caption{Cross section through the level surfaces of $ u_2 \left(P\right)$ for radius $R^{\ast}_{1}=100$ and the sink spacing $L^{\ast}=111$: (a) decrement step $ \vartriangle h = 0.05 $ ($ l_0 = 20 $); (b) decrement step $ \vartriangle h = 0.01 $ ($ l_0 = 100 $).}
	\label{fig8}
\end{figure}

By means of Theorem \ref{theorem3ab} the local concentration field may be provided in the entire manifold $ \Omega^- $. For this purpose let us introduce the concentration level surfaces: $ S \left( h \right) \subset \Omega^- $ such that  $ \left.  u_2 \left( P \right) \right| _{S} = h \in \left(0, 1 \right) $. As usually for discretization we perform uniform partition of the concentration range $ \left(0, 1 \right) $ with a fixed step value $ \vartriangle h $, such that $1$ correspond to the reaction surfaces and $0$ to infinity: $ h_l = l \vartriangle h$ for $l = \overline{0,l_0}$, where $ l_0=1/\vartriangle h $.

Therewith, for concreteness, we carry out numerical calculations assuming that $ R_1 > R_2 $. 
It is convenient to consider dependence of the local concentration  $ u_2 \left( P\right)$ on two dimensionless geometrical parameters in units of the smaller sink: reduced sink 1 radius $ R^{\ast}_{1}:=R_1/R_{2} $ and the sink spacing $L^{\ast}=L/R_2$.

Calculated cross sections through the appropriate level surfaces have been plotted in the order of decreasing of the hight $h$. In  Fig.~\ref{fig7} and (a) of Fig.~\ref{fig8} we used  decrement step $ \vartriangle h = 0.05 $, whereas calculations in plot (b) of Fig.~\ref{fig8} were performed at $ \vartriangle h = 0.01 $.

Analysis of above numerical results for two fully absorbing sinks allows us to formulate general
\begin{proposition}\label{CC2}
	There exists a number $ h_e \in \left(0, 1 \right) $ such that one can treat the domain 
	$\Omega _e:=\left\{ P: u_N\left( P\right)
	< h_e\right\} $ as an effective fully absorbing sink with respect to the scaled concentration $u^{\ast}_N\left( P\right):=  u_N\left( P\right) \slash h_e$ given on the configuration manifold $\Omega^{-} _e:=\left\{ P: u^{\ast}_N\left( P\right)
	> 1\right\} $.
\end{proposition}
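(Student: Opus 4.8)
The plan is to produce one level surface of $u_N$ that wraps all $N$ sinks into a single body, and then to observe that the trivial rescaling $u_N^{\ast}=u_N/h_e$ turns the field between that surface and the sinks into the field of a problem whose outer boundary carries the fully absorbing condition. Throughout I would use only the elementary properties of $u_N$: by Step~1 of Sec.~\ref{sec:solution} the problem (\ref{di1})--(\ref{di3}) has a unique classical solution, which, being harmonic on $\Omega^-$, is real-analytic there; the strong maximum principle together with the regularity condition at infinity (\ref{rev5i}) gives $0<u_N<1$ in $\Omega^-$, with $u_N=1$ on $\partial\Omega^-$ in the fully absorbing case treated in this subsection (the general Robin case is identical, needing only $\sup_{\partial\Omega^-}u_N<1$), and $u_N(P)\to 0$ uniformly as $\Vert\mathbf r\Vert\to\infty$. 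Consequently, for every $h\in(0,1)$ the set $K_h:=\overline{\Omega}^+\cup\{P\in\overline{\Omega}^-:u_N(P)\ge h\}$ is compact.

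Next I would pin down a suitable threshold $h_e$. Because $\Omega^-$ is open and connected, hence path-connected, every pair $i\ne j$ is joined by a compact arc in $\Omega^-$ on which $u_N$ attains a strictly positive minimum; taking $h$ below the least of these minima (and below $1$) puts all $N$ sinks in one connected component of $K_h$. Moreover $K_h$ can have no further component and $\mathbb{R}^3\backslash K_h$ no bounded component: a component of $\{u_N\ge h\}$ lying in $\Omega^-$, or a bounded component of $\{u_N<h\}\cap\Omega^-$, would be a compact set (its closure avoiding $\partial\Omega^+$, where $u_N>h$) on which the non-constant harmonic $u_N$ equals $h$ on the boundary while staying on the wrong side of $h$ in the interior, which the maximum principle together with real-analyticity forbids. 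So for all small enough $h$, $K_h$ is a compact connected set with connected complement that contains every sink; by Sard's theorem, applied to the analytic map $u_N$, I may in addition take $h_e$ in this range to be a regular value. Then $\Sigma:=\{P\in\Omega^-:u_N(P)=h_e\}=\partial K_{h_e}$ is a smooth closed surface, connected because $K_{h_e}$ and its complement are each connected, and $K_{h_e}$ is a single $C^\infty$ body enclosing the whole array. In the notation of the statement, $\Omega_e=\{u_N<h_e\}$ is the region beyond $\Sigma$, $\partial\Omega_e=\Sigma$, and $\Omega_e^-=\{u_N^{\ast}>1\}=\{u_N>h_e\}$ is the shell bounded by $\Sigma$ and the $N$ sink surfaces.

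The remaining step is to read off the rescaled problem. Put $u_N^{\ast}:=u_N/h_e$; it is harmonic wherever $u_N$ is, in particular on $\Omega_e^-$, and on the outer boundary $\partial\Omega_e=\Sigma=\{u_N=h_e\}$ it is identically $1$ — that is, the normalized depletion there takes its fully absorbing value, equivalently $n=c_B(1-u_N^{\ast})$ vanishes on $\Sigma$. Hence $\Omega_e$ may be treated as a single effective fully absorbing sink for the field $u_N^{\ast}$ on the configuration manifold $\Omega_e^-$, which is precisely the assertion; the non-dimensionalisation by $h_e$ is exactly what restores the absorbing value $1$ on $\Sigma$. (In the complementary reading, $u_N^{\ast}$ on the exterior $\{u_N<h_e\}$ of $\Sigma$ is the unique solution of $\nabla^2 u_N^{\ast}=0$, $u_N^{\ast}|_{\Sigma}=1$, $u_N^{\ast}\rightrightarrows 0$ at infinity, so that outside $\Sigma$ the $N$-sink array is indistinguishable from a single fully absorbing sink of shape $K_{h_e}$.)

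I expect the genuine obstacle to be the topological bookkeeping of the second paragraph: guaranteeing that for small $h$ the level set is a \emph{single} closed surface, that $K_h$ has neither spurious components nor interior cavities, and that such an $h$ can simultaneously be chosen regular. This is where the two distinctive features of the problem are used in an essential way — the strong maximum principle, which forbids interior local maxima of $u_N$ and hence all spurious bounded pieces of $\{u_N\ge h\}$ and of its complement, and the uniform decay (\ref{rev5i}), which makes the super-level sets compact — with Sard's theorem then furnishing the regular value. The coalescence of the $N$ per-sink iso-concentration surfaces into one as $h$ decreases to $h_e$ is exactly the phenomenon displayed by the numerical plots in Figs.~\ref{fig7}--\ref{fig8}.
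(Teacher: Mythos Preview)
Your argument is correct and considerably more than the paper offers. In the paper, Proposition~\ref{CC2} is not proved at all: it is stated as a conclusion drawn from the numerical level-set plots for $N=2$ (Figs.~\ref{fig7}--\ref{fig8}), and the only justification given is the sentence ``For instance, at $N=2$ Fig.~\ref{fig8} shows that one can take $h_e=0.95$\ldots''. There is no analytic argument in the text for the existence of a threshold $h_e$, for the connectedness of the super-level set, or for the absence of spurious components.

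What you supply---compactness of $K_h$ from the uniform decay (\ref{rev5i}), the path argument to merge the $N$ sinks into one component, the maximum-principle exclusion of extra components and interior cavities, and Sard's theorem to make $\Sigma$ a smooth connected surface---turns the paper's numerical observation into an actual theorem. Your parenthetical ``complementary reading'' (that on the unbounded region $\{u_N<h_e\}$ the rescaled field solves the single-sink Dirichlet problem with absorbing boundary $\Sigma$) is in fact the physically intended content, matching the paper's ``snowman-shaped effective fully absorbing sink'' remark; the proposition as literally written swaps the roles of $\Omega_e$ and $\Omega_e^-$, and you were right to cover both. The only place your write-up is terse is the connectedness of $\Sigma$ from the connectedness of $K_{h_e}$ and its complement, but that is standard once $\Sigma$ is known to be a smooth closed surface.
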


For instance, at $N=2$ Fig.~\ref{fig8} shows that one can take $ h_e = 0.95 $ to get snowman-shaped effective fully absorbing sink with the configuration manifold $  \Omega_e^-= \left\{ P: u_2\left( P\right) 
> 0.95\right\} $.

\subsection{Calculations of the screening coefficients}

Within this subsection to calculate the screening coefficients due to siffusive interaction we shall use the Corollary~\ref{corollary0} of the Theorem~\ref{theorem3a} on local representation of the solution.

For two fully absorbing sinks of different sizes numerical estimations show that the corresponding screening coefficients obey inequality $ J_1 \geq J_2 $ at all sink-sink center separations $ L > R_1 + R_2 $, wherein $ J_1 \to 1 $ as $ R_2 \to 0 $.  Furthermore, there is no diffusive interaction between sinks and any number $ N_p $ ($ 1 \leq N_p \leq N$) of isolated points. Really for the isolated point sinks $ R_j = \varepsilon_{ij}=0 $ and, therefore, corresponding mixed-basis matrix elements in the resolving ISLAE (\ref{di9w}) are equal to zero.

\textbf{Two equal fully absorbing sinks.}

Again assume that both sinks are fully absorbing.
We chose this case taking into account the fact that iteration solution of the resolving ISLAE (\ref{di9S}) converges up to the contact of sinks at $ N=2 $~\cite{Traytak06}.

For simplicity sake consider here two equal sinks $R_1=R_2=R$, so we have: $ \varepsilon_{21}  = \varepsilon_{12} = \epsilon  \in (0, 0.5) $ and for screening coefficients $J_1\left(\epsilon \right)=J_2\left(\epsilon \right)=J\left(\epsilon \right)$. Solving the ISLAE (\ref{di9S}) by simple iterations as $ \epsilon \rightarrow 0+ $ we find:
\begin{eqnarray}
	J\left(\epsilon \right) = J_{a}\left(\epsilon \right) + {\cal O}\left({\epsilon}^6\right) \,, \label{ex0}\\ 
	J_{a}\left(\epsilon \right) = 1-\epsilon + {\epsilon}^2 - {\epsilon}^3 + 2{\epsilon}^4 - 3 {\epsilon}^5 \,. \label{ex0a}
\end{eqnarray} 
It is noteworthy that angular dependence in (\ref{ex0a}) arises starting from the terms $ {\cal O}\left({\epsilon}^4\right) $. 

To show that $ J_{a}\left(\epsilon \right) $ very nearly approximates to the exact screening coefficient we need error analysis.  we quantify the difference between the solutions of these test problems. 
For this purpose let us estimate the above approximation {\it relative error bound} defined as
$$ \delta_m := \sup_{\epsilon \in (0, 0.5)}\delta \left(\epsilon \right) =  \lim_{\epsilon \rightarrow 0.5-}\delta \left(\epsilon \right)\,, $$
where $ \delta \left(\epsilon \right):=  \vert J \left(\epsilon \right) - J_{a} \left(\epsilon \right) \vert /J \left(\epsilon \right) $ is the relative error.

In the case under consideration with the aid of bispherical coordinate system Samson and Deutch obtained the exact reaction rate~\cite{Deutch77}
\begin{align}
	J\left(\epsilon \right)=\sum\limits_{n=1}^\infty \left( -1\right) ^{1+n}\frac{\sinh
		\left( \mu_0 \right) }{\sinh \left( n\mu_0 \right) },
	\label{ex0b}
\end{align}
where 
$ \mu_0 :=\cosh ^{-1}\left(  1/{2\epsilon }\right) $.
For the contact, i.e. when $\epsilon \rightarrow 0.5-$ (or $%
\mu_0 \rightarrow 0+$) we have fully absorbing dumbbell system and formula (\ref{ex0b}) yields the exact limit
$$ \lim_{\epsilon \rightarrow 0.5-}J\left(\epsilon \right) =\ln 2  $$ and, therefore, we get desired upper bound for the relative error
\begin{align}
	\delta \left(\epsilon \right) < \delta_m = 1- \frac{21}{32 \ln 2} \approx 0.053\,. \label{ex0c}
\end{align}
One can see that relative error decreases rather rapidly when $ {\epsilon} $ decreases, e.g., $ \delta \left(0.4 \right) \approx 0.015 $, $ \delta \left(0.3 \right) \approx 0.003 $ and besides $ J \left(\epsilon \right) > J_{a} \left(\epsilon \right)$ for all $ \epsilon \in (0, 0.5) $.

\begin{figure}[t!]
	\centering
	\includegraphics[width=0.65\textwidth]{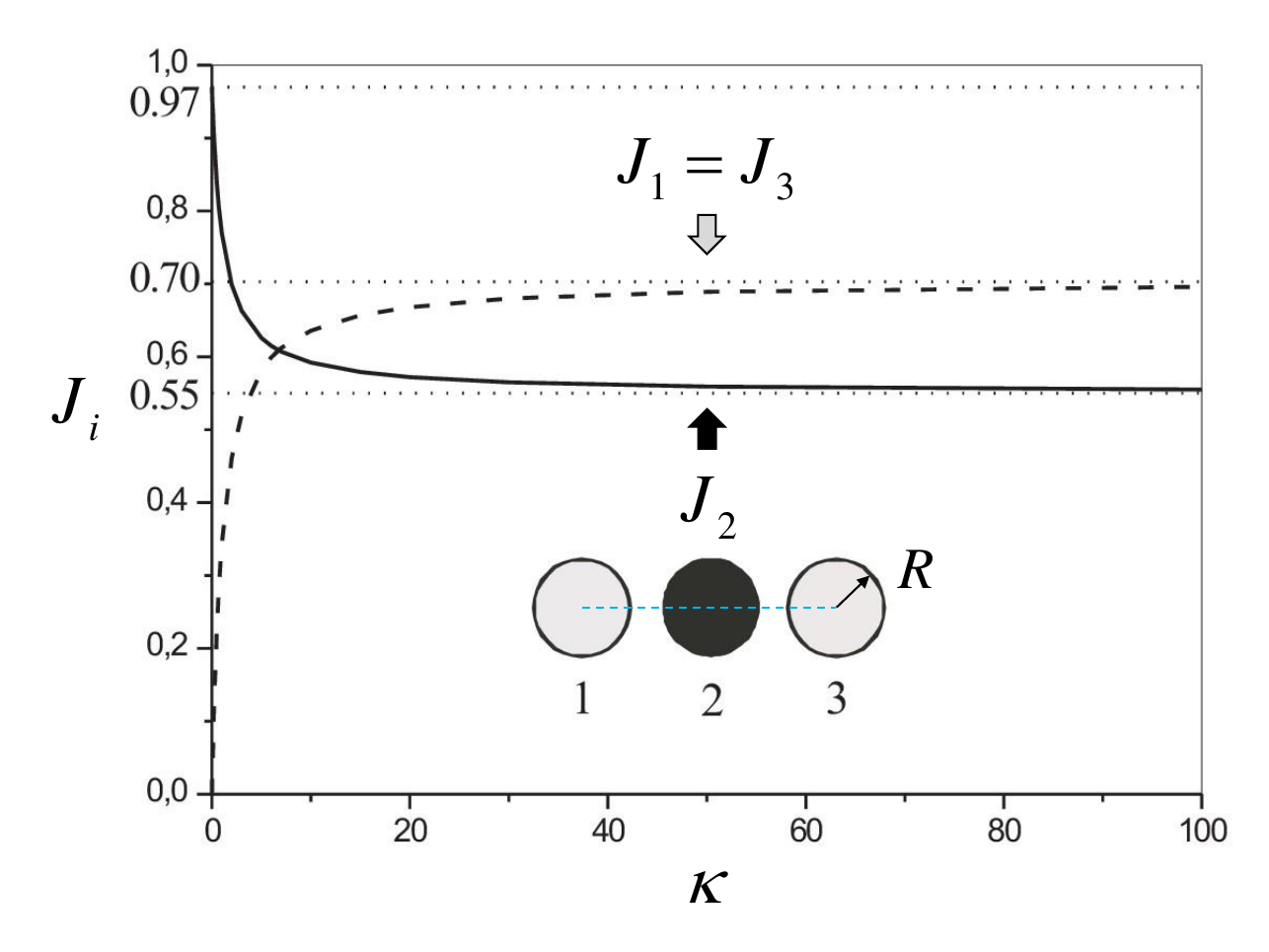}
	\caption{Screening coefficients $ J_i$ as functions of $ \kappa $: long-dashed line indicates screening coefficients $ J_1 = J_3$ for partially reflecting sinks 1 and 3; the solid line indicates  the fully absorbing sink 2 with $ J_2$.}
	\label{fig9}
\end{figure}

\textbf{Fully absorbing sink and two partially reflecting sinks of equal radii.}

Finally, we also give a numerical example for screening coefficients (see Subsection~\ref{subsec:rate}) in the case of three sinks of the same radii ($R_i = R  $, $ i = \overline{1,3} $) equally spaced at spacing $ a $ along their line of centers. Since the sinks numbering system is arbitrary we number sinks from left to right (see Fig.~\ref{fig9}). Moreover we suppose that sinks possess different reactivities, namely, sink 2 is fully absorbing ($\kappa_2 \to \infty$) while the other two sinks are of the same, but varied reactivity ($\kappa_1 = \kappa_3 =\kappa \in \left[0, \infty\right) $). Thus, the reactivities set is simplified to $ \kappa^{(3)}:=\left\{ \kappa, +\infty, \kappa \right\}$. Introduce the notations
$ J\left(\kappa\right):= J_1\left(\kappa^{(3)}\right)=J_3\left(\kappa^{(3)}\right) $ and $ J_2\left(\kappa\right):= J_2\left(\kappa^{(3)}\right) $ and assume that $ a/R = 3 $. 

Fig.~\ref{fig9} shows that in both sinks 1 (3) and sink 2 cases influence of diffusive interaction slighty decreases with increasing intrinsic rate constant $ \kappa $. Furthermore, screening coefficient $ J\left(\kappa\right) $ ($ J_2\left(\kappa\right) $) is monotonically increasing  (decreasing)
approaches corresponding horizontal asyymptotes 
\begin{align}
	\lim_{\kappa \rightarrow +\infty } J\left(\kappa \right) \approx 0.70  \,, \quad  
	\lim_{\kappa \rightarrow +\infty } J_2\left(\kappa \right) \approx 0.55 \,. \label{ex1}
\end{align}
In Fig.~\ref{fig9} short-dashed straight line corresponds to the horizontal asymptotes (\ref{ex1}) for sinks 1 (3) and sink 2, repectively.

For the fully reflecting obstacles limit we have (see corresponding short-dashed straight line)
$$ \lim_{\kappa \rightarrow 0+ } J\left(\kappa \right) = 0  \,, \quad  
\lim_{\kappa \rightarrow 0+ } J_2\left(\kappa \right) \approx 0.97 \,. $$
Clearly, since sink 2 is hidden from both sides by neighbors, even inert obstacles reduce the screening coefficient.

Calculations were carried out with by means of the reduction method~\cite{Kantorovich,Akilov}. Note in passing that, although for the microstructure under consideration when $\kappa \to \infty $ relation (\ref{14e}) gives $ M_3 = 1 $,  the simple-iteration method also converges.

\section{Discussion}\label{sec:discussion}

This section contains discussion of some distinctive characteristics of the method under consideration and some important conclusions of its application for description of the diffusive interaction between sinks. 

First point out that regardless of its technical implementation the GMSV belongs to the class of methods, when corresponding approximate solution satisfies the diffusion equation (\ref{di1}) and regularity at infinity (\ref{di3}), while the boundary conditions (\ref{di2}) are obeyed only approximately. This fact is of primary concern to theoretical modelling of the diffusion-influenced reactions in an arbitrary ensembles of $ N $ sinks. Here, it is worth noting that, e.g., performing solution of the exterior elliptic boundary value problems in unbounded domains by grid methods one faces serious difficulties concerning condition at infinity.


\subsection{Advantages of the ICT technique}\label{subsec:discussionAdvan}

We have solved the problem on diffusive interaction between spherical sinks by means of local Cartesian coordinate systems, rather than by introducing polar spherical coordinates. As it might be appeared at first sight, Cartesian coordinates serve well to solve diffusion problems for rectangular geometries, and not for the problems containing spherical domains.
However, it turns out that above assertion generally is not a case. In connection with this it is significant that Hinsen and Felderhof claimed the following: "... the spherical harmonics have disadvantages in numerical calculations." and below they continued: "in both analytical and numerical calculations the Cartesian moments are often more convenient"~\cite{Felderhof92}.

It appears that applying of irreducible Cartesian tensors technique leads to rather fast convergence of the local concentration expansion (\ref{di6u}). Besides, this approach, contrary to standard one based on the irregular solid spherical harmonics $ {\psi}^-_n \left( r_i, \theta_i , \phi_i \right) $~\cite{Galanti16a,Grebenkov19,Piazza19}, allows us to perform all numerical calculations for the real-valued functions only. Note that real-valued solid harmonics have a number of advantages over often used the complex-valued ones.
The choice of irregular irreducible Cartesian tensors $ \bm{X}_{n}^{-}\left(\mathbf{r}_i \right) $ (\ref{it1vd1}) as basis of functions for solutions of the exterior steady-state diffusion problems in the configuration manifold $  \Omega^- $, implies a very rapid convergence of the numerical solution, as confirmed here with several examples. In addition, contrary to the spherical solid harmonics, there are relatively simple recurcion formulas to calculate analytically the irreducible Cartesian tensors (see Appendix~\ref{subsec:appendixEfim}). Thus, our result is in complete agreement with above Hinsen-Felderhof statement.

Moreover, from pure mathematical viewpoint curvilinear coordinates have one more serious drawback.
The preferred usage of coordinates corresponding to the sink surface is a matter of common knowledge~\cite{Traytak18}. In other words, for a given $ i $-th sink $ \Omega_i $ it is expedient to choose an appropriate local curvilinear coordinates $ \lbrace O_i; {\bm \zeta} \rbrace \subset {\mathbb R}^3 $, where $ {\bm \zeta} = (\zeta_{1}, \zeta_{2}, \zeta_{3}) $. Therefore, some coordinate surface of this coordinate system, say at a fixed index $  m_i \in \overline{1,3} $ there exists a real constant $ c_{i} $ such that the surface of $ i $-th sink reads:
$ \partial \Omega_i = \lbrace {\bm \zeta} \in {\mathbb R}^3 : \zeta_{m_i}= c_{i} \rbrace $.
It is clear that in the Cartesian coordinate system the spherical geometry of the boundary leads to a rather complex formulation of the boundary conditions. So, it seems quite reasonable to use appropriate local spherical coordinates $ \lbrace O_i; r_i, \theta_i , \phi_i \rbrace $ to describe diffusion of $B$-reactants towards $ i $-th spherical sink in domain $ \Omega_i^-:= {\mathbb R}^3\backslash \overline{\Omega }_i $. However, the regularity condition for coordinates to be curvilinear is broken at some points. Particularly, in above case of polar spherical coordinates to have one-to-one property in $ \Omega_i^- $, $Ox_3$ axis and half-plane $ \lbrace x_1 \geq 0, x_2 =0 \rbrace $ should be excluded~\cite{Fomenko09}. A number of other advantages of the ICT technique are described also elsewhere~\cite{Coope65,Hess15,Snider18}.

Finally note that from the analytical point of view, obtained in Sec.~\ref{sec:solution} solution $u_N \left( P \right)$ keeps an explicit analytical dependence on $P$. In turn, from the numerical standpoint, this property allows one to truncate the resolving ISLAE at lower sizes as compared to other numerical methods such as, e.g., a finite elements method~\cite{McCammon13}.

\subsection{Description of the diffusive interaction}

Let us dwell now on several particularly important points on the diffusive interaction which directly follows from our results.

If an array comprises only fully reflecting obstacles we have no diffusive interaction between them. On the other hand in situation when a given array contains at least one absorbing sink the diffusive interaction exists and we should also take into account influence of inert obstacles. 

It is very important to note that here the ICT technique was applied to the diffusion problems under consideration posed in the unbounded domains, whereas the same approach is also applicable to the similar interior boundary value problems describing reactions occurring inside a spherical region (see Refs.~\cite{Galanti16a,Grebenkov19,Piazza19} for details). So, the GMSV in terms of ICT can be very useful not only in the theory of diffusion-influenced reactions, but also in biology, when reactions can take place into bounded domains~\cite{Vazquez10}. We note, incidentally, that Smoluchowski's rate constant (\ref{Zv01}) cannot describe even one-sink reaction inside a larger spherical domain since it does not consider diffusive interaction effect due to the boundary of englobed domain.

In Ref.~\cite{Ovchinnikov89} authors concluded that their system of equations (3.97) solves the $N$-sink diffusion problem under Smoluchowski boundary conditions exactly. However, results obtained in  Sec.~\ref{sec:solution} show that the system (3.97) has been derived for the monopole approximation and, therefore, the above conclusion is incorrect.

It is interesting to note that, clustering of sinks was found to reduce the absorbing rate even more significantly compared with nonclustered systems~\cite{Richards86}. The method gives us possibility to tackle trapping reactions in several clusters of obstacles and sinks in the same way as for separated sinks.

Taking into account our previous results~\cite{Traytak92,Traytak13} and Proposition~\ref{CC2} we can conclude that diffusive interaction "glues" fully absorbing sinks effectively onto a fully absorbing effective sink. Therefore the whole $N$-sink array behaves like a single, isolated fully absorbing sink of some characteristic size. 


For a number of applied tasks, one should use the derived microscopic solution (\ref{mr1}) (or (\ref{di6u})) along with microscopic trapping rates (\ref{d11}) calculating appropriate configurationally averaged values to obtain effective macroscopic values associated with heterogeneous media comprising different types of sinks~\cite{Brailsford76,Loewenberg88,Borodin94,Torquato02}.

Finally, let us dwell on one more possible application of the approach at issue. In Ref.~\cite{Traytak19} we have established for the first time that even classical passive phoresis of microparticles (i.e., motion due to a constant gradient of some scalar field) should be treated by means of their interaction with so-called surrounding bodies, which are responcible for the external gradient. In this context, it is worth to note that the first and only attempt to apply a similar approach for studing particle interactions for thermophoresis was undertaken by Keh and Chen in Ref.~\cite{Chen95}. They used so-called polyadic tensors but in fact, these tensors are the same as the ICT up to a constant factors. Again the GMSV in terms of ICT can be used to investigate these quite a challenging problems.

\section{Conclusions and Perspectives}\label{sec:conclusion}

%
%
%
%
%
%
%
%
%
%

%
%
%
%
%
%
%
%
%
%
%
%

Motivated through new applications in physics, chemistry and biology this paper deals with the generalized method of separation of variables by means of the irreducible Cartesian tensors formalism extending our previous study on the diffusion-controlled reactions to the general case of diffusion-influenced reactions. We have developed here a  complete modification of the above mentioned method which essentially hinges on a number useful properties of the irreducible Cartesian tensors. Formally we reduced the steady-stae diffusion problem to a set of $N$ Laplace equations with respect to partial solution together with the prescribed $N$ boundary conditions.

In order to determine the place and significance of our research we have presented a rather comprehensive review of literature devoted to the theoretical methods for diffusion-influenced reactions on spherical sinks.
Besides, we paid close attention to the mathematical aspects of the appropriate diffusion problem, namely: (a) rigorous statement of the problem, including terminology clarification, (b) refining solution algorithm and (c) justification of the method convergence. The formulation of the geometric part of the diffusion boundary value problems was performed the most fully than anywhere else. We highlighted the importance of a concept of a smooth manifold for the future theoretical investigations on diffusion-influenced processes occuring in domains with multiply connected boundaries.  We have given a proof of the irregular to regular translation addition theorem in terms of the irreducible Cartesian tensors which seems to be the simplest among previously known theorem for solid harmonics in terms of polar spherical coordinates. For the first time corresponding solution algorithm was divided into eight clear steps. Then we implemented this algorithm step by step in full detail to solve the posed diffusion problem. 
A convergence analysis was conducted to verify the possibility of simple-iteration method usage in order to solve appropriate resolving infinite system of linear algebraic equations. It turns out that the method of reduction may be used to solve numerically the resolving system to any degree of accuracy even when the simple-iteration method is inapplicable. Wherein the desired accuracy of the solution can be achieved just by increasing the truncation degree.
Furthermore, note that the method has advantages such as shorter calculation time and more accurate results compared to purely numerical methods. 

Our results fully confirm known Hinsen-Felderhof statement that it is more preferred to work with harmonic functions using their irreducible Cartesian tensor form throughout diffusion-reaction problems even for ensembles of spherical sinks.

This research provides also a better understanding of the reactants diffusion among arrays comprising arbitrary collection of arbitrary many fully absorbing, reflecting and partially reflecting sinks from both mathematical and physical viewpoints.

The approach developed here is applicable to a wide class of reaction-diffusion systems, providing useful tests for numerical calculations along with approximate analytical estimates of the local concentration field and reaction rates in many specific cases of practical importance. 


Finally it is important to highlight that numerical methods can handle diffusion problems for complex geometries in unbounded domains, yet they require to introduce an articial boundary, reducing the original exterior domain to a bounded domain. The generalized method of separation of variables does not have this drawback and provides simple direct numerical and analytical calculations.
So we can obtain approximations as the truncated point multipole expansions up to the monopole, dipole, quadrupole etc. order. 

It is essential to note that self-phoresis of the Janus particles moving due to a chemical	reaction in the surrounding fluid is investigated by means of monopole approximation~\cite{Stark18}.
Using the approach considered here this important problem may be attack taking into account multipole corrections~\cite{Traytak19}.

The last but not least: in the present study we clearly show that the theory of 3D differentiable manifolds is an appropriate mathematical tool to describe diffusion and reactions in domains with multiply connected boundaries. In this context, we note that, rigorously posing and solving the appropriate diffusion problems, we, nevertheless, endeavored to avoid considering a number of inherent subtle mathematical points. This issues will be a topic for the future work.

\renewcommand{\theequation}{A.\arabic{equation}}  
\setcounter{equation}{0}  
\section*{Appendix. Background on the irreducible Cartesian tensors}
\label{sec:appendix}

To make this paper maximally self-contained and facilitate applications in diffusion-influenced reactions theory, we provide here necessary mathematical notations, definitions and facts, required in rigorous formulation and applications of the ICT technique.

\subsection{Definitions and examples}\label{sec:appendixDef}

It is well-known that there exits two alternative approaches to harmonic functions theory based on either Laplace's representation of solid harmonics with respect to the polar spherical coordinates or Maxwell's theory of poles given in terms of the successive derivatives of fundamental solution written for a Cartesian coordinate system~\cite{Courant53,Marchenko}. The latter approach may be formulated with the aid of the irreducible Cartesian tensors.

Generally speaking, the irreducible tensors transform according to irreducible representation of the special rotation group  $SO\left( 3\right)$. In other words the irreducible tensors like spherical harmonics form a basis in the representation space of group $SO\left( 3\right)$ (see~\ref{subsec:appendixCon}).	
To our knowledge, the ICT simple mathematical description suitable for applied physicists has been provided in Ref.~\cite{Coope65}. Moreover, for many years there has been a significant number of publications devoting to the use of the ICT formalizm in a wide variety of applications (see, e.g., Refs.~\cite{Brenner64,Anderson74,Cipriani82,Martynov98} and references therein).
As regards adopted notations and definitions, in this paper we basically follow known books by Hess~\cite{Hess15} and Snider~\cite{Snider18}, where  comprehensive discussions on the various facets of irreducible Cartesian tensors are presented.

A real tensor field of $ n $-th order defined on $ {\mathbb R}^3 $ we denote by $\bm{T}_n \left(\mathbf r \right)$ in
coordinate-free notation. The same tensor field can be also described by its components with respect to a fixed orthonormal basis  $  \lbrace {\mathbf e}_{\alpha} \rbrace_{\alpha = 1}^{3} $ given in $ \lbrace O; \mathbf r \rbrace $, i.e. $ {\bm T}_n \left({\mathbf e}_{\gamma _1}, \ldots , {\mathbf e}_{\gamma _n} \right)  = T_{{\gamma _1} \ldots {\gamma _n}}\left(\mathbf r \right) $. Here and in the following, $ T_{{\gamma _1} \ldots {\gamma _n}}\left(\mathbf r \right) $ are $ 3^n $ components of $ n $-rank Cartesian tensor field $\bm{T}_n \left(\mathbf r \right)$. It is clear that the action of the nabla operator with $ \partial _{r_{\mu}} := \nabla _{\mu} $
to a field, which is a tensor of rank $ l $, yields a tensor of rank $l+1  $. We recall in passing that one should not distinguish between co- and contravariant index due to duality of Cartesian basis~\cite{Snider18}.

The product symbol $ \odot^q $ represents a {\it full tensorial contraction of multiplicity $q$} or shortly the {\it $q$-fold contraction} between two tensors with the convention $ {\bm S}_0 \odot {\bm T}_{0} :=  {\bm T}_{0} $. Particularly, for tensors $ {\bm S}_n $ and $ {\bm T}_{m} $ and $ q \leq n, m $ we have a tensor 
$$ {\bm S}_n \odot^q {\bm T}_{m} := \sum_{\gamma _1} \ldots \sum_{\gamma _q} S _{\gamma _1 \ldots \gamma _n}\odot^q T_{\mu _1 \ldots \mu _m}\,  $$
of rank $n+m-2q $, whereas it is a scalar if $ n=m=q $.

Any tensor, which cannot be reduced to a tensor of lower rank is known as an {\it irreducible tensor}~\cite{Hess15}.
While a general (reducible) $ n$-th order Cartesian tensor $ {\bm T}_n $ has $ 3^n $ components, corresponding irreducible tensor besides the rank $ n $ is likewise characterized by its weight $ j $~\cite{Coope65}. It is common knowledge that latter tensor has $ 2j+1 $ independent components, which form the basis of the $ j $ weight irreducible representation of the full 3D special rotation group $ SO \left( 3\right)$ including mirror~\cite{Snider18}. 

Furthermore, among all ICT obtained from a given $ n $ rank tensor by means of reductions there exists one unique irreducible tensor with the same weight and rank $ n=j $. It is known in literature as ICT {\it in natural form}~\cite{Coope65}. We deal with irreducible tensors in natural form because they best suited to apply within the scope of the GMSV.

Thus, for $n$ rank Cartesian tensors our focus lies on their irreducible parts in natural form at that we use notations $\mathop {{\bm T}_n }\limits^{%
	\rule[-.04in]{.01in}{.05in}\rule{.15in}{.01in}\rule[-0.04in]{.01in
	}{.05in}}$ and $\mathop {T_{{\gamma _1} \ldots {\gamma _n}} }%
\limits^{\rule[-.04in]{.01in}{.05in}\rule{.34in}{.01in}%
	\rule[-0.04in]{.01in}{.05in}} $, where symbol $  \mathop {\left(\: ... \:\right)}%
\limits^{\rule[-.04in]{.01in}{.05in}\rule{.25in}{.01in}%
	\rule[-0.04in]{.01in}{.05in}} $ stands for the 
irreducible part of a Cartesian tensor $ {\bm T}_n $ of rank $ n \geq 2 $  with respect to the special rotation group $SO(3)$~\cite{Hess15,Snider18}.

The irreducible tensor in natural form is appeared to be fully symmetric and traceless, therefore, this allows us to use rather simple definition of the irreducible Cartesian tensors~\cite{Junk12,Mane16}.
\begin{definition}\label{ICT1}
	A Cartesian tensor of rank $  n \geq 2 $ is called irreducible Cartesian tensor, if it is: (i) totally symmetric under an arbitrary permutation of the indices; and (ii) traceless under the contraction of any pair of indices.
\end{definition}

An important point is that according to definition (\ref{ICT1}) the ICT can be expressed in terms of successive partial derivatives of the generating function $ 1/r $ for all points in $\mathbb{R}^3 \backslash \left\{ {\mathbf 0} \right\}$.	
\begin{proposition} \label{ICT2}
	An irreducible Cartesian tensors of rank $n$ (when $ n \geq 2 $)  may be defined as~\cite{Courant53,Hess15} 
	\begin{equation}
		\mathop {\hspace{0.2cm}{r}_{\gamma _1}{\ldots r}_{\gamma _n}}\limits^{%
			\rule[-.04in]{.01in}{.05in}\rule{.45in}{.01in}\rule[-0.04in]{.01in
			}{.05in}}=\frac{\left( -1\right) ^n}{\left( 2n-1\right) !!}r^{2n+1}\partial
		_{r_{\gamma _1}} \ldots \partial _{r_{\gamma _n}}\left( \frac 1r\right)  \,.
		\label{it1v}
	\end{equation}
	Here $r_{\gamma _\nu }$ are Cartesian coordinates of a vector $\mathbf{r}$ ($\gamma _\nu =\overline{1,3}$),
	and $\left( 2n-1\right) !!:=1\cdot 3\cdot \ldots \cdot \left( 2n-1\right) $ indicates the double factorial for odd natural numbers with the convention $ (-1)!!:=1 $.
\end{proposition}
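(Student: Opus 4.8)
The plan is to establish the identity
$$
\mathop {\hspace{0.2cm}{r}_{\gamma _1}{\ldots r}_{\gamma _n}}\limits^{
  \rule[-.04in]{.01in}{.05in}\rule{.45in}{.01in}\rule[-0.04in]{.01in
  }{.05in}}=\frac{\left( -1\right) ^n}{\left( 2n-1\right) !!}r^{2n+1}\partial
  _{r_{\gamma _1}} \ldots \partial _{r_{\gamma _n}}\left( \frac 1r\right)
$$
by showing that the tensor on the right-hand side satisfies the two defining properties of Definition~\ref{ICT1} — total symmetry and tracelessness — and that it reduces, after removing the scalar factor, to the natural-form irreducible part of the rank-$n$ tensor $r_{\gamma_1}\cdots r_{\gamma_n}$. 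First I would introduce the shorthand $T_{\gamma_1\ldots\gamma_n}(\mathbf r):=(-1)^n r^{2n+1}\partial_{r_{\gamma_1}}\cdots\partial_{r_{\gamma_n}}(1/r)$ and prove three claims: (a) $T_{\gamma_1\ldots\gamma_n}$ is totally symmetric; (b) $T_{\gamma_1\ldots\gamma_n}$ is traceless on any pair of indices; (c) the leading term of $T_{\gamma_1\ldots\gamma_n}$, in the sense of the polynomial expansion in $\mathbf r$, is $(2n-1)!!\,r_{\gamma_1}\cdots r_{\gamma_n}$ plus terms proportional to Kronecker deltas contracted with lower powers. Claims (a)–(c) together pin down $T_{\gamma_1\ldots\gamma_n}/(2n-1)!!$ as precisely the natural-form irreducible projection of $r_{\gamma_1}\cdots r_{\gamma_n}$, since that projection is the unique fully symmetric traceless tensor differing from $r_{\gamma_1}\cdots r_{\gamma_n}$ only by trace terms.

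Claim (a) is immediate because partial derivatives commute. For claim (b), the key observation is that $1/r$ is harmonic away from the origin, $\nabla^2(1/r)=0$ on $\mathbb R^3\setminus\{\mathbf 0\}$; contracting two indices, say $\gamma_1=\gamma_2=\mu$ and summing, produces a factor $\sum_\mu \partial_{r_\mu}\partial_{r_\mu}=\nabla^2$ acting (after commuting it past the remaining derivatives) on $1/r$, which vanishes. Thus every trace of $T_{\gamma_1\ldots\gamma_n}$ is zero identically on the punctured space, and since the entries of $T$ are polynomials in $\mathbf r$ times $r^{2n+1}\cdot r^{-(2n+1)}$-type factors — more precisely, one checks by induction that $r^{2n+1}\partial_{r_{\gamma_1}}\cdots\partial_{r_{\gamma_n}}(1/r)$ is a homogeneous polynomial of degree $n$ in $\mathbf r$ — the identity extends to all of $\mathbb R^3$ by continuity. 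For claim (c) I would argue by induction on $n$: writing $\partial_{r_{\gamma_n}}(1/r)=-r_{\gamma_n}/r^3$ and applying the Leibniz rule, the highest-degree contribution at each differentiation step multiplies the running polynomial by $r_{\gamma_n}$ and the prefactor by an odd integer, yielding the double-factorial normalization, while all other terms carry an explicit Kronecker delta.

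The main obstacle I anticipate is claim (c): disentangling the combinatorics of the Leibniz expansion to verify that the non-leading terms are exactly trace terms (contractions with $\delta$'s) and not some other reducible garbage, and that the leading coefficient is precisely $(2n-1)!!$. A clean way around the bookkeeping is to invoke the standard fact — which can itself be proved by the same harmonicity argument — that the space of degree-$n$ harmonic homogeneous polynomials on $\mathbb R^3$ has dimension $2n+1$, and that $T_{\gamma_1\ldots\gamma_n}$, being symmetric, traceless, and homogeneous of degree $n$, must therefore lie in this space; uniqueness of the symmetric traceless completion then forces $T_{\gamma_1\ldots\gamma_n}/(2n-1)!!$ to coincide with the irreducible part in natural form, and a single evaluation (e.g. along $\mathbf r = r\,\mathbf e_3$, or comparison of the coefficient of $r_1^n$ when all $\gamma_\nu=1$, using $\partial_{r_1}^n(1/r)\big|$ via the known expansion) fixes the normalization constant. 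I would also remark that the convention $(-1)!!:=1$ makes the formula consistent at $n=1$ (giving $r_{\gamma_1}$ itself) and that the scalar case $n=0$ is the generating function $1/r$, so the statement dovetails with the earlier degenerate addition theorem (Theorem~\ref{theorem1}).
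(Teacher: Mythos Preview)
Your proposal is correct and follows essentially the same approach as the paper: both verify that the right-hand side is totally symmetric (because partial derivatives commute) and traceless on any pair of indices (because $1/r$ is harmonic on $\mathbb{R}^3\setminus\{\mathbf 0\}$), which are precisely the two requirements of Definition~\ref{ICT1}.

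The one difference worth noting is that the paper's proof stops there, treating the proposition purely as the assertion that the formula defines \emph{some} irreducible Cartesian tensor, whereas you go further and argue (via your claim (c) and the dimension count for harmonic homogeneous polynomials) that the right-hand side is specifically the natural-form irreducible projection of $r_{\gamma_1}\cdots r_{\gamma_n}$ with the correct normalization $(2n-1)!!$. Your extra step is not strictly required by the paper's reading of the statement, but it is the more complete justification of the overbar notation on the left-hand side, and the normalization argument you sketch (induction on $n$ via $\partial_{r_{\gamma_n}}(1/r)=-r_{\gamma_n}/r^3$, or a single evaluation to fix the constant) is the standard way to close that gap.
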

\begin{proof}
	Check that the tensors introduced by Eq. (\ref{it1v}) are symmetric for any pair of indexes and traceless.
	
	(i)  Indeed,  Eq. (\ref{it1v}) involves directly that the introduced tensor is symmetric with respect to permutation of its indexes 
	\begin{equation}
		\mathop {\hspace{0.2cm}{r}_{\gamma _1} \ldots {r}_{\alpha} \ldots {r}_{\beta}\ldots {r}_{\gamma _n}}\limits^{%
			\rule[-.04in]{.01in}{.05in}\rule{1.15in}{.01in}\rule[-0.04in]{.01in
			}{.05in}} = \mathop {\hspace{0.2cm}{r}_{\gamma _1} \ldots {r}_{\beta} \ldots {r}_{\alpha}\ldots {r}_{\gamma _n}}\limits^{%
			\rule[-.04in]{.01in}{.05in}\rule{1.15in}{.01in}\rule[-0.04in]{.01in
			}{.05in}}\,.	\label{it1vc}
	\end{equation}
	(ii) Moreover, due to harmonicity of the generating function 		
	\begin{equation}
		{\bm{\nabla}}^2 \left( \frac{1}{r} \right)  = \delta_{\alpha {\beta}} \odot^2 \partial_{r_{\alpha}}\partial_{r_{\beta}} \left( \frac{1}{r} \right)= 0 \quad \mbox{for} \quad  r>0\,  \label{int3x}
	\end{equation}
	tensors (\ref{it1v}) are also traceless with respect to any pair of their indexes
	\begin{equation}
		\delta_{\alpha {\beta}} \odot^2  \mathop {\hspace{0.2cm}{r}_{\gamma _1} \ldots {r}_{\alpha} \ldots {r}_{\beta}\ldots {r}_{\gamma _n}}\limits^{%
			\rule[-.04in]{.01in}{.05in}\rule{1.15in}{.01in}\rule[-0.04in]{.01in
			}{.05in}} = 0 \, , 	\label{it1vd}
	\end{equation}
	where $ \delta_{\alpha {\beta}} $ is the Kronecker delta. Thus Eq. (\ref{it1v}) satisfies both points of definition (\ref{ICT1}).
\end{proof}
\begin{remark}
	Note furthermore that traces cannot be defined for any scalars $ a_0 \in {\mathbb R} $ (tensors of rank $ 0 $) or vectors $ \left({r}_{1}, {r}_{2}, {r}_{3}\right) $ (tensors of rank $ 1 $)). Nethertheless they are harmonic and, therefore, irreducible by definition~\cite{Hess15}.  
\end{remark}

In applications are also used the {\it multipole potentials} defined by: $ \bm{X}_{n}^{ \mp }\left(\mathbf{r} \right)={X}^{ \mp } _{\gamma _1 \ldots \gamma _n}\left( {\bf r}\right) $~\cite{Hess15}, where components read
\begin{align}
	{X}^- _{\gamma _1 \ldots \gamma _n}\left( {\bf r}\right):={\left( -1\right) ^n}\partial
	_{r_{\gamma _1}} \ldots \partial _{r_{\gamma _n}}\left( \frac 1r\right)\,,  \label{it1vd1}\\
	{X}^+ _{\gamma _1 \ldots \gamma _n}\left( {\bf r}\right):= \left( 2n-1\right)!!\mathop {\hspace{0.2cm}{r}_{\gamma _1}{\ldots r}_{\gamma _n}}\limits^{%
		\rule[-.04in]{.01in}{.05in}\rule{.45in}{.01in}\rule[-0.04in]{.01in
		}{.05in}} \,.\label{it1vd2}
\end{align}
Owing to definition  they are connected by the relation
\begin{align}
	\bm{X}_{n}^{+}\left(\mathbf{r} \right) =  r^{(2n+1)}\bm{X}_{n}^{-}\left(\mathbf{r} \right) \,. 
	\label{it1vd2a}
\end{align}
For reasons explained below (see~\ref{subsec:appendixCon}), it is quite natural to call potentials $ \bm{X}_{n}^{-} $  and  $ \bm{X}_{n}^{+} $ {\it irregular} and {\it regular ICT}, respectively~\cite{Felderhof92}. However, throughout this paper we use ICT defined by Eq. (\ref{it1v}), since they are most extensively employed in applications.

Denote by $  d^2 {\hat{r}}$ the magnitude of the surface
element on the unit sphere in global coordinates 
\begin{align}
	\partial \Omega_1 \equiv \partial \Omega \left({\bm 0}; 1 \right) := \lbrace {\mathbf r} \in \mathbb{R}^3: {\hat r}=1 \rbrace \,.
	\label{it1vd2b}
\end{align}
Then for all $ n \geq 0 $ we can write the following {\it orthogonality condition}~\cite{Hess15}
\begin{eqnarray}
	\oint_{\partial \Omega_1}\mathop {{\hat{r}}_{\gamma _1}\ldots {\hat{r}}_{\gamma _n}}\limits^{%
		\rule[-.04in]{.01in}{.05in}\rule{.47in}{.01in}\rule[-0.04in]{.01in
		}{.05in}} \cdot \mathop {{\hat{r}}_{\mu _1}\ldots {\hat{r}}_{\mu _k}}\limits^{%
		\rule[-.04in]{.01in}{.05in}\rule{.49in}{.01in}\rule[-0.04in]{.01in
		}{.05in}} d^2 {\hat{r}} 
	= \frac{4\pi n!}{\left( 2n+1 \right)!!}\delta_{n k} \Delta^{(n)} _{\gamma _1 \ldots \gamma _n, \mu
		_1  \ldots \mu _n}\,, \label{Asur2}
\end{eqnarray}
where in case of sphere $ \hat{\mathbf n} = \hat{\mathbf r} $  (in component notation $ \hat{r}_{\alpha}=r_{\alpha}/r $) is a unit vector identifying a point on $ \partial \Omega_1$ and $ \left. \mathop {{r}_{\gamma _1}\ldots {{r}}_{\gamma _n}}\limits^{%
	\rule[-.04in]{.01in}{.05in}\rule{.48in}{.01in}\rule[-0.04in]{.01in
	}{.05in}} \right| _{\partial \Omega_1} = \mathop {{\hat{r}}_{\gamma _1}\ldots {\hat{r}}_{\gamma _n}}\limits^{%
	\rule[-.04in]{.01in}{.05in}\rule{.47in}{.01in}\rule[-0.04in]{.01in
	}{.05in}}$ is the restriction of the ICT to the unit sphere (\ref{it1vd2b}). In Eq. (\ref{Asur2})
$ \Delta^{(n)} _{\gamma _1 \ldots \gamma _n, \mu_1  \ldots \mu _k}$ stands for the $ 2n $ rank projector, which project any tensor of rank $ n $ into its irreducible part. Particularly, three first projectors are~\cite{Hess15}
\begin{align*}
	\Delta^{(0)} = 1\,, \quad  \Delta^{(1)} = \delta_{\alpha {\beta}} \,, \\
	\Delta^{(2)} _{\alpha \beta, \alpha' \beta'} = \frac{1}{2} \left( \delta_{\alpha \alpha'}\delta_{\beta \beta'}+\delta_{\alpha \beta'}\delta_{\beta \alpha'} \right) - \frac{1}{3}\delta_{\alpha \beta}\delta_{\alpha' \beta'}\,.
\end{align*}

\subsection{Efimov's recurrent formula}\label{subsec:appendixEfim}
In~\ref{subsec:discussionAdvan} we have already mentioned that an important advantage of the ICT over solid harmonics is relatively simple recurcion formulas obtained by rather easy successive differentiation of generating function. Interested readers are encouraged to refer to works, where explicit recurrent expressions for the ICT are comprehensively presented~\cite{Cipriani82,Blaive84,Applequist89,Felderhof92,Ehrentraut98,Hess15,Snider18}. Here let us dwell briefly on a little-known, but simple and powerful approach proposed by Efimov~\cite{Efimov79,Efimov22} that allowed us to calculate ICT. 

To derive the relevant Efimov's recurrent formula first we introduce an auxiliary tensor field
\begin{align}
	\Gamma _{\gamma
		_1 ... \gamma _n}\left( {\bf r}\right):= {\left( -1\right) ^n}{X}^+ _{\gamma _1 \ldots \gamma _n}\left( {\bf r}\right)\,.  \label{Ef1}
\end{align}

Inversion of the radius vector $ \mathbf{r} \rightarrow \mathbf{r} r^{-2} $ leads to an isomorphic  mapping  of the  gradient operator in the inverted space of harmonic functions (with opposite sign), i.e. the vector operator $\widehat{\mathbf{D}}=-r^{-1} {\nabla _{1/r}} r$. Direct calculations yield the expression
\begin{equation}
	\widehat{\mathbf{D}}:=2\mathbf{r}\left( \mathbf{r}\cdot \nabla \right)
	-r^2\nabla +\mathbf{r}\,. \label{Ef2}
\end{equation}
With the help of this operator it may be shown that the following recursion holds 
\[
\Gamma _0=1, \quad \Gamma _{\gamma _n \gamma _1...\gamma _{n-1}}^{\left( n\right) }=\widehat{D}_{\gamma _n} \cdot \Gamma
_{\gamma _1...\gamma _{n-1}}^{\left( n-1\right) } \quad
\mbox{for} \quad n\in \mathbb{N} . 
\]
Taking this relations into account one easily derives the desired Efimov's formula~\cite{Efimov79}
\begin{align}
	\Gamma _{\gamma _1 ...\gamma _n}=\widehat{D}_{\gamma _1}\widehat{D}_{\gamma _2}...\widehat{D}%
	_{\gamma _n}\cdot 1\, . \label{Ef3}
\end{align}

Hence, in the explicit form the first several irreducible Cartesian tensors of ranks $ n $ corresponding to the monopole ($n=0$), dipole ($n=1$), quadrupole ($n=2$) and octupole ($n=3$) potentials read as follows:
\begin{align*}
	\mathop {\hspace{0.2cm}a_0}\limits^{%
		\rule[-.04in]{.01in}{.05in}\rule{.13in}{.01in}\rule[-0.04in]{.01in
		}{.05in}} := a_0  \,, 
	\mathop {\hspace{0.2cm}{r}_{\gamma _1}}\limits^{%
		\rule[-.04in]{.01in}{.05in}\rule{.13in}{.01in}\rule[-0.04in]{.01in
		}{.05in}} := {r}_{\gamma _1}  \,, 
	\mathop {\hspace{0.2cm}{r}_{\gamma _1}{r}_{\gamma_2}}\limits^{%
		\rule[-.04in]{.01in}{.05in}\rule{.3in}{.01in}\rule[-0.04in]{.01in
		}{.05in}} = {r}_{\gamma _1}{r}_{\gamma _2}  - \frac{1}{3}r^2 \delta_{\gamma _1 {\gamma _2}}\,, \\
	\mathop {\hspace{0.2cm}{r}_{\gamma _1}{r}_{\gamma _2}{r}_{\gamma _3}}\limits^{%
		\rule[-.04in]{.01in}{.05in}\rule{.43in}{.01in}\rule[-0.04in]{.01in
		}{.05in}} = {r}_{\gamma _1}{r}_{\gamma _2}{r}_{\gamma _3} - \frac{1}{5}r^2 \left({r}_{\gamma _1}\delta_{\gamma _2 {\gamma _3}}+{r}_{\gamma _2}\delta_{\gamma _1 {\gamma _3}}+{r}_{\gamma _3}\delta_{\gamma _1 {\gamma _2}}\right) \, . \\
\end{align*}


\subsection{Connection with solid harmonics}\label{subsec:appendixCon}

As indicated in Sec.~\ref{sec:intro}, solid harmonics were thoroughly used previously to investigate many-sink effects during diffusion-influenced reactions. So, it is expedient to point to an intimate relationship between them and the ICT~\cite{Stone75,Normand82,Dennis04,Hess15}.

Recall that {\it real solid harmonics} $ \psi_{nm}\left(r, \theta, \phi \right) $, where the integer $ m $ such that $- n \leq m \leq n $ and $n=\overline{0,\infty }$, are harmonic functions in spherical coordinates $ \lbrace O; r, \theta, \phi \rbrace $~\cite{Tikhonov63}. For brevity we shall denote them as $ \psi_{n}\left(r, \theta, \phi \right) $.
Further, according to their behavior at the origin $ O $, two kinds of solid harmonics are distinguished:

(1)  {\it irregular} at $ O $ solid harmonics $ {\psi}_{n}^{-}\left(r, \theta, \phi \right) $ has a pole of order $n+1$ singularity at $ O $;

(2) {\it regular} at $ O $ solid harmonics $ {\psi}_{n}^{+}\left(r, \theta, \phi \right) $ are real-valued harmonic homogeneous polynomials of order $ n $.

Hence, irregular solid harmonics $ {\psi}_{n}^{-} $ automatically satisfies regularity condition at infinity (\ref{di3}) and, therefore, they are regular at infinity. 

Clearly, irregular $ \bm{X}_{n}^{-} $ (\ref{it1vd1}) and regular  $ \bm{X}_{n}^{+} $ (\ref{it1vd2}) ICT correspond to irregular and regular solid harmonics $ {\psi}_{n}^{-} $ and $ {\psi}_{n}^{+} $, respectively.

Represent tensors $ \bm{X}_{n}^{\pm}\left( {\bf r}\right) $ in the form most closely resemble classical irregular $ {\psi}_{n}^{-}\left(r, \theta, \phi \right) $, regular $ {\psi}_{n}^{+}\left(r, \theta, \phi \right) $ solid and real spherical $ {Y}_{n}\left(\theta, \phi \right) $ harmonics:
\begin{align}
	{X}^- _{\gamma _1 \ldots \gamma _n}\left( {\bf r}\right)=r^{-\left(n+1\right)} Y_{{\gamma _1} \ldots {\gamma _n}}\left({\hat{\mathbf r}} \right) \, , \label{SH1} \\
	{X}^+ _{\gamma _1 \ldots \gamma _n}\left( {\bf r}\right)=r^n Y_{{\gamma _1} \ldots {\gamma _n}}\left({\hat{\mathbf r}} \right) \,, \label{SH2}\\
	Y_{{\gamma _1} \ldots {\gamma _n}}\left({\hat{\mathbf r}} \right):=\left( 2n-1\right)!!\mathop {{\hat{r}}_{\gamma _1}\ldots {\hat{r}}_{\gamma _n}}\limits^{%
		\rule[-.04in]{.01in}{.05in}\rule{.45in}{.01in}\rule[-0.04in]{.01in
		}{.05in}}\,. \label{SH3}
\end{align}
It is apparent that the ICT  $ Y_{{\gamma _1} \ldots {\gamma _n}}\left({\hat{\mathbf r}} \right) $ like spherical harmonics depend on the direction of vector $ \mathbf{r} $ only~\cite{Hess15}.

As noted above, the ICT similar to solid harmonics form a basis in the representation space of the group $SO\left( 3\right)$, therefore, there exists a non-singular linear transformation and its inverse in space $\mathbb{R}^{2n+1}$ which map these basises into each other~\cite{Courant53}. Using known {\it Sylvester's theorem}~\cite{Courant53,Dennis04} one can find the desired connection between tensors $ \bm{X}_{n}^{\pm} $ and classical real solid harmonics $ {\psi}_{n}^{\pm} $ explicitly.  We emphasize here that original Sylvester's proof is rather sophisticated~\cite{Courant53}, therefore, interested readers are encouraged to refer to a simple proof of Sylvester's theorem proposed by Backus in Ref.~\cite{Backus70}. 

Thus, it can be shown that Sylvester's theorem implies
\begin{eqnarray}
	\frac{1}{r^{n+1}}Y_{n}\left(\theta, \phi \right) = \sum_{m=0}^{n} \sum_{l=0}^{m}  c_{lm}  \partial_{r_1}^{l} \partial_{r_2}^{m-l} \partial_{r_3}^{n-m} \left( \frac{1}{r} \right)\,, \label{SH4}\\ 
	\text{where} \hspace{0.6em} Y_{n}\left(\theta, \phi \right) = \sum_{m=0}^{n}  Y_{n}^m\left(\theta, \phi \right) \,, \nonumber \\
	Y_{n}^m\left(\theta, \phi \right) :=\left( a^m_n \cos m\phi +  b^m_n \sin m\phi \right)P^m_n \left(\cos \theta \right)\,  \label{SH5}
\end{eqnarray}
is the real spherical harmonic of degree $ n $ and order $ m $.  In Eq. (\ref{SH5}) $  a^m_n $, $ b^m_n $ and $ c_{lm} $ are some known real coefficients and $ P^m_n \left(\cos \theta \right) $ are associated Legendre functions~\cite{James68}.

Particularly, for the axially symmetric case ($ m=0 $), taking all derivatives along $ O r_3 $ axis, we have $ r_{\gamma _1} = r_{\gamma _2}= \ldots = r_{\gamma _n} = r_3$ and formula (\ref{SH4}) boils down to well-known relation~\cite{Courant53}
\begin{align}
	\frac 1{r^{n+1}} P_n\left( \cos \theta \right) = \frac{\left( -1\right) ^n}{n!} {\partial^n _{r_3}\left( \frac 1r\right)}\, . \label{SH6}
\end{align}

\bibliography{ICTrefs}


\end{document}